\documentclass[12pt]{amsart}
\usepackage{graphicx}
\usepackage{hyperref}
\textheight 7.7truein
\textwidth 6.4truein
\oddsidemargin +0truein
\evensidemargin +0truein
\usepackage{graphicx}
\usepackage{epsfig}
\usepackage{pstricks}

\newtheorem{thm}{Theorem}[section]

\newtheorem{lemma}[thm]{Lemma}
\newtheorem{conj}[thm]{Conjecture}
\newtheorem{cor}[thm]{Corollary}
\newtheorem{remark}[thm]{Remark}
\newtheorem{example}[thm]{Example}

\usepackage{amsmath}
\usepackage{amsxtra}
\usepackage{amscd}
\usepackage{amsthm}
\usepackage{amsfonts}
\usepackage{amssymb}
\usepackage{eucal}
\newcommand{\bmb}{\left( \begin{array}{rr}}
\newcommand{\enm}{\end{array}\right)}

\newcommand{\cQ}{\mathcal Q}
\newcommand{\cT}{\mathcal T}

\newcommand{\C}{{\mathbb C}}
\newcommand{\Z}{{\mathbb Z}}

\newcommand{\bt}{{\mathbf t}}
\newcommand{\bu}{{\mathbf u}}
\newcommand{\bv}{{\mathbf v}}
\newcommand{\bw}{{\mathbf w}}

\newcommand{\bz}{{\mathbf z}}

\newcommand{\al}{{\alpha}}

\numberwithin{equation}{section}

\begin{document}

\title[Arctic curves of the 20V model on a triangle]{Arctic curves of the 20V model on a triangle}
\author{Philippe Di Francesco} 
\address{$\!\!\!\!\!\!\!\!\!\!\!\!$ Department of Mathematics, University of Illinois, Urbana, IL 61821, U.S.A. 
and 
\break Institut de Physique Th\'eorique, Universit\'e Paris Saclay, 
CEA, CNRS, F-91191 Gif-sur-Yvette, FRANCE\hfill
\break  e-mail: philippe@illinois.edu
}

\begin{abstract}
We apply the Tangent Method of Colomo and Sportiello to predict the arctic curves of the Twenty Vertex model with specific domain wall boundary conditions on a triangle, in the Disordered phase, leading to a phase diagram with six types of frozen phases and one liquid one. The result relies on a relation to the Six Vertex model with domain wall boundary conditions and suitable weights, as a consequence of integrability. We also perform the exact refined enumeration of configurations.
\end{abstract}

\maketitle
\date{\today}
\tableofcontents

\section{Introduction}

%
%

Two-dimensional integrable lattice models such as the Six Vertex (6V) model have a long history first rooted in the physics of spin systems with local interaction, for which exact bulk thermodynamic properties were derived \cite{Lieb}, including continuum descriptions via Coulomb Gas \cite{Nienhuis84} or Conformal Field Theory \cite{DSZ87}. More recently, these models also entered the realm of combinatorics (by considering domains with ``domain-wall" boundary conditions (DWBC)
\cite{DWBC}  and best illustrated by the correspondence between 6V-DWBC and Alternating Sign Matrices (ASM) \cite{kuperberg1996another}), probability theory
(by interpreting the configurations in terms of particle trajectories, see e.g. \cite{BCG16}), and algebraic geometry (by interpreting partition functions as K-theoretic characters of certain varieties, see e.g. \cite{GZJ}).

It was observed that in the presence of DWBC, the models behave quite differently and may display interesting scaling behavior, such as the arctic phenomenon. The latter is the emergence of sharp phase separations between ordered regions (crystal-like, near the boundaries of the domain) and disordered (liquid) ones. 

Such a phenomenon had been already observed in ``free fermion" tiling or dimer models, where typically tiles/dimers choose a preferred crystalline orientation near boundaries while they tend to be disordered away from them. This was  first observed in the uniform domino tilings of the Aztec diamond \cite{JPS}, giving rise to an arctic circle, and a general theory was developed for dimers \cite{KO2,KOS}. The free fermion character of these models can be visualized in their formulation in terms of non-intersecting lattice paths, i.e. families of paths with fixed ends, and sharing no vertex (i.e. avoiding each-other), and can consequently be expressed in terms of free lattice fermions. A manifestation of the free fermion character of these models is that their arctic curves are always analytic.

The 6V model in its disordered phase, while including a free fermion case, is generically a model of {\it interacting} fermions: it admits an ``osculating path" description, in which paths are non-intersecting, but are allowed to interact by ``kissing" i.e. sharing a vertex at which they bounce against each-other. 

The 6V model on a square domain with DWBC exhibits an arctic phenomenon in its disordered phase, which was predicted via non-rigorous methods\cite{CP2010,CNP}, the latest of which being the Tangent Method introduced by Colomo and Sportiello \cite{COSPO}. The new feature arising from these studies is that the arctic curves are generically {\it no longer analytic}, but rather {\it piecewise analytic}. For instance, the arctic curve for large Alternating Sign Matrices (uniformly weighted 6V-DWBC) is made of four pieces of different ellipses as predicted in \cite{CP2010} and later proved in \cite{Aggar}. 

The Tangent Method was validated in a number of cases, mostly in free fermion situations \cite{CPS,DFLAP,DFGUI,DFG2,DFG3,CorKeat}. Beyond free fermions and the case of the 6V-DWBC model (see also \cite{DF21V} for the case of U-turn reflective boundaries), the method was applied to another model of osculating paths: the Twenty Vertex (20V) model which is the triangular lattice version of the 6V model \cite{Kel,DG19,BDFG,DF20V,DF21V}.
In \cite{DG19}, four possible variations around DWBC were considered for the 20V model, denoted DWBC1,2,3,4.
It turns out that for DWBC1,2 (on a square domain) and DWBC3 (on a quadrangular domain) the (refined) enumeration of configurations can be achieved exactly in terms of the (possibly U-turn) 6V-DWBC model. Moreover in these two cases the total number of configurations matches the number of domino tilings of Aztec-like domains (see \cite{DG19}
and \cite{DF20V}) and there is an intriguing correspondence between arctic curves of both tilings and Vertex models.

In the present paper, we define new boundary conditions on a triangular domain for the 20V model, and we show that these give rise to an arctic phenomenon.
After performing an exact (refined) enumeration of the configurations,
we derive exact Tangent Method predictions for the (outer) arctic curves for arbitrary integrable Boltzmann weights of the disordered phase, by relating the model to the 6V-DWBC model, in the spirit of \cite{BDFG,DF21V}. 

The paper is organized as follows.
In Sect. \ref{defsec} we recall the definition of the 20V model, which is an ice-type model on the triangular lattice, and its integrable weights, and define a triangular domain and specific domain-wall boundary conditions for the model,
which we coin 20V${}_3$.
We also show simple transformations that allow to immediately obtain some other similar boundary conditions which we call 20V${}_1$ and 20V${}_2$.
In Sect. \ref{enumsec}, using integrability of the weights, we compute the fully inhomogeneous partition function of the 20V${}_3$ model in terms of the inhomogeneous partition function of the Six Vertex (6V) model, leading in particular to the (refined) enumeration of the configurations of the 20V${}_3$ model.
Sect.\ref{arcticsec} is devoted to the computation of arctic curves.  After recalling the principle of the Tangent Method, we give asymptotic estimates of the one-point function and path partition function determining the arctic curve: the main results obtained by applying the Tangent Method are Theorems \ref{NEbranchthm}, \ref{SEbranchthm}, and \ref{NWbranchthm}, respectively for three distinct portions of the 
outer arctic curve of the model, for arbitrary values of the integrable weights in the Disordered phase. We also briefly discuss the possible hidden structure for the inner part of the arctic curve, not predicted by our method, except in the free fermion case, where the arctic curve is expected to be analytic, i.e. all the branches are part of the same analytic curve.
In Section \ref{exsec} we present examples of arctic curves, successively in the uniformly weighted case for which the NE portion of arctic curve is part of an algebraic curve of degree 10, in the free-fermion case (where the Boltzmann weights are expressed in  terms of free-fermion 6V weights), and finally in the generic case. 
We gather a few concluding remarks in Sect. \ref{discsec}.

\noindent{\bf Acknowledgments.} We thank E. Guitter, R. Kenyon, and P. Ruelle for useful discussions. This work is partially supported by the Morris and Gertrude Fine endowment, and the NSF RTG Grant DMS19-37241.

\section{Definition of the model}\label{defsec}

\subsection{20V model and integrable weights} 

\begin{figure}
\begin{center}
\includegraphics[width=13cm]{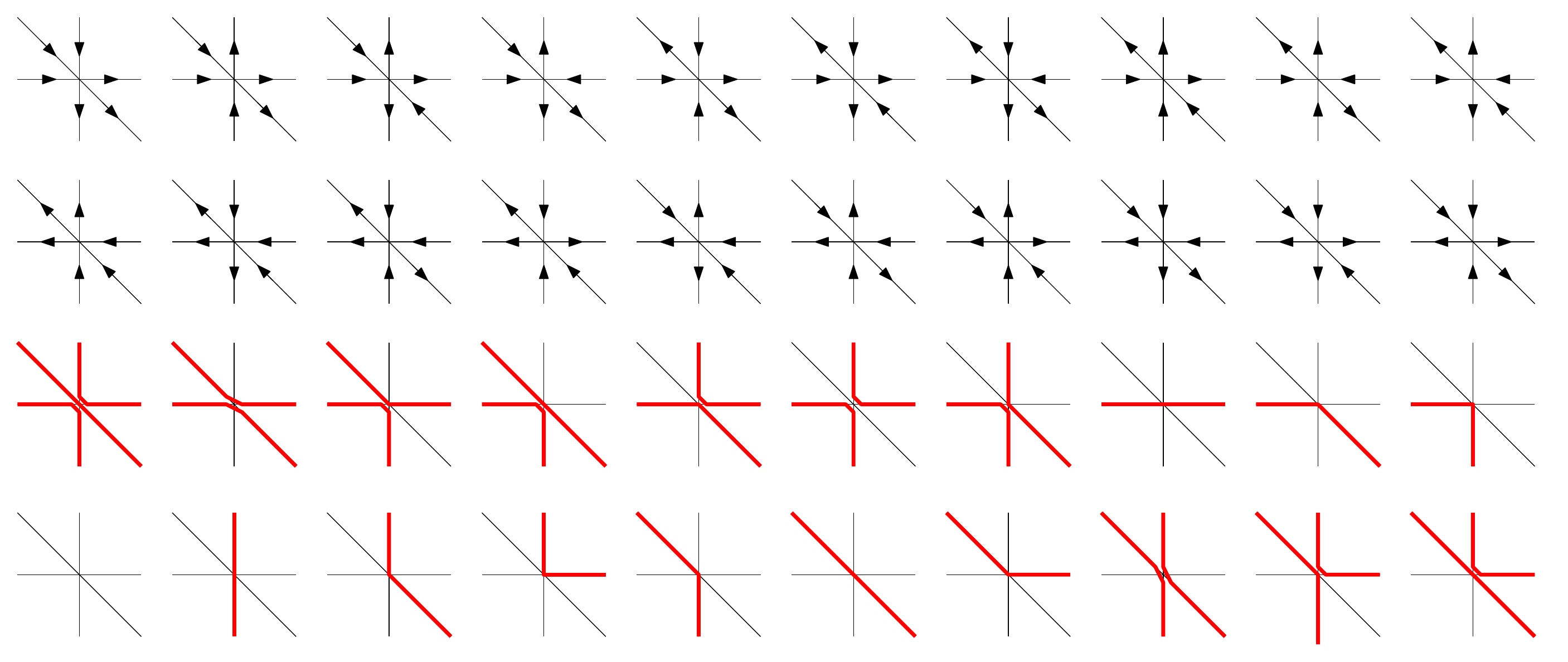}
\end{center}
\caption{\small Top two rows: the twenty vertex configurations subject to the ice rule. Bottom tweo rows: the bijection to osculating Schr\"oder paths.}
\label{fig:twentyV}
\end{figure}

\begin{figure}
\begin{center}
\includegraphics[width=9cm]{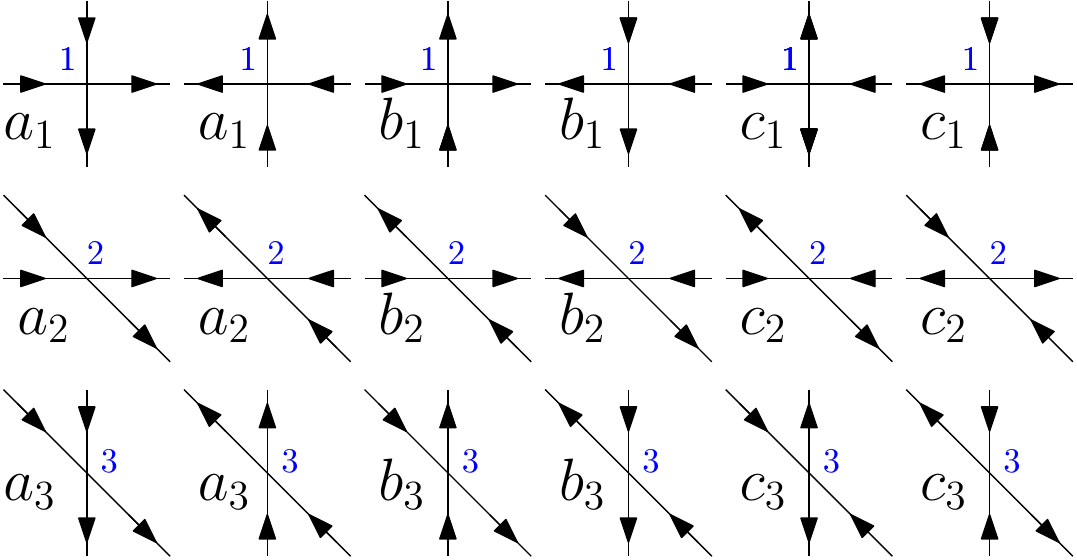}
\end{center}
\caption{\small The weights of the three 6V models on the three sub-lattices $1$, $2$ and $3$ of the Kagome lattice. }
\label{fig:KagomeWeights}
\end{figure}

The 20 Vertex (20V) model is the triangular lattice version of the square ``ice" model. Its configurations are choices of orientations of the edges of the triangular lattice, with the constraint that there are exactly three incoming and three outgoing edges adjacent to each vertex (``ice rule"). This gives rise to the $20={6 \choose 3}$ possible vertex environments depicted in Fig. \ref{fig:twentyV} (top two rows): here and in the following, we represent the triangular lattice with vertices in $\Z^2$ for convenience. A standard bijection allows to reformulate 20V configurations in terms of osculating Schr\"oder paths, namely paths on $\Z^2$ with horizontal, diagonal, vertical steps $(1,0)$, $(1,-1)$, $(0,-1)$, which are non-intersecting but are allowed to have contact (``kissing") points (see Fig. \ref{fig:twentyV}, two bottom rows). The vertices of the triangular lattice are at the intersection of three (horizontal, diagonal vertical)  lines. We may resolve these intersection by slightly moving up all diagonal lines. The resulting lattice is the Kagome lattice, with three
times as many vertices, but simple intersections (either horizontal-vertical, horizontal-diagonal or diagonal-vertical),
giving rise to three natural square sublattices which we label 1,2,3. 

\begin{figure}
\begin{center}
\includegraphics[width=13cm]{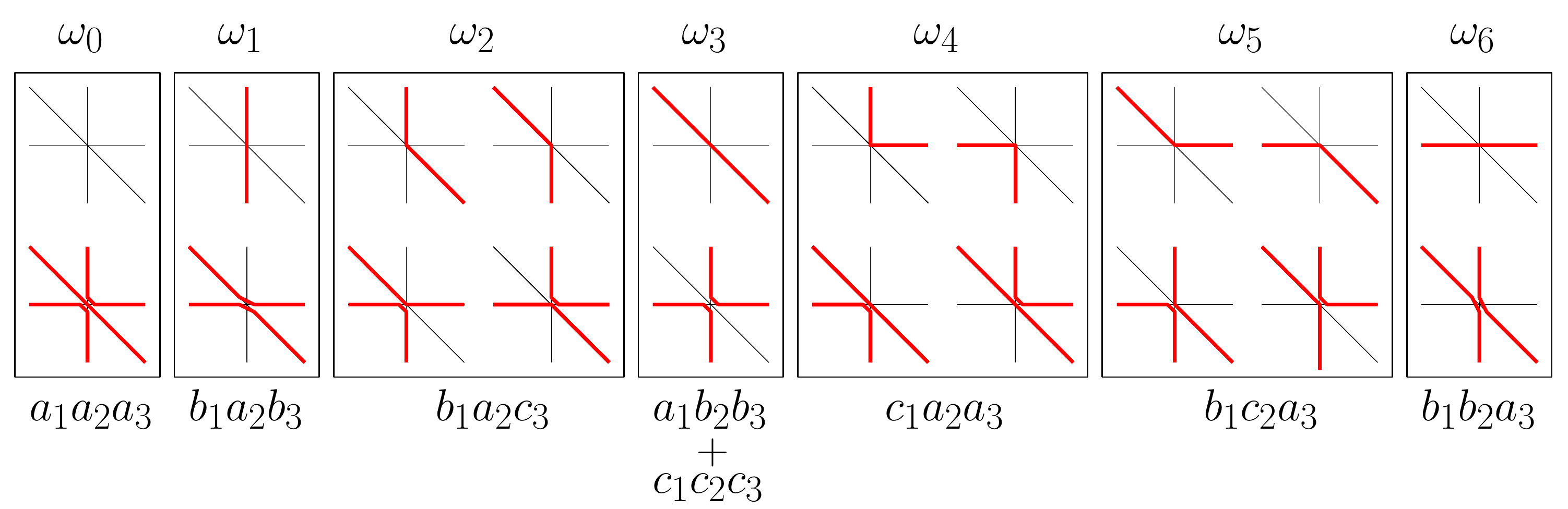}
\end{center}
\caption{\small The seven classes of vertices of the 20V model (in osculating Schr\"oder path formulation), and their corresponding weights $\omega_i$, $i=0,1,2,...,6$.}
\label{fig:weight20v}
\end{figure}

In \cite{Kel,Baxter}, Boltzmann weights for the 20V model are constructed in terms of weights of the Kagome lattice, themselves decomposing into three sets of 6V weights
for the vertices of type 1,2,3 respectively. Indeed as shown in \cite{Kel,Baxter,BDFG}, configurations of the 20V model may be obtained by considering oriented edges of the Kagome lattice satisfying the ice rule at all the vertices (of type 1,2,3). The correspondence 
is 1 to possibly 2. The weights of the 20V model are defined as the sum over the three inner arrow configurations that satisfy the three ice rules at the vertices of type 1,2,3, of the product of the three 6V Boltzmann weights of vertices 1,2,3.
For illustration, we have represented in Fig.~\ref{fig:KagomeWeights} the three types of 6V configurations together with their Boltzmann weights $(a_i,b_i,c_i)$ for i=1,2,3. Integrable weights for the 20-V model were obtained \cite{Kel,Baxter} by further
requiring that the expression for the weights is independent of the resolution of the triple intersections. For instance we could have moved down slightly all diagonal lines, giving rise to different relations for the 20V weights in terms of 6V weights. Solving these algebraic equations leads to a 4-parameter (projective) family of integrable Boltzmann weights.
The first condition is that the three 6V models must share the same quantum parameter $q$. Moreover, attaching spectral parameters $z,t,w$ to horizontal, diagonal, vertical lines respectively, the integrable 6V weights can be written as
\begin{equation} \label{3w6v}
\small{
\begin{matrix}
a_1=\al_1(z-w), \hfill &b_1=\al_1(q^{-2}z -q^2 w),\hfill &c_1=\al_1(q^2-q^{-2}) \sqrt{z w} \hfill\\
a_2=\al_2(q z-q^{-1}t), \hfill &b_2=\al_2(q^{-1}z -q t),\hfill &c_2=\al_2(q^2-q^{-2}) \sqrt{z t} \hfill\\
a_3=\al_3(q t-q^{-1}w), \hfill &b_3=\al_3(q^{-1}t -q w),\hfill &c_3=\al_3(q^2-q^{-2}) \sqrt{t w}\hfill
\end{matrix}}
\end{equation}
where $\alpha_i$ are constant normalization factors.
These finally lead to the following expressions for the 20V integrable weights $\omega_i\equiv \omega_i[z,t,w]$
displayed in Fig.~\ref{fig:weight20v}:
\begin{eqnarray}\label{weights20V}
\omega_0&=&\nu_0\, (z-w)(qz-q^{-1}t)(qt-q^{-1}w)\nonumber \\
\omega_1&=&\nu_0\, (q^{-2}z-q^2w)(qz-q^{-1} t)(q^{-1}t-qw) \nonumber \\
\omega_2&=&\nu_0\, (q^{-2}z-q^2w)(qz-q^{-1}t)(q^2-q^{-2})\sqrt{tw}\nonumber \\
\omega_3&=&\nu_0\,  z t w (q^2-q^{-2})^3+\nu_0\, (z-w)(q^{-1}z -q t)(q^{-1}t-q w) \nonumber \\
\omega_4&=&\nu_0\, (q^2-q^{-2})\sqrt{zw}(qz-q^{-1}t)(qt-q^{-1}w)\nonumber \\
\omega_5&=&\nu_0\, (q^{-2}z-q^2w)(q^2-q^{-2})\sqrt{zt}(qt-q^{-1}w)\nonumber \\
\omega_6&=&\nu_0\, (q^{-2}z-q^2w)(q^{-1}z-q t)(qt-q^{-1}w)
\end{eqnarray}
where $\nu_0=\al_1\al_2\al_3$.

By construction the above weights allow for freely moving around lines across intersections, a key property which we shall use extensively in the following to simplify the model.

With the following parametrization of the quantum and spectral parameters
\begin{equation}\label{param}
q={\rm e}^{{\rm i}\, \eta}\ , \quad z={\rm e}^{{\rm i}\,(\eta+\lambda)}\ , \quad w={\rm e}^{-{\rm i}\, (\eta+\lambda)}\ , \quad t={\rm e}^{{\rm i}\, \mu}\ , \qquad (\eta, \lambda, \mu \in \C),
\end{equation}
the Boltzmann weights of the three 6V models on the sublattices $1,2,3$ with horizontal, diagonal, vertical  parameters $z,t,w$ respectively read:
\begin{eqnarray}
&&a_1=\beta_1\sin(\lambda+\eta)\ , \quad b_1= \beta_1\sin(\lambda-\eta)\ , \quad c_1=\beta_1\sin(2 \eta)\nonumber \\
&&a_2=\beta_2\sin({\scriptstyle \frac{\lambda+3\eta-\mu}{2}})\ , \quad \! b_2= \beta_2\sin({\scriptstyle \frac{\lambda-\eta-\mu}{2}})\ , \quad c_2=\beta_2\sin(2 \eta)\nonumber \\
&&a_3=\beta_3\sin({\scriptstyle\frac{\lambda+3\eta+\mu}{2}})\ , \quad \! b_3= \beta_3\sin({\scriptstyle \frac{\lambda-\eta+\mu}{2}})\ , \quad c_3=\beta_3\sin(2 \eta) .
\label{eq:abc}
\end{eqnarray}
where $\beta_1=\sqrt{z w}\,\al_1$, $\beta_2=\sqrt{z t}\,\al_2$ and $\beta_3=\sqrt{t w}\,\al_3$ are assumed to be positive numbers.

The corresponding 20V model weights read \cite{DFGUI,Kel}:
\begin{eqnarray}
\omega_0&=&\nu\,\sin(\lambda+\eta)\sin\left({\scriptstyle \frac{\lambda+3\eta-\mu}{2}}\right)\sin\left({\scriptstyle \frac{\lambda+3\eta+\mu}{2}}\right)\nonumber \\
\omega_1&=&\nu\,\sin(\lambda-\eta)\sin\left({\scriptstyle \frac{\lambda+3\eta-\mu}{2}}\right)\sin\left({\scriptstyle \frac{\lambda-\eta+\mu}{2}}\right)\nonumber \\
\omega_2&=&\nu\,\sin(2\eta)\sin(\lambda-\eta)\sin\left({\scriptstyle \frac{\lambda+3\eta-\mu}{2}}\right)\nonumber \\
\omega_3&=&\nu\,\sin(2\eta)^3+\nu\sin(\lambda+\eta)\sin\left({\scriptstyle \frac{\lambda-\eta+\mu}{2}}\right)\sin\left({\scriptstyle \frac{\lambda-\eta-\mu}{2}}\right)\nonumber \\
\omega_4&=&\nu\,\sin(2\eta)\sin\left({\scriptstyle \frac{\lambda+3\eta+\mu}{2}}\right)\sin\left({\scriptstyle \frac{\lambda+3\eta-\mu}{2}}\right)\nonumber \\
\omega_5&=&\nu\,\sin(2\eta)\sin(\lambda-\eta)\sin\left({\scriptstyle \frac{\lambda+3\eta+\mu}{2}}\right)\nonumber \\
\omega_6&=&\nu\,\sin(\lambda-\eta)\sin\left({\scriptstyle \frac{\lambda+3\eta+\mu}{2}}\right)\sin\left({\scriptstyle \frac{\lambda-\eta-\mu}{2}}\right)\ ,
\label{eq:explweights}
\end{eqnarray}
with $\nu=\beta_1\beta_2\beta_3$.
We will finally restrict to the so-called Disordered Phase of the model, corresponding to real values of the parameters
$\eta$, $\lambda$ and $\mu$ and to a range of these parameters ensuring that all $\omega$'s are positive:
$$ 0<\eta<\lambda,\quad \eta-\lambda<\mu<\lambda-\eta,\quad \lambda+\eta<\pi, \quad \eta<\frac{\pi}{2} $$
Note as a consequence that $\lambda+3\eta>\mu$. 

Note also the existence of a ``combinatorial point" where the weights $\omega_i$ are uniform and all equal to $1$: 
\begin{equation}\label{combipoint20v}\eta=\frac{\pi}{8},\qquad \lambda=5\eta=\frac{5\pi}{8},\qquad \mu=0, \qquad  \nu=\beta_1\beta_2\beta_3=\sqrt{2} . \end{equation}

\subsection{The 20V-DWBC3 model on the triangle $\cT_m$}

\begin{figure}
\begin{center}
\includegraphics[width=12cm]{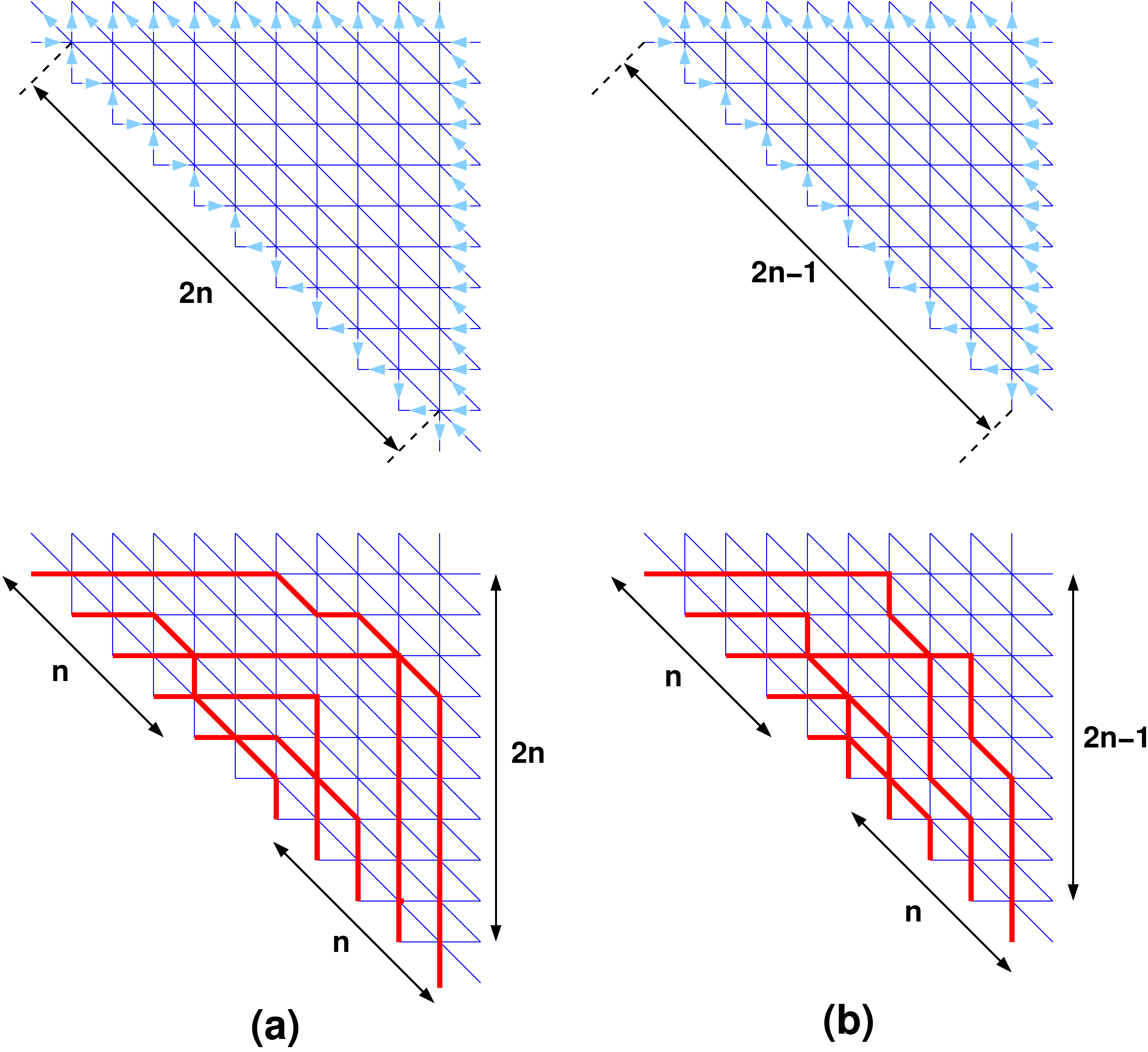}
\end{center}
\caption{\small The 20V model on the triangle $\cT_m$  with DWBC3 for even size $m=2n$ (a) and odd size $m=2n-1$ (b). 
The second row displays a typical configuration for each case, in the osculating Schr\"oder path formulation. }
\label{fig:triangledwbc3}
\end{figure}

We consider the partition function $Z_m^{20V_3}[\bz,\bt,\bw]$ of the fully inhomogeneous 20V model on a triangular domain $\cT_m$ of size $m$ with Domain-Wall type boundary conditions as in Fig. \ref{fig:triangledwbc3}. 
In terms of edge orientations, all edges on the North boundary point out of the domain, all edges on the east boundary point in, and the diagonal boundary edges point in for the top half and out for the bottom half.
The cases of even and odd size are different and displayed in Fig. \ref{fig:triangledwbc3} (a) and (b) respectively: in the odd case, the center vertex of the diagonal boundary has one ingoing and one outgoing external edge. 
The horizontal, diagonal, vertical lines respectively carry spectral parameters
$\bz=z_1,z_2,...,z_m$ (from top to bottom), $\bt=t_1,t_2,...,t_m$ (from top to bottom) and $\bw=w_1,w_2,...,w_m$ (from right to left): each vertex $v$ is weighted by the weight $\omega(v)$ of 
\eqref{eq:explweights} corresponding 
to its local configuration, with parameters $\lambda,\mu$ corresponding to the three (horizontal, diagonal, vertical) lines meeting at $v$, while the partition function is the sum over the 20V configurations
of the product of their local vertex weights.

From a purely enumerative point of view, if we set all parameters to the combinatorial point values \eqref{combipoint20v}, the partition function $Z_m^{20V_3}[\bz,\bt,\bw]$
reduces to the number $Z_m^{20V_3}$ of configurations of the $20V_3$ model on $\cT_m$. As we shall see in Sect. \ref{refinedsec} below,  $Z_m^{20V_3}[\bz,\bt,\bw]$ also gives 
access to {\it refined numbers} $Z_{m,k}^{20V_3}$ of configurations of the $20V_3$ model on $\cT_m$, conditioned so that the topmost vertex visited by a path in the rightmost vertical line (with spectral parameter $w_1$) is at position $k\in [1,m]$ counted from bottom to top. 

\subsection{Transformations of the 20V model}\label{transfosec}

In this section, we use transformations of the 20V model previously exploited in Refs. \cite{BDFG,DF21V} to map the $20V_3$ configurations on $\cT_m$ 
onto configurations of the 20V model on the same domain but with {\it different} boundary conditions. For convenience, we place the origin at the bottom right vertex of $\cT_m$, so that the extremal vertices of $\cT_m$ have coordinates 
$(0,0),(1-m,m-1),(0,m-1)$.

\begin{figure}
\begin{center}
\includegraphics[width=16cm]{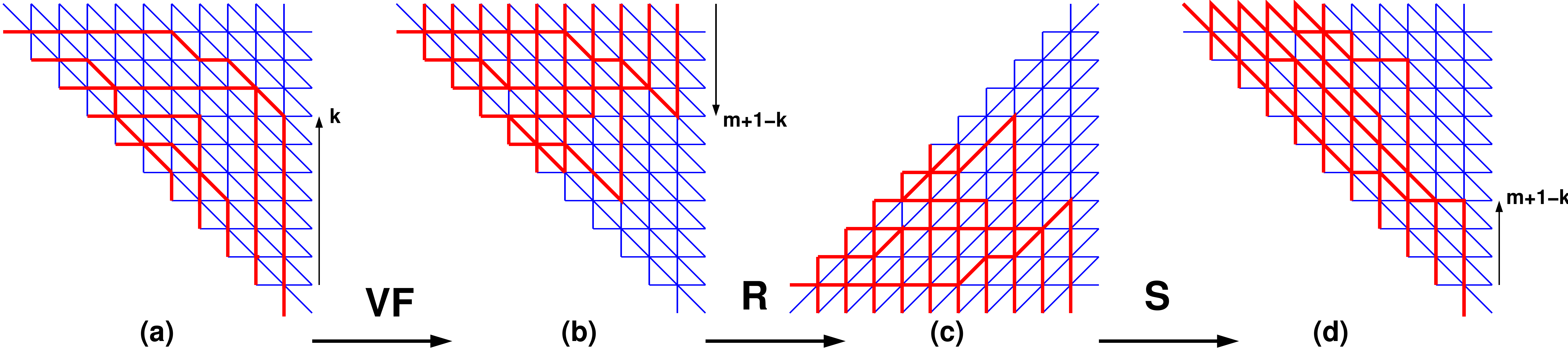}
\end{center}
\caption{\small The bijection between 20V${}_3$ (a) and 20V${}_{2}$ (d) configurations decomposes into three steps: (a)$\to$ (b) Vertical Flip; (b)$\to$(c) Reflection;(c)$\to$(d): Shear. We have indicated the correspondence between the positions $k$ and $m+1-k$ of the vertex first visited by the topmost path in the rightmost column. }
\label{fig:shear}
\end{figure}


We introduce the following sequence of transformations of the path configurations of the 20V${}_3$ model, as illustrated in Fig. \ref{fig:shear}. Assume that $k$ is the position of the topmost vertex visited by a path in the rightmost vertical line.
\begin{itemize}
\item{Vertical Flip {\bf VF}:} we ``flip" all vertical edges, i.e. each vertical path edge is erased, and each empty vertical edge receives a path edge.
In particular $k$ becomes the bottom-most position of a vertex visited by a path in the rightmost vertical line.
\item{Reflection {\bf R}:} we reflect the picture w.r.t. a horizontal line, thus sending diagonal lines to anti-diagonal ones. Now $m+1-k$ is the new 
position of the topmost vertex belonging to a vertical step of path in the rightmost vertical line.
\item{Shear {\bf S}:} we apply a shear transformation in order to recover the original domain $\cT_m$, namely the map $(x,y)\mapsto (x,y-x)$. Under this transformation, anti-diagonal lines become horizontal, horizontal lines become diagonal, while $m+1-k$ remains the new position of the topmost vertex visited by a  path in the rightmost vertical line.
\end{itemize}
By inspection we find that the seven classes of 20V local vertex environments depicted in Fig. \ref{fig:weight20v} are mapped bijectively under $\bf S\circ R\circ VF$ as follows:
\begin{equation}\label{netose}
{\bf S\circ R\circ VF}:\ \  (\omega_0,\omega_1,\omega_2,\omega_3,\omega_4,\omega_6)\mapsto (\omega_1,\omega_0,\omega_4,\omega_3,\omega_2,\omega_5,\omega_6) . 
\end{equation}
For further use, we denote by $\pi$ the permutation $\pi=(01)(24)$ corresponding to this mapping of configurations.

\begin{figure}
\begin{center}
\includegraphics[width=12cm]{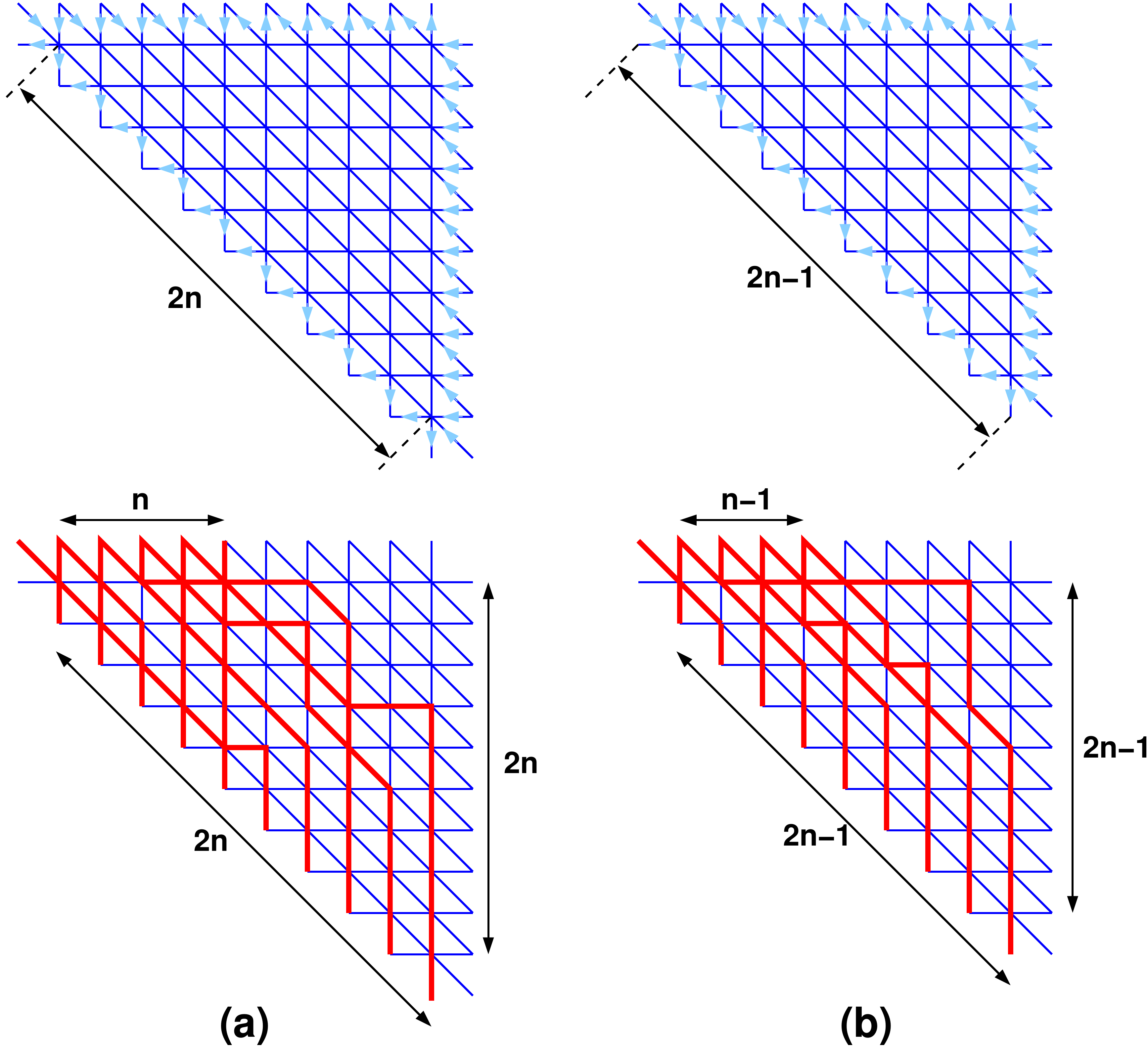}
\end{center}
\caption{\small The 20V${}_2$ model on  the triangle $\cT_m$ for even size $m=2n$ (a) and odd size $m=2n-1$ (b). 
The first row indicates the new boundary conditions in the original formulation.
The second row displays a typical configuration for each case, in the osculating Schr\"oder path formulation. }
\label{fig:triangledwbcm}
\end{figure}

Examining the effect on the boundary conditions, 
it is easy to see that 20V${}_3$ configurations are mapped to 20V configurations on the same triangle $\cT_m$  but with new boundary conditions depicted in Fig. \ref{fig:triangledwbcm} (a-b) for odd and even size $m$: we call this model 20V${}_2$.
Each step being invertible, $\bf S\circ R\circ VF$ is clearly a bijection. 

Denoting by $Z_m^{20V_2}[\bz,\bt,\bw]$ the partition function for the inhomogeneous 20V${}_2$ model, with the same labeling of spectral parameters as for the 20V${}_3$
model, and following the horizontal, diagonal and vertical lines throughout the transformations, we deduce the following relation.

\begin{thm}
The partition functions of the fully inhomogeneous 20V${}_3$ and 20V${}_2$ models are related via:
$$ Z_m^{20V_3}[{\scriptstyle z_1,z_2,...,z_m,t_1,t_2,...,t_m,w_1,w_2,...,w_m}]= Z_m^{20V_2}[{\scriptstyle t_m,t_{m-1},...,t_1,z_m,z_{m-1},...,z_1,w_1,w_2,...,w_m}]\Big\vert_{\omega_i\to\omega_{\pi(i)}} .$$
\end{thm}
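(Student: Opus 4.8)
The plan is to prove the identity by the explicit, weight-tracking bijection $\Phi={\bf S\circ R\circ VF}$ already introduced, rather than by any algebraic manipulation of the partition functions. Since each of ${\bf VF}$, ${\bf R}$, ${\bf S}$ is invertible, $\Phi$ is a bijection from $20V_3$ configurations on $\cT_m$ to $20V_2$ configurations on $\cT_m$, the image boundary conditions being the ones read off in Fig.~\ref{fig:triangledwbcm}; what remains is to check that $\Phi$ transports Boltzmann weights in exactly the way encoded by the argument relabeling and the substitution $\omega_i\to\omega_{\pi(i)}$ on the right-hand side. Once this local weight-matching is established, summing over configurations turns the bijection into a term-by-term equality of the two partition functions.

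First I would follow the three families of spectral-parameter-bearing lines through the three steps. ${\bf VF}$ only complements the occupation of the vertical edges, leaving every line and its attached spectral parameter in place. The reflection ${\bf R}$ about a horizontal axis keeps horizontal lines horizontal but reverses their top-to-bottom order, sends the diagonal lines (carrying $\bt$) to anti-diagonal ones with reversed order, and preserves the left-to-right order of the vertical lines (carrying $\bw$). The shear ${\bf S}:(x,y)\mapsto(x,y-x)$ then turns anti-diagonal lines into horizontal ones, horizontal lines into diagonal ones, and fixes the vertical lines. Composing, the diagonal family $t_1,\dots,t_m$ becomes the horizontal family in reversed order $t_m,\dots,t_1$, the horizontal family $z_1,\dots,z_m$ becomes the diagonal family in reversed order $z_m,\dots,z_1$, and the vertical family $w_1,\dots,w_m$ is unchanged. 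This is precisely the assignment of arguments $t_m,\dots,t_1,z_m,\dots,z_1,w_1,\dots,w_m$ appearing in $Z_m^{20V_2}$ on the right-hand side (and the same line-following records the $k\mapsto m+1-k$ correspondence of Fig.~\ref{fig:shear}).

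Next I would track how the Boltzmann weight attached to each vertex is transported. Because the weights \eqref{weights20V} were built to be invariant under moving lines across intersections, the geometric steps ${\bf R}$ and ${\bf S}$ reorganize the arrangement without changing the value of the partition function, merely re-presenting it on $\cT_m$ with the relabeled lines found above. The only net change of local environment comes from ${\bf VF}$ combined with the exchange of the horizontal and diagonal roles; inspecting the seven classes of Fig.~\ref{fig:weight20v} shows that a class-$i$ vertex of the $20V_3$ configuration is carried to a class-$\pi(i)$ vertex of its $20V_2$ image, with $\pi=(01)(24)$ as in \eqref{netose}. As $\pi$ is an involution, evaluating $Z_m^{20V_2}$ at the relabeled parameters and then applying $\omega_i\to\omega_{\pi(i)}$ restores to each image vertex exactly the weight of its $20V_3$ preimage, so the two partition functions agree term by term under $\Phi$.

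I expect the main obstacle to be the weight-value bookkeeping through the reflection ${\bf R}$. Reflecting reverses all edge orientations and interchanges the three 6V sublattices carrying the parameter pairs $(z,w)$, $(z,t)$, $(t,w)$, so a priori it could introduce auxiliary transformations of the spectral parameters (inversions, or $q\to q^{-1}$) or orientation signs in the individual 6V weights \eqref{3w6v}. What must be verified, class by class on the explicit weights, is that all such factors cancel and that the composite action on the seven vertex classes is \emph{exactly} the permutation $\pi$ together with the parameter relabeling above, with each weight value preserved. This is precisely the point at which the integrable, free-line-moving structure of \eqref{weights20V} is indispensable, and it is the one step I would check directly rather than treat as automatic.
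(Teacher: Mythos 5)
Your proposal is correct and matches the paper's own argument: the paper likewise proves the theorem by observing that ${\bf S\circ R\circ VF}$ is an invertible map on path configurations, following the horizontal, diagonal and vertical lines through the three steps to obtain the argument relabeling $(t_m,\dots,t_1,z_m,\dots,z_1,w_1,\dots,w_m)$, and checking by inspection that the seven vertex classes are permuted by $\pi=(01)(24)$ as in \eqref{netose}. One small correction: your closing appeal to the integrable ``free-line-moving'' structure is misplaced --- ${\bf R}$ and ${\bf S}$ are exact lattice symmetries and no line is ever moved across an intersection in this proof, so once the class permutation and line relabeling are established the formal substitution $\omega_i\to\omega_{\pi(i)}$ makes the identity purely combinatorial, with nothing left to cancel; integrability enters only later (Sect.~\ref{enumsec}) when the diagonal lines are slid across the grid to reduce to the 6V model.
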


Indeed, the mapping of 20V weights under the transformation $\bf S\circ R\circ VF$ amounts to the simple transformation $\omega_i[z,t,w]\mapsto \omega_{\pi(i)}[t,z,w]$,
as diagonal and horizontal lines are interchanged in the process.

From a purely enumerative point of view, it is clear from Fig. \ref{fig:shear} that $\bf S\circ R\circ VF$ maps (refined) configurations of the 20V${}_3$ model to those
of the 20V${}_2$ model (up to $k\to m+1-k$).
\begin{cor}\label{23cor}
The refined numbers of configurations in the 20V${}_3$ and 20V${}_2$ models are related via:
$$Z_{m,k}^{20V_3}=Z_{m,m+1-k}^{20V_{2}} \qquad (k=1,2,...,m). $$
\end{cor}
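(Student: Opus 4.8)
The plan is to read off the refined statement directly from the bijection $\mathbf{S\circ R\circ VF}$ already established in the text, rather than re-deriving anything from the partition functions. The corollary is the purely enumerative shadow of the Theorem, obtained by specializing all spectral parameters to the combinatorial point \eqref{combipoint20v}, where every $\omega_i=1$ and each partition function collapses to a plain count of configurations. So first I would recall that $\mathbf{S\circ R\circ VF}$ is a genuine bijection between $20V_3$ configurations on $\cT_m$ and $20V_2$ configurations on the same triangle $\cT_m$, a fact asserted in the text (``Each step being invertible, $\mathbf{S\circ R\circ VF}$ is clearly a bijection''). Since it is a bijection on the full configuration sets, it restricts to a bijection on any subset carved out by a statistic that it transforms in a controlled way.

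The key point is to track the refinement statistic through the three steps. By definition, $k$ is the position (counted from bottom to top) of the topmost vertex visited by a path in the rightmost vertical line. Following the three bullet points in the text: the Vertical Flip $\mathbf{VF}$ sends this topmost visited vertex to the \emph{bottom}-most visited vertex in the rightmost column; the Reflection $\mathbf{R}$ about a horizontal line then turns the bottom-most position $k$ into the top-most position $m+1-k$; and the Shear $\mathbf{S}$ preserves this value since it fixes vertical lines setwise and only reshuffles horizontal and diagonal directions. Hence under $\mathbf{S\circ R\circ VF}$ the refinement index transforms as $k\mapsto m+1-k$, exactly as annotated in Fig.~\ref{fig:shear}.

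Combining these two observations gives the result. The bijection $\mathbf{S\circ R\circ VF}$ maps the set of $20V_3$ configurations with refinement index $k$ bijectively onto the set of $20V_2$ configurations with refinement index $m+1-k$; taking cardinalities at the combinatorial point yields $Z_{m,k}^{20V_3}=Z_{m,m+1-k}^{20V_2}$ for each $k\in\{1,2,\dots,m\}$, which is the claim. One could equivalently deduce it by specializing the Theorem: at the combinatorial point the weight substitution $\omega_i\to\omega_{\pi(i)}$ is trivial (all weights are $1$), and the permutation of spectral parameters is irrelevant to the raw count, so the refined enumerations are forced to agree after the index reversal $k\leftrightarrow m+1-k$ induced by the reordering of the $w_i$.

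I expect no serious obstacle here, as the heavy lifting is done once the bijection and the weight correspondence \eqref{netose} are in place. The only point requiring genuine care is the bookkeeping of the refinement statistic across the three geometric operations—in particular verifying that ``topmost visited vertex'' and ``vertex visited by a vertical step of a path'' describe the same locus so that the $\mathbf{VF}$ and $\mathbf{R}$ steps chain correctly, and confirming that $\mathbf{S}$ genuinely leaves the rightmost vertical line and the index unchanged. This is precisely the content of the annotations in Fig.~\ref{fig:shear}, so the argument amounts to making that figure's labeling explicit.
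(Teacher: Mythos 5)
Your proposal is correct and is essentially the paper's own argument: the paper derives Corollary \ref{23cor} precisely by observing that the bijection $\bf S\circ R\circ VF$ maps refined $20V_3$ configurations to refined $20V_2$ configurations with the statistic transformed as $k\mapsto m+1-k$, exactly as annotated in Fig.~\ref{fig:shear}. Your explicit step-by-step tracking of the statistic through $\bf VF$, $\bf R$, $\bf S$ (topmost visited vertex $\to$ bottom-most $\to$ position $m+1-k$, fixed by the shear since the rightmost vertical line is pointwise preserved) just spells out what the paper leaves to the figure.
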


We may repeat the above with instead of $\bf VF$ a horizontal flip ($\bf HF$), namely erasing all horizontal path edges, 
and promoting all empty horizontal edges to new path edges. The reflection $\bf R$ is now replaced with a reflection $\bf \bar R$ w.r.t. a vertical line and the shear $\bf S$
with a horizontal shear $\bf \bar S$: $(x,y)\mapsto (m+1+x-y,y)$. The net result is the same as applying an extra diagonal reflection $\bf R^*$ w.r.t. a direction perpendicular to that of the diagonal lines
with parameters $t_i$, after the previous transformation $\bf S\circ R\circ VF$, resulting in the relation $\bf \bar S\circ \bar R\circ HF=\bf R^*\circ \bf S\circ R\circ VF$.
We denote by 20V${}_1$ the corresponding model. Its configurations and boundary conditions are simply obtained from those of the 20V${}_2$ model by applying the
diagonal reflection $\bf R^*$, under which the seven weight classes are  mapped as follows:
\begin{equation}\label{setonw}
{\bf  R^*}:\ \  (\omega_0,\omega_1,\omega_2,\omega_3,\omega_4,\omega_6) \mapsto  (\omega_0,\omega_6,\omega_5,\omega_3,\omega_2,\omega_1) .
\end{equation}
Denoting by $\bar \pi$ the corresponding permutation  of labels $\bar \pi=(16)(25)$, the mapping of weights is $\omega_i[z,t,w]\mapsto \omega_{\bar \pi(i)}[w,t,z]$
as horizontal and vertical lines are interchanged. This give the following relation between fully inhomogeneous partition functions:
$$Z_m^{20V_2}[{\scriptstyle z_1,z_2,...,z_m,t_1,t_2,...,t_m,w_1,w_2,...,w_m}]= Z_m^{20V_1}[{\scriptstyle w_1,w_2,...,w_m,t_1,t_2,...,t_m,z_1,z_2,...,z_m}]\Big\vert_{\omega_i\to\omega_{\bar\pi(i)}} .$$
Examples of 20V${}_1$ configurations are easily obtained from those of Fig. \ref{fig:triangledwbcm} by applying the diagonal reflection $\bf R^*$. Finally, denoting by $Z_{m,k}^{20V_{1}}$ the refined number of 20V${}_1$ configurations such that the topmost path leaves the topmost horizontal line at position $k$ counted from the left, then we have
the following.
\begin{cor}\label{12cor}
The refined numbers of configurations in the 20V${}_2$ and 20V${}_1$ models are related via:
$$Z_{m,k}^{20V_1}=Z_{m,k}^{20V_{2}} \qquad (k=1,2,...,m). $$
\end{cor}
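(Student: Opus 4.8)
\emph{The plan is to prove the identity combinatorially, via the diagonal reflection $\bf R^*$ already at hand.} At the combinatorial point \eqref{combipoint20v} all weights equal $1$, so both $Z_{m,k}^{20V_1}$ and $Z_{m,k}^{20V_2}$ are honest counts of configurations carrying a prescribed refined statistic; hence the cleanest route is to exhibit a statistic-preserving bijection rather than to manipulate weighted sums. The bijection is precisely the reflection $\bf R^*$ used above to pass from $20V_2$ to $20V_1$. Being a reflection, $\bf R^*$ is an involution and therefore automatically a bijection of configuration sets, and at the combinatorial point it sends each configuration to one of the same (unit) weight, since the induced permutation $\bar\pi=(16)(25)$ merely relabels weight classes among equal weights. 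It then suffices to check that $\bf R^*$ matches the two refined statistics index by index.

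First I would identify the image of the distinguished line. In the $20V_2$ model the refined statistic lives on the rightmost vertical line, which carries spectral parameter $w_1$. The reflection $\bf R^*$ is taken about a direction perpendicular to the diagonal ($t$) lines, i.e. about a line of direction $(1,1)$, so it fixes the diagonal lines as lines and interchanges the horizontal and vertical directions, in accordance with the weight map $\omega_i[z,t,w]\mapsto \omega_{\bar\pi(i)}[w,t,z]$. Under this swap the vertical line $w_1$ becomes a horizontal line still labelled $w_1$; and because vertical lines are ordered from the right whereas horizontal lines are ordered from the top, the rightmost vertical line $w_1$ is sent exactly to the \emph{topmost} horizontal line $w_1$, which is the line on which the $20V_1$ statistic is read.

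Next I would track the position index. Modelling $\bf R^*$ locally as $(x,y)\mapsto (y,x)$, the vertices of the rightmost vertical line, listed from bottom to top (increasing $y$), map to the vertices of the topmost horizontal line listed from left to right (increasing $x$): the order is preserved, so the vertex at position $k$ from the bottom goes to the vertex at position $k$ from the left. This is the crucial orientation check: unlike the reflection $\bf R$ entering $\bf S\circ R\circ VF$, which reversed the ordering and produced the shift $k\mapsto m+1-k$ of Corollary \ref{23cor}, the reflection $\bf R^*$ preserves the ordering and gives $k\mapsto k$. Finally I would confirm that the two phrasings of the statistic agree under the swap, namely that the ``topmost vertex visited by a path'' on the vertical line and the position at which ``the topmost path leaves'' the horizontal line describe the same distinguished boundary vertex once horizontal and vertical roles are exchanged. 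Assembling these observations, $\bf R^*$ restricts to a bijection between $20V_2$ configurations with refined index $k$ and $20V_1$ configurations with refined index $k$, whence $Z_{m,k}^{20V_1}=Z_{m,k}^{20V_2}$.

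The only genuine obstacle is this last bookkeeping: one must verify, by inspecting the local path picture along the boundary line, that the reflection really preserves the order of visited vertices and carries ``topmost visited vertex'' to ``exit point of the topmost path'' with matching index, so that the heuristic $(x,y)\mapsto(y,x)$ is faithful to the actual lattice geometry of $\cT_m$ and to its boundary orientation. Everything else — bijectivity, unit-weight preservation at the combinatorial point, and the identification of the image line — follows immediately from the properties of $\bf R^*$ already recorded.
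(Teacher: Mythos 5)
Your proposal is correct and takes essentially the same route as the paper: there the 20V${}_1$ model is \emph{defined} as the image of the 20V${}_2$ model under the diagonal reflection ${\bf R}^*$, and Corollary \ref{12cor} is stated as an immediate consequence of that weight-preserving bijection, which is precisely the argument you spell out. The one detail you make explicit that the paper leaves implicit — that ${\bf R}^*$, unlike the reflection ${\bf R}$ entering ${\bf S\circ R\circ VF}$, preserves the boundary ordering (position $k$ from the bottom of the rightmost vertical line maps to position $k$ from the left of the topmost horizontal line, so $k\mapsto k$ rather than $k\mapsto m+1-k$) — is checked correctly.
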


\section{Enumerative results}\label{enumsec}
This section is devoted to exact results on the enumeration of the configurations of various 20V models on the triangle $\cT_m$. This includes the refined enumeration of 
configurations according to certain statistics.

\subsection{Inhomogeneous case: Relation between 20V and 6V partition functions}

\begin{figure}
\begin{center}
\includegraphics[width=16cm]{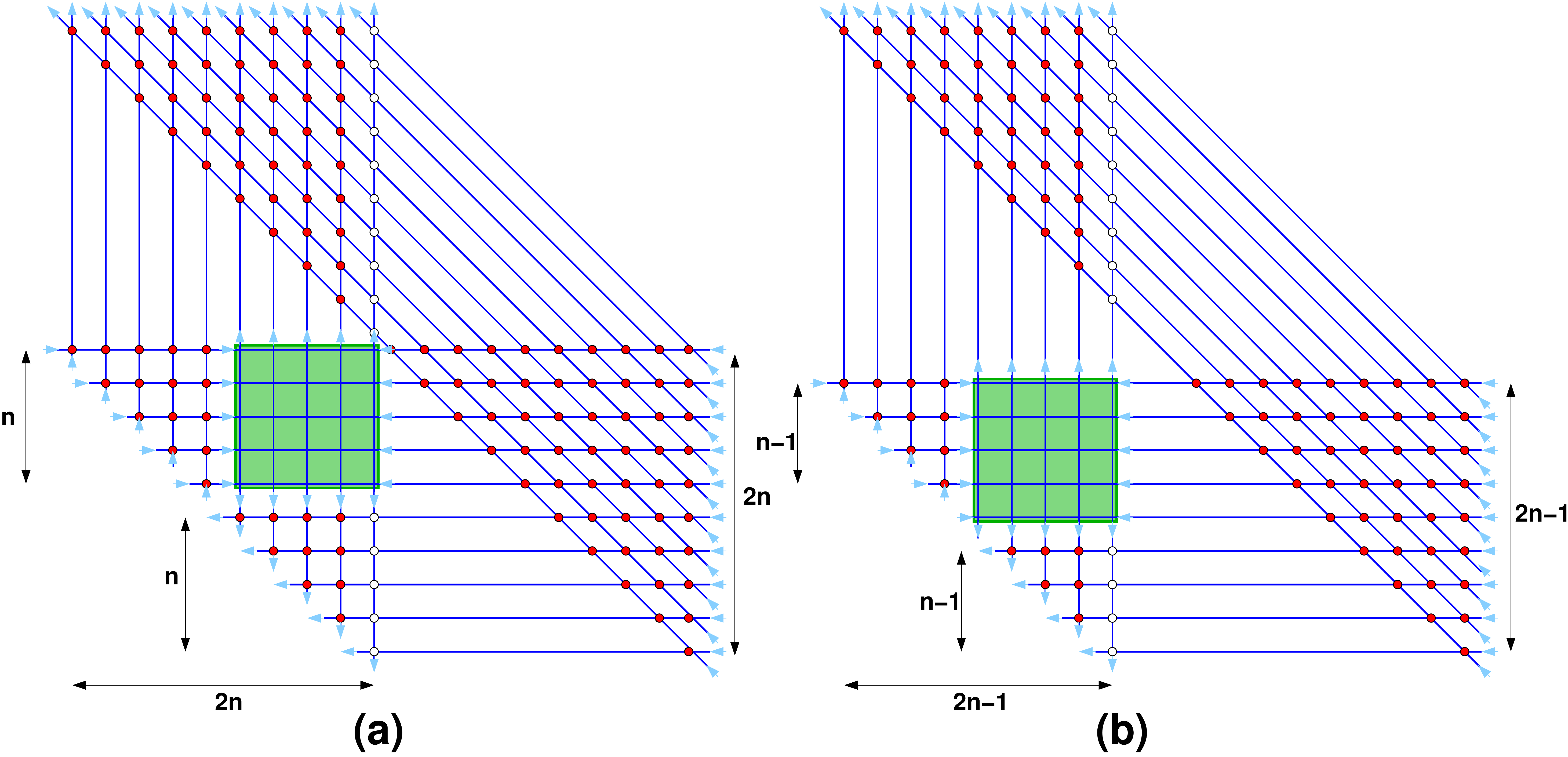}
\end{center}
\caption{\small Transformation of the partition functions $Z_{2n}^{20V_3}[\bz,\bt,\bw]$ (left) and $Z_{2n-1}^{20V_3}[\bz,\bt,\bw]$ (right) by moving up the diagonal lines.
Using the ice rule, one sees that all edge configurations are determined at all the vertices marked by dots, outside of the shaded squares, 
both of which correspond in turn to the partition function $Z_n^{6V_1}[\bz,\bw]$ of the 6V-DWBC model
with the suitable spectral parameters and the Boltzmann weights of the sublattice $1$.}
\label{fig:unravel}
\end{figure}

The partition functions $Z_{m}^{20V_3}[\bz,\bt,\bw]$ and $Z_{m}^{20V_{1,2}}[\bz,\bt,\bw]$ may be expressed simply in terms of the partition function $Z_n^{6V}[\bz,\bw]$ of the 6V model
on a square of size $n=\lfloor \frac{m+1}{2} \rfloor$ with DWBC. As already noted in \cite{DFG20V,BDFG,DF20V}, the integrability of the 20V weights allows to freely move around the spectral lines across intersections 
and to transform the model. 

In the $20V_3$ case, moving the diagonal lines up results in the transformations depicted in Fig. \ref{fig:unravel} for the even and odd size cases respectively.
Once the diagonal lines are moved up, the ice rule allows to propagate the  orientations of the boundary edges to all the edges of the diagonal lines (all oriented upward) and those of the 
horizontal and vertical lines in their domain of intersection. The vertex at the intersection of the $j$-th diagonal line and $i$-th horizontal one is in the configuration $a_2$ and receives the weight 
$a_2(z_i,t_j)$, while each vertex at the intersection of the $j$-th diagonal line and $k$-th vertical one is in the configuration $a_3$ and receives the weight $a_3(t_j,w_k)$.
Finally the vertices at the intersections between the $i$-th horizontal and $j$-th vertical lines not inside the marked squares are all in the configuration $b_1$ and receive weights $b_1(z_i,w_j)$.
The remaining vertices (inside the marked squares) form the $n\times n$ square grid of a 6V model with the weights of the sublattice $1$, 
and with boundary edges oriented horizontally inward and vertically outward, that is exactly
the Domain Wall boundary conditions. We denote by $Z_n^{6V_1}[\bz,\bw]$ the corresponding partition function.
Collecting all weights, the partition functions of the 20V model on the triangle of even and odd sizes read:
\begin{eqnarray}
Z_{2n}^{20V_3}[\bz,\bt,\bw]\!\!&=&\!\!\!\!\! \prod_{1\leq i\leq j\leq 2n} a_2(z_i,t_j) \!\!\prod_{1\leq j\leq 2n+1-k\leq 2n} a_3(t_j,w_k) 
\!\!\prod_{1\leq i\leq 2n+1-j\leq n\atop {{\rm or}\atop n+1\leq i\leq 2n+1-j\leq 2n}} b_1(z_i,w_j) \ Z_n^{6V_1}[\bz,\bw]\nonumber\\
Z_{2n-1}^{20V_3}[\bz,\bt,\bw]&=&\!\!\!\!\!\!\!\! \prod_{1\leq i\leq j\leq 2n-1} a_2(z_i,t_j) \!\!\!\!\prod_{1\leq j\leq 2n-k\leq 2n-1} a_3(t_j,w_k) 
\!\!\prod_{1\leq i\leq 2n-j\leq n-1\atop {{\rm or}\atop n+1\leq i\leq 2n-j\leq 2n-1}} b_1(z_i,w_j) \ Z_n^{6V_1}[\bz,\bw] .\nonumber\\
\label{inhomrela}
\end{eqnarray}



\subsection{Combinatorial point}

Setting all spectral parameters in \eqref{inhomrela} to the uniform values of the combinatorial point \eqref{combipoint20v}, for which 
$(a_1,b_1,c_1)=\frac{\beta_1}{\sqrt{2}}(1,\sqrt{2},1)$ while
$(a_2,b_2,c_2)=\frac{\beta_2}{\sqrt{2}}(\sqrt{2},1,1)$ and $(a_3,b_3,c_3)=\frac{\beta_3}{\sqrt{2}}(\sqrt{2},1,1)$, we deduce the following.

\begin{thm}
The number of configurations $Z_m^{20V_3}$ of the 20V${}_3$ model on the triangle $\cT_m$ is expressed in terms of the partition function 
$Z_n^{6V}$ of the 6V-DWBC model on a square grid of size 
$n=\lfloor \frac{m+1}{2}\rfloor$ with uniform weights $(a,b,c)=(1,\sqrt{2},1)$ as:
$$ Z_{2n}^{20V_3} = 2^{n(n+1)/2} \, Z_n^{6V} ,\qquad Z_{2n-1}^{20V_3}  = 2^{n(n-1)/2} \, Z_n^{6V}  .$$
\end{thm}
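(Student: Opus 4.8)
The plan is to specialize the fully inhomogeneous identity \eqref{inhomrela} to the combinatorial point \eqref{combipoint20v} and track the resulting constant prefactors. Since both equations in \eqref{inhomrela} already express $Z_m^{20V_3}[\bz,\bt,\bw]$ as an explicit product of $a_2$-, $a_3$-, and $b_1$-weights times the factor $Z_n^{6V_1}[\bz,\bw]$, setting all spectral parameters to the uniform values reduces this to a purely numerical computation. First I would substitute the combinatorial-point weights $(a_1,b_1,c_1)=\frac{\beta_1}{\sqrt{2}}(1,\sqrt{2},1)$, $(a_2,b_2,c_2)=\frac{\beta_2}{\sqrt{2}}(\sqrt{2},1,1)$, $(a_3,b_3,c_3)=\frac{\beta_3}{\sqrt{2}}(\sqrt{2},1,1)$ recorded just before the statement; the relevant entries are $a_2=\beta_2$, $a_3=\beta_3$, and $b_1=\beta_1$ (the $\sqrt 2$ factors cancelling against the normalizations).

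Next I would count the cardinalities of the three index sets in each product. For the even case $m=2n$, the pairs $1\le i\le j\le 2n$ number $\binom{2n+1}{2}$, the pairs $1\le j\le 2n+1-k\le 2n$ number the same, and the two-part $b_1$-index set must be tallied carefully; for the odd case $m=2n-1$ one repeats this with the shifted ranges. The homogeneous 6V factor $Z_n^{6V_1}[\bz,\bw]$ becomes $Z_n^{6V}$ evaluated at the sublattice-$1$ weights $(a,b,c)=(1,\sqrt 2,1)$, matching the statement. The main obstacle, and the only nonroutine bookkeeping, is verifying that after inserting the numerical weights the leftover power of $2$ collapses to exactly $2^{n(n+1)/2}$ in the even case and $2^{n(n-1)/2}$ in the odd case: this requires combining the explicit $\beta_i$ prefactors (which should assemble into an overall normalization that is absorbed or set to one at the combinatorial point) with the $\sqrt 2$ appearing in $b_1$, so that only the $b_1$-vertices inside the marked-square complement contribute a net factor $\sqrt 2$ each.

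Concretely, I expect the power of two to come entirely from the $b_1$ weights, each equal to $\sqrt 2$ at the combinatorial point after normalization, so the exponent of $2$ is half the number of $b_1$-vertices in the shaded-square complement. Thus the crux is to show that this count equals $n(n+1)$ for $m=2n$ and $n(n-1)$ for $m=2n-1$. I would verify this by summing the sizes of the two triangular index ranges defining the $b_1$ product in \eqref{inhomrela}: each range is a triangular region in the $(i,j)$-plane, and their total cardinality should telescope to these values. A useful cross-check is to confirm consistency for small $m$ (e.g. $m=1,2,3$) directly against the known small values of $Z_n^{6V}$, which pins down both the exponent and the absence of any stray constant from the $a_2,a_3$ products (these should contribute equal numbers of $\beta_2,\beta_3$ factors that merge into the global normalization $\nu$ fixed by \eqref{combipoint20v}). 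Once the $b_1$-count is established, the two displayed formulas follow immediately.
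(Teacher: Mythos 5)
Your proposal is correct and takes essentially the same route as the paper's proof: specialize the inhomogeneous relation \eqref{inhomrela} at the combinatorial point \eqref{combipoint20v}, count the three index sets (giving exponents $n(2n+1)$ for each of $a_2,a_3$ and $n(n+1)$ for $b_1$ in the even case, $n(2n-1)$ and $n(n-1)$ in the odd case), and collect constants. Your ``net factor $\sqrt{2}$ per $b_1$-vertex'' bookkeeping is the paper's computation in disguise: since $a_1a_2a_3=\omega_0=1$ at the combinatorial point (equivalently, the paper's grouping $\bigl(\beta_1\beta_2\beta_3/\sqrt{2}\bigr)^{n(2n\pm 1)}=1$ using $\nu=\sqrt{2}$), the baseline cancels and only the $b_1$-vertices outside the shaded square yield the powers $2^{n(n+1)/2}$ and $2^{n(n-1)/2}$ multiplying the normalized $Z_n^{6V}$ with weights $(1,\sqrt{2},1)$.
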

\begin{proof}
In the even case of the $20V_3$ model, we have from the inhomogeneous relation \eqref{inhomrela}:
$$ Z_{2n}^{20V_3}=(a_2a_3)^{n(2n+1)} b_1^{n(n+1)}\!\! \left(\frac{\beta_1}{\sqrt{2}}\right)^{n^2}  Z_n^{6V}=\left(\frac{\beta_1\beta_2\beta_3}{\sqrt{2}}\right)^{n(2n+1)}\!\! 2^{n(n+1)/2} Z_n^{6V}=2^{n(n+1)/2}  \, Z_n^{6V} ,$$
as $(a_1,b_1,c_1)=\frac{\beta_1}{\sqrt{2}}(1,\sqrt{2},1)$, $a_2=\beta_2$, $a_3=\beta_3$, and $\nu=\beta_1\beta_2\beta_3=\sqrt{2}$. The relation for odd $m$ follows similarly:
$$ Z_{2n-1}^{20V_3}=(a_2a_3)^{n(2n-1)} b_1^{n(n-1)}\!\! \left(\frac{\beta_1}{\sqrt{2}}\right)^{n^2} Z_n^{6V}=\left(\frac{\beta_1\beta_2\beta_3}{\sqrt{2}}\right)^{n(2n-1)}\!\! 2^{n(n-1)/2}  Z_n^{6V}=2^{n(n-1)/2}  \, Z_n^{6V} .$$
\end{proof}

A direct consequence of this theorem is the following:
\begin{cor} The even and odd partition functions of the 20V${}_3$ model are related via
$$ Z_{2n}^{20V_3}=2^{n}\, Z_{2n-1}^{20V_3} .$$
\end{cor}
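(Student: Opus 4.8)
The plan is to derive the corollary directly from the preceding theorem by taking the ratio of the two explicit formulas it provides. The theorem states that
\begin{equation*}
Z_{2n}^{20V_3} = 2^{n(n+1)/2}\, Z_n^{6V}, \qquad Z_{2n-1}^{20V_3} = 2^{n(n-1)/2}\, Z_n^{6V},
\end{equation*}
and the crucial observation is that \emph{both} expressions involve the very same factor $Z_n^{6V}$, with $n = \lfloor (m+1)/2\rfloor$ matching in both the even case $m=2n$ and the odd case $m=2n-1$. Since the 6V-DWBC partition functions agree, the entire content of the claim reduces to comparing the two powers of $2$.

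First I would form the quotient $Z_{2n}^{20V_3}/Z_{2n-1}^{20V_3}$, in which the common $Z_n^{6V}$ factor cancels, leaving $2^{n(n+1)/2 - n(n-1)/2}$. Then I would simplify the exponent: the difference of the two triangular-type exponents is
\begin{equation*}
\frac{n(n+1)}{2} - \frac{n(n-1)}{2} = \frac{n\big((n+1)-(n-1)\big)}{2} = \frac{n\cdot 2}{2} = n.
\end{equation*}
This gives $Z_{2n}^{20V_3} = 2^{n}\, Z_{2n-1}^{20V_3}$, which is exactly the stated relation. The only thing to keep straight is that $Z_n^{6V}$ is nonzero (so the division is legitimate), but this is automatic since it counts a nonempty set of configurations; alternatively, one avoids division entirely by multiplying the odd formula through by $2^n$ and checking the exponents match.

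There is essentially no obstacle here: the result is a one-line algebraic consequence of the theorem, and the substantive work — the relation of both $20V_3$ partition functions to the same $6V$ partition function via moving the diagonal spectral lines and applying the ice rule — has already been carried out in establishing the inhomogeneous relation \eqref{inhomrela} and its combinatorial-point specialization. The \emph{conceptual} point worth emphasizing is precisely that the size parameter $n$ of the underlying 6V model is identical for the consecutive sizes $2n-1$ and $2n$, so passing from odd to even size costs only the extra prefactor $2^{n}$ and does not change the combinatorial core counted by $Z_n^{6V}$.
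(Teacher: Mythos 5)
Your proof is correct and follows exactly the paper's route: the corollary is stated there as a direct consequence of the theorem $Z_{2n}^{20V_3}=2^{n(n+1)/2}Z_n^{6V}$, $Z_{2n-1}^{20V_3}=2^{n(n-1)/2}Z_n^{6V}$, with the same exponent computation $\frac{n(n+1)}{2}-\frac{n(n-1)}{2}=n$. Your added remarks on the nonvanishing of $Z_n^{6V}$ and on the matching size $n=\lfloor(m+1)/2\rfloor$ for consecutive sizes $m=2n-1,2n$ are accurate but not needed beyond what the paper already implies.
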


In \cite{DFG20V} it was shown that the integer $Z_n^{6V}$ counts the number of Quarter-Turn symmetric 
domino tilings of the Holey Aztec Square of size $2n$.
A compact formula for this number is \cite{DFG20V}:
$$Z_n^{6V}=\det\limits_{0\leq i,j\leq n-1}\left.\left(\frac{1}{1-x y} +\frac{2x}{(1-x)(1-x-y-xy)}\right)\right\vert_{x^i y^j} ,
$$
where $f(x,y)\vert_{x^i y^j}$ stands for the coefficient of $x^iy^j$ in the series expansion of $f$ around $(0,0)$. 
This leads to the following sequence for $m=1,2,...,10, ...$:
$$Z_{m}^{20V_3}=1,2,6, 24, 184, 1472, 27712, 443392, 20177920, 645693440, ... $$

\subsection{Partially inhomogeneous case: refined enumeration}\label{refinedsec}

We now use Eq.~\eqref{inhomrela} with all spectral parameters equal to their combinatorial point values
\eqref{combipoint20v}, except for 
$w_1=e^{-i(\lambda+\eta+2\xi)}$ with $\lambda,\eta$ as in \eqref{combipoint20v}
and $\xi$ arbitrary (and for which the normalization factors in \eqref{3w6v} and \eqref{weights20V} become $\beta_1[\xi]=e^{-i\xi}\beta_1$, $\beta_2[\xi]=\beta_2$ and $\beta_3[\xi]=e^{-i\xi}\beta_3$, and $\nu_0[\xi]=e^{-2i\xi}\nu_0$ respectively). 
The corresponding rightmost vertical line is marked with empty dots in Figs. \ref{fig:unravel}. 
Let $Z_{m}^{20V_3}[\xi]$, denote the corresponding partition function. All local vertex weights $\omega(v)$ are $1$ except
in the last column (corresponding to the spectral parameter $w_1$), where they read:
\begin{eqnarray}
\omega_0[\xi]&=&\sqrt{2}\,\cos(\xi)\cos(\xi+\frac{\pi}{4}), \quad \omega_1[\xi]=\sqrt{2}\,\cos(\xi)\sin(\xi+\frac{\pi}{4})\nonumber \\
\omega_2[\xi]&=&\omega_4[\xi]=\cos(\xi),\qquad \qquad 
\omega_3[\xi]=\omega_5[\xi]=\omega_6[\xi]=\cos^2(\xi)\ .
\label{inhoweights}
\end{eqnarray}
Recall the refined number $Z_{m,k}^{20V_3}$ of 20V${}_3$ configurations on $\cT_m$ with prescribed position $k\in \{1,2,...,m\}$ of the vertex where the top path hits the last column for the first time.
The configurations contributing to $Z_{m,k}^{20V_3}$ in $Z_{m}^{20V_3}[\xi]$ have trivial vertex weights except in the last column, whose total weight is $\omega_1[\xi]^{k-1} \omega_2[\xi]\omega_0[\xi]^{m-k}$. 
Indeed, in the last column, the vertex at position $k$ may either
be in the configuration $\omega_2$ or $\omega_4$ but the two weights are the same, while the $m-k$ top vertices are in the configuration $\omega_0$ (empty) and the $k-1$ bottom ones are in configuration $\omega_1$
(two vertical steps of path). We conclude that
\begin{equation}\label{ref20}
Z_{m}^{20V_3}[\xi]=\frac{\omega_2[\xi]}{\omega_0[\xi]}\, \omega_0[\xi]^m\, Z_{m}^{20V_3}(\tau) \ , 
\end{equation}
where $Z_{m}^{20V_3}(\tau)= \sum_{k=1}^m Z_{m,k}^{20V_3} \tau^{k-1}$ and
$$\tau:= \frac{\omega_1[\xi]}{\omega_0[\xi]}=\tan(\xi+\frac{\pi}{4}) \ .$$

Analogously let $Z_{n}^{6V_1}[\xi]$ denote the partition function of the 6V-DWBC model on the square grid of size $n$ with all spectral parameters $z,w$ 
fixed to the combinatorial values \eqref{combipoint20v},
except for the last column (where we set $w_1=e^{-i(\lambda+\eta+2\xi)}$ as before). The weights are $(a_1,b_1,c_1)=\frac{\beta_1}{\sqrt{2}}(1,\sqrt{2},1)$ 
for all vertices except in the last column, where they read respectively:
$$6V_1:\quad a_1[\xi]=\beta_1[\xi] \cos(\xi+\frac{\pi}{4}),\quad b_1[\xi]=\beta_1[\xi] \cos(\xi),\quad c_1[\xi]=\frac{\beta_1[\xi]}{\sqrt{2}} .$$
A decomposition of the 6V configurations similar to the above leads to the relation:
\begin{equation}\label{ref6}
Z_n^{6V_1}[\xi]=e^{-in\xi}\,\frac{c_1[\xi]\,a_1[0]}{a_1[\xi]\,c_1[0]}\,\left(\frac{a_1[\xi]}{a_1[0]}\right)^n\left(\frac{\beta_1}{\sqrt{2}}\right)^{n^2}\,Z_{n}^{6V_1}(\sigma_1),
\end{equation}
where $Z_{n}^{6V_1}(\sigma_1)=\sum_{k=1}^n Z_{n,k}^{6V_1}\, \sigma_1^{k-1}$ and
$$
\sigma_1=\frac{b_1[\xi]\, a_1[0]}{a_1[\xi]\, b_1[0]}=\frac{\cos(\xi)}{\sqrt{2}\cos(\xi+\frac{\pi}{4})} =\frac{\tau+1}{2} .\\
$$
Here we decompose the 6V${}_1$ configurations according to the position $k$ where the topmost path hits the last column for the first time, and denote by $Z_{n,k}^{6V_1}$ their partition function with uniform weights $(a,b,c)=(1,\sqrt{2},1)$: indeed the prefactor $(\beta_1/\sqrt{2})^{n^2}$ allows to replace the initial weights 
$(a_1,b_1,c_1)$ with those.
To get the correct inhomogeneous weights in the last column, the weight of these configurations must be multiplied by $\left(\frac{b_1[\xi]}{b_1[0]}\right)^{k-1}$ (for the bottom $k-1$ vertices in configuration $b_1$),
$\frac{c_1[\xi]}{c_1[0]}$ (for the vertex at position $k$, in configuration $c_1$) and $\left(\frac{a_1[\xi]}{a_1[0]}\right)^{n-k}$ (for the top $n-k$ vertices in the empty configuration $a_1$).

With this new choice of weights, the relations \eqref{inhomrela} imply:
\begin{eqnarray*}
Z_{2n}^{20V_3}[\xi]&=& (a_2 a_3)^{n(2n+1)}\, b_1^{n(n+1)}\left( \frac{b_1[\xi]}{b_1[0]}\right)^{n}\left( \frac{a_3[\xi]}{a_3[0]}\right)^{2n}\left(\frac{\beta_1}{\sqrt{2}}\right)^{n^2} \, Z_n^{6V_1}[\xi] \\
&=& 2^{n(n+1)/2}\, \left( \frac{b_1[\xi]}{b_1[0]}\right)^{n}\left( \frac{a_3[\xi]}{a_3[0]}\right)^{2n}\, Z_n^{6V_1}[\xi]\\
Z_{2n-1}^{20V_3}[\xi]&=& (a_2 a_3)^{n(2n-1)}\, b_1^{n(n-1)} \left( \frac{b_1[\xi]}{b_1[0]}\right)^{n-1}\left( \frac{a_3[\xi]}{a_3[0]}\right)^{2n-1}\left(\frac{\beta_1}{\sqrt{2}}\right)^{n^2}\, Z_n^{6V_1}[\xi]\\
&=&2^{n(n-1)/2}\, \left( \frac{b_1[\xi]}{b_1[0]}\right)^{n-1}\left( \frac{a_3[\xi]}{a_3[0]}\right)^{2n-1}\, Z_n^{6V_1}[\xi] .\\
\end{eqnarray*}
There are analogous relations for the 20V${}_{1,2}$ models defined in Section \ref{transfosec}, respectively related to 6V-DWBC partition functions on the sublattices 2 and 3 respectively, easily obtained by following the effect of the transformations $\bf S\circ \bf R\circ \bf VF$ and $\bf R^*$ respectively. More precisely, we must consider the 
partition function 
$Z_n^{6V_3}[\xi]$ in which all spectral parameters $t_i,w_i$ are set to their combinatorial point value except for $w_1\to e^{2i\xi} w_1$, and the partition function
$Z_n^{6V_2}[\xi]$ in which all spectral parameters $z_i,t_i$ are set to their combinatorial point value except for $z_1\to e^{2i\xi} z_1$.
These lead to the following. 

\begin{thm}\label{ref20v6vthm}
The refined partition function $Z_m^{20V_3}(\tau)$, and that of the 6V-DWBC model 
$Z_n^{6V_1}(\sigma)$ are related via:
\begin{eqnarray*}
Z_{2n}^{20V_3}(\tau)\!\!\!&=&\!\!\!2^{\frac{n(n-1}{2}}  (\tau+1)^n Z_n^{6V_1}\left({\scriptstyle\frac{\tau+1}{2}}\right), \  
Z_{2n-1}^{20V_3}(\tau)=2^{\frac{(n-1)(n-2)}{2}}(\tau+1)^{n-1} Z_n^{6V_1}\left({\scriptstyle\frac{\tau+1}{2}}\right)  .
\end{eqnarray*}
Moreover, we have for all $m\geq 1$:
$$Z_{m}^{20V_{1}}(\tau)=Z_{m}^{20V_{2}}(\tau)=\tau^{m-1}\, Z_{m}^{20V_{3}}(\tau^{-1}) .$$
\end{thm}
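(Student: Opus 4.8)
The plan is to treat \Cref{ref20v6vthm} in two independent parts. The first two identities, relating the refined $20V_3$ polynomial to the refined $6V_1$ polynomial, are a direct consequence of the $\xi$-deformed decompositions established immediately above the theorem; the final line, relating the $20V_1$, $20V_2$ and $20V_3$ polynomials, is purely combinatorial and follows from Corollaries \ref{23cor} and \ref{12cor} by reindexing.

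For the even and odd identities I would feed the two refined decompositions into the deformed relations. Concretely, substitute \eqref{ref20}, which gives $Z_{m}^{20V_3}[\xi]=\frac{\omega_2[\xi]}{\omega_0[\xi]}\,\omega_0[\xi]^m\,Z_{m}^{20V_3}(\tau)$ with $\tau=\tan(\xi+\tfrac{\pi}{4})$, together with \eqref{ref6}, which expresses $Z_n^{6V_1}[\xi]$ through $Z_n^{6V_1}(\sigma_1)$ with $\sigma_1=\tfrac{\tau+1}{2}$, into the two relations for $Z_{2n}^{20V_3}[\xi]$ and $Z_{2n-1}^{20V_3}[\xi]$ displayed just before the theorem. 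Each deformed partition function is a monomial prefactor times a polynomial in its variable, so equating the two sides and cancelling the common factor $Z_n^{6V_1}(\tfrac{\tau+1}{2})$ isolates $Z_{2n}^{20V_3}(\tau)$ (resp.\ $Z_{2n-1}^{20V_3}(\tau)$) as an explicit prefactor times $Z_n^{6V_1}(\tfrac{\tau+1}{2})$. It then remains to simplify that prefactor, pushing the elementary factors $\beta_1[\xi]=e^{-i\xi}\beta_1$, $b_1[\xi]/b_1[0]=e^{-i\xi}\cos\xi$, $a_3[\xi]/a_3[0]$, $\omega_0[\xi]$, $\omega_2[\xi]$, $a_1[\xi]$, $c_1[\xi]$ through the single variable $\tau$ via $\tau=\tan(\xi+\tfrac{\pi}{4})$ and $\tfrac{\tau+1}{2}=\tfrac{\cos\xi}{\sqrt 2\,\cos(\xi+\pi/4)}$. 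The claim is that all the $e^{-i\xi}$ phases and trigonometric factors cancel, leaving exactly $2^{n(n-1)/2}(\tau+1)^n$ in the even case and $2^{(n-1)(n-2)/2}(\tau+1)^{n-1}$ in the odd case. The main obstacle is precisely this prefactor bookkeeping; it is routine but delicate, since it mixes the $\beta_i$ normalizations with the deformation phases. A useful consistency check is $\xi=0$ (so $\tau=1$), which must reproduce the combinatorial-point relations $Z_{2n}^{20V_3}=2^{n(n+1)/2}Z_n^{6V}$ and $Z_{2n-1}^{20V_3}=2^{n(n-1)/2}Z_n^{6V}$; indeed $2^{n(n-1)/2}(1+1)^n=2^{n(n+1)/2}$ and $2^{(n-1)(n-2)/2}2^{n-1}=2^{n(n-1)/2}$.

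For the final line I would argue directly from the term-by-term refined correspondences, using that each refined polynomial is $Z_m^{20V_i}(\tau)=\sum_{k=1}^m Z_{m,k}^{20V_i}\,\tau^{k-1}$. The first equality $Z_m^{20V_1}(\tau)=Z_m^{20V_2}(\tau)$ is immediate from Corollary \ref{12cor}, since $Z_{m,k}^{20V_1}=Z_{m,k}^{20V_2}$ for every $k$ makes the two generating polynomials coefficientwise identical. For the second equality I would use Corollary \ref{23cor} in the form $Z_{m,k}^{20V_2}=Z_{m,m+1-k}^{20V_3}$ and reindex by $j=m+1-k$:
$$Z_m^{20V_2}(\tau)=\sum_{k=1}^m Z_{m,m+1-k}^{20V_3}\,\tau^{k-1}=\sum_{j=1}^m Z_{m,j}^{20V_3}\,\tau^{m-j}=\tau^{m-1}\sum_{j=1}^m Z_{m,j}^{20V_3}\,(\tau^{-1})^{j-1}=\tau^{m-1}\,Z_m^{20V_3}(\tau^{-1}).$$
This part presents no obstacle beyond reading off the refined position statistics consistently: the reflection--shear bijection $\mathbf{S}\circ\mathbf{R}\circ\mathbf{VF}$ reverses the position label ($k\mapsto m+1-k$), which is what produces the reciprocal $\tau\mapsto\tau^{-1}$ and the factor $\tau^{m-1}$, whereas the diagonal reflection $\mathbf{R}^*$ preserves the label, which is why Corollary \ref{12cor} carries no reversal.
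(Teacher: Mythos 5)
Your proposal is correct and takes essentially the same route as the paper: the paper likewise substitutes \eqref{ref20} and \eqref{ref6} into the partially inhomogeneous relations displayed just before the theorem, and your ``delicate prefactor bookkeeping'' is condensed there into the single identity $\frac{\omega_0[\xi]}{\omega_2[\xi]}\,\frac{c_1[\xi]\,a_1[0]}{a_1[\xi]\,c_1[0]}=1$, giving $Z_{2n}^{20V_3}(\tau)=2^{n(n+1)/2}\sigma_1^n Z_n^{6V_1}(\sigma_1)$ and its odd analogue, which reduce to the stated powers of $2$ upon setting $\sigma_1=\frac{\tau+1}{2}$. For the last line the paper simply invokes Corollaries \ref{23cor} and \ref{12cor}; your explicit reindexing $j=m+1-k$ is exactly the computation it leaves implicit.
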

\begin{proof}
In the 20V${}_3$ case, we use Eqns. \eqref{ref20}, \eqref{ref6} for $i=1$, and the relation $\sigma_1=\frac{\tau+1}{2}$ to rewrite:
\begin{eqnarray*}
Z_{2n}^{20V_3}(\tau)&=&\frac{2^{n(n+1)/2}}{\omega_2[\xi]\omega_0[\xi]^{2n-1}}\, \left( \frac{b_1[\xi]}{b_1[0]}\right)^{n}\left( \frac{a_3[\xi]}{a_3[0]}\right)^{2n}\,
\frac{c_1[\xi]\,a_1[0]}{a_1[\xi]\,c_1[0]}\,\left(\frac{a_1[\xi]}{a_1[0]}\right)^n\,Z_n^{6V_1}(\sigma_1)\\
&=&2^{n(n+1)/2} \,\frac{\omega_0[\xi]}{\omega_2[\xi]}\,\frac{c_1[\xi]\,a_1[0]}{a_1[\xi]\,c_1[0]}\, \sigma_1^n \,Z_n^{6V_1}(\sigma_1)=2^{n(n+1)/2} \,\sigma_1^n \,Z_n^{6V_1}(\sigma_1) ,
\end{eqnarray*}
where we used the relation $\frac{\omega_0[\xi]}{\omega_2[\xi]}\,\frac{c_1[\xi]\,a_1[0]}{a_1[\xi]\,c_1[0]}=1$. Similarly, we have:
\begin{eqnarray*}
Z_{2n-1}^{20V_3}(\tau)&=&\frac{2^{n(n-1)/2}}{\omega_2[\xi]\omega_0[\xi]^{2n-2}}\, \left( \frac{b_1[\xi]}{b_1[0]}\right)^{n-1}\left( \frac{a_3[\xi]}{a_3[0]}\right)^{2n-1}\,
\frac{c_1[\xi]\,a_1[0]}{a_1[\xi]\,c_1[0]}\,\left(\frac{a_1[\xi]}{a_1[0]}\right)^{n}\,Z_n^{6V_1}(\sigma_1)\\
&=&2^{n(n-1)/2} \,\frac{\omega_0[\xi]}{\omega_2[\xi]}\,\frac{c_1[\xi]\,a_1[0]}{a_1[\xi]\,c_1[0]}\, \sigma_1^{n-1} \,Z_n^{6V}(\sigma_1)=2^{n(n-1)/2} \,\sigma_1^{n-1} \,Z_n^{6V_1}(\sigma_1) .
\end{eqnarray*}
The refined partition functions for the cases of 20V${}_2$ and 20V${}_1$ follow from Corollaries \ref{23cor} and \ref{12cor} respectively. The Theorem follows.
\end{proof}

A direct consequence of this theorem is the following:
\begin{cor} The refined even and odd partition functions of all 20V${}_i$ models $i=1,2,3$ are related via
$$ Z_{2n}^{20V_i}(\tau)=2^{n-1}\,(1+\tau)\, Z_{2n-1}^{20V_i}(\tau) , \quad i=1,2,3.$$
\end{cor}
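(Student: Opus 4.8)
The plan is to derive both cases directly from Theorem \ref{ref20v6vthm}, where every needed ingredient has already been assembled; the corollary is a purely algebraic consequence requiring no new combinatorial input.

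First I would treat the case $i=3$. The key observation is that the two expressions provided by Theorem \ref{ref20v6vthm} for $Z_{2n}^{20V_3}(\tau)$ and $Z_{2n-1}^{20V_3}(\tau)$ contain the \emph{same} factor $Z_n^{6V_1}\!\left(\frac{\tau+1}{2}\right)$, with identical index $n$ and identical argument. Hence this factor cancels when I form the ratio $Z_{2n}^{20V_3}(\tau)/Z_{2n-1}^{20V_3}(\tau)$, leaving only the explicit prefactors. The power of two simplifies via $\frac{n(n-1)}{2}-\frac{(n-1)(n-2)}{2}=\frac{(n-1)\cdot 2}{2}=n-1$, and the power of $(\tau+1)$ via $n-(n-1)=1$, which yields exactly $Z_{2n}^{20V_3}(\tau)=2^{n-1}(1+\tau)\,Z_{2n-1}^{20V_3}(\tau)$.

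For the cases $i=1,2$, I would invoke the second relation of Theorem \ref{ref20v6vthm}, namely $Z_m^{20V_1}(\tau)=Z_m^{20V_2}(\tau)=\tau^{m-1}Z_m^{20V_3}(\tau^{-1})$. Specializing to $m=2n$ and substituting the $i=3$ identity already established, now evaluated at the argument $\tau^{-1}$, I would write
$$Z_{2n}^{20V_i}(\tau)=\tau^{2n-1}Z_{2n}^{20V_3}(\tau^{-1})=\tau^{2n-1}\,2^{n-1}(1+\tau^{-1})\,Z_{2n-1}^{20V_3}(\tau^{-1}).$$
The final step is to reorganize $\tau^{2n-1}(1+\tau^{-1})=\tau^{2n-2}(1+\tau)$ and to recognize, via the same relation at $m=2n-1$, that $\tau^{2n-2}Z_{2n-1}^{20V_3}(\tau^{-1})=Z_{2n-1}^{20V_i}(\tau)$, giving the claimed identity for $i=1,2$.

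There is no substantive obstacle: the entire argument is exponent bookkeeping built on the cancellation of the common $6V_1$ factor. The only point requiring a moment of care is the transformation $\tau\mapsto\tau^{-1}$ in the $i=1,2$ case, where one must correctly absorb the stray powers of $\tau$ so that $1+\tau^{-1}$ becomes $1+\tau$ and the residual power of $\tau$ matches precisely the prefactor hidden in $Z_{2n-1}^{20V_i}(\tau)$.
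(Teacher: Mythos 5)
Your proof is correct and follows exactly the route the paper intends: the corollary is presented there as a direct consequence of Theorem \ref{ref20v6vthm}, and your argument — cancelling the common factor $Z_n^{6V_1}\bigl(\frac{\tau+1}{2}\bigr)$ between the even and odd formulas to get $2^{\frac{n(n-1)}{2}-\frac{(n-1)(n-2)}{2}}(\tau+1)^{n-(n-1)}=2^{n-1}(1+\tau)$ — is precisely that bookkeeping. Your handling of the $i=1,2$ cases via $Z_m^{20V_i}(\tau)=\tau^{m-1}Z_m^{20V_3}(\tau^{-1})$, absorbing $\tau^{2n-1}(1+\tau^{-1})=\tau^{2n-2}(1+\tau)$ so that the residual power of $\tau$ reconstitutes $Z_{2n-1}^{20V_i}(\tau)$, is likewise the intended (and correct) deduction.
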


A compact formula for the refined 6V-DWBC partition function $Z_n^{6V_1}\left({\scriptstyle\frac{\tau+1}{2}}\right)$ is \cite{DFG20V}:
$$Z_n^{6V_1}\left({\scriptstyle\frac{\tau+1}{2}}\right)=\det\limits_{0\leq i,j\leq n-1}\left.\left(\frac{1}{1-x y} +\frac{2x}{(1-x)(1-x-y-xy)}+x y^{n-1}\frac{\tau-1}{1-\tau x} \frac{(1+x)^n}{(1-x)^{n+1}}\right)\right\vert_{x^i y^j} .
$$
This was obtained by identifying it with the refined partition function of the 20V-DWBC2 model on a square grid of size $n$, itself identical to the ``type 2"
refined domino tiling partition function of the Holey Aztec square of size $2n$ (see Ref. \cite{DFG20V} for details). 
This gives an easy access to the refined
20V partition functions on $\cT_m$. We have for instance:
\begin{eqnarray*}
Z_{1}^{20V_3}(\tau)\!\!&=&\!\!1\\
Z_{2}^{20V_3}(\tau)\!\!&=&\!\!1+\tau\\
Z_{3}^{20V_3}(\tau)\!\!&=&\!\!2+3\tau+\tau^2\\
Z_{4}^{20V_3}(\tau)\!\!&=&\!\!4 + 10\tau + 8 \tau^2 + 2 \tau^3\\
Z_{5}^{20V_3}(\tau)\!\!&=&\!\!20 + 60  \tau+ 66  \tau+ 32 \tau^3 + 6 \tau^4\\
Z_{6}^{20V_3}(\tau)\!\!&=&\!\!80 + 320  \tau + 504  \tau^2 + 392  \tau^3 + 152  \tau^4 + 24  \tau^5\\
Z_{7}^{20V_3}(\tau)\!\!&=&\!\!976 + 4384\tau + 8144\tau^2 + 8072\tau^3 + 4552\tau^4 + 1400\tau^5 + 184 \tau^6\\
Z_{8}^{20V_3}(\tau)\!\!&=&\!\!7808 \!+\! 42880 \tau \!+\! 100224 \tau^2 \!+\! 129728 \tau^3 \!+ \!100992 \tau^4 \!+\! 47616\tau^5 \!+\! 12672 \tau^6 \!+\! 1472 \tau^7
\end{eqnarray*}
and
\begin{eqnarray*}
Z_{1}^{20V_{2}}(\tau)\!\!&=&\!\!1\\
Z_{2}^{20V_{2}}(\tau)\!\!&=&\!\!1+\tau\\
Z_{3}^{20V_{2}}(\tau)\!\!&=&\!\!1+3\tau+2\tau^2\\
Z_{4}^{20V_{2}}(\tau)\!\!&=&\!\!2 + 8\tau + 10 \tau^2 + 4 \tau^3\\
Z_{5}^{20V_{2}}(\tau)\!\!&=&\!\!6 + 32  \tau+ 66  \tau+ 60 \tau^3 + 20 \tau^4\\
Z_{6}^{20V_{2}}(\tau)\!\!&=&\!\!24 + 152  \tau + 392  \tau^2 + 504  \tau^3 + 320  \tau^4 + 80  \tau^5\\
Z_{7}^{20V_{2}}(\tau)\!\!&=&\!\!184 + 1400\tau + 4552\tau^2 + 8072\tau^3 +8144 \tau^4 + 4384\tau^5 + 976 \tau^6\\
Z_{8}^{20V_{2}}(\tau)\!\!&=&\!\! \!1472+\! 12672 \tau \!+\! 47616 \tau^2 \!+\! 100992 \tau^3 \!+ \!129728 \tau^4 \!+\! 100224\tau^5 \!+\! 42880 \tau^6 \!+\! 7808 \tau^7
\end{eqnarray*}

Note the obvious relation $Z_{m,m-1}^{20V_3}=Z_{m-2}^{20V_3}=Z_{m-2}^{20V_3}(1)$, as the contributing configurations have a top path made of $m$ horizontal steps followed by $m$ vertical ones,
while the remaining paths may take any configuration on $\cT_{m-2}$.

\section{Asymptotic results: Arctic curves}\label{arcticsec}

\subsection{Tangent method}

\begin{figure}
\begin{center}
\includegraphics[width=12cm]{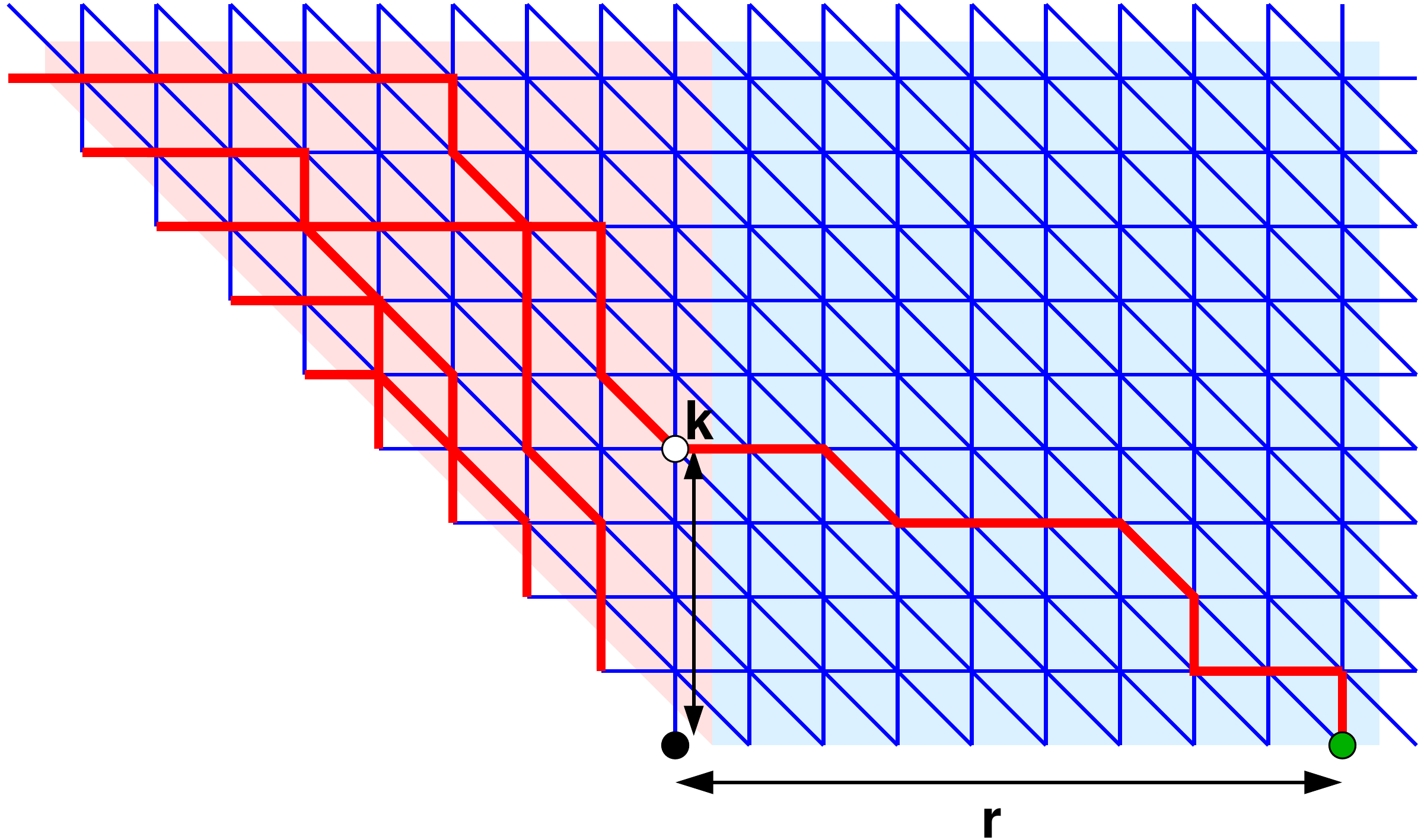}
\end{center}
\caption{\small The tangent method applied to the 20V${}_3$ model.  The original endpoint (black dot) of the outer path is moved to a position at distance $r$ (green dot).
The outer path now exits the original domain (shaded in pink) at point with position $k$ on the vertical line (white dot). The total partition function is a sum over the position $k$ of the product of the partition function for the pink domain (normalized into the one-point function $H^{20V_3}_{m,k}$) and the partition function for a single path in the blue domain
($Y_{k,r}$). All vertices are weighted with 20V weights.}
\label{fig:tangent}
\end{figure}

The tangent method, devised by F. Colomo and A. Sportiello \cite{COSPO} is a non-rigorous method to determine the arctic curve for various statistical models of paths exhibiting crystalline phases (with ordered paths) and liquid phases (with disordered paths) separated by a sharp transition curve usually called arctic curve. Starting from 
a path formulation of the model (such as the osculating Schr\"oder paths in the present case), one identifies one of the phase separations with the outermost path which is the natural boundary between the empty cristalline phase and a disordered path phase. To study the asymptotic shape of this separation, the method consists in moving 
the endpoint of this outermost path away from the original domain. The path is expected to still determine the arctic separation, until it detaches itself from the other paths
and essentially follows a geodesic (here a straight line) until its new endpoint. For large sizes, the latter becomes tangent to the arctic curve. This line is determined by its endpoint and the point at which it exits the original domain. We represent in Fig. \ref{fig:tangent} the setting for the tangent method applied to the 20V${}_3$ 
model on the triangle $\cT_m$. The total partition function of the new model (with moved endpoint) decomposes into a sum over the exit position $k$ of the outer path
of a product of two partition functions (corresponding respectively to the pink and blue shaded areas).
The first piece is a new partition function\footnote{The notation $\tilde Z_{m,k}^{20V_3}$ is to distinguish this quantity from the refined partition function $Z_{m,k}^{20V_3}$, which has $k$ additional vertical steps joining the white dot to the black one.} $\tilde Z_{m,k}^{20V_3}[z,t,w]$ identical to $Z_m^{20V_3}[z,t,w]$ except for the last endpoint, moved from its original position  (black dot) to position $k$ (white dot). (In this section, the weights of the 20V model are uniform but arbitrary: they correspond to choosing all $z_i=z$, $t_i=t$, $w_i=w$; 
to make the dependence on $z,t,w$ explicit, we write them as arguments from now on.).
The latter partition function is usually normalized by dividing it with the original partition function $Z_m^{20V_3}[z,t,w]$, and is called the one-point function
$$H_{m,k}^{20V_3}[z,t,w]=\frac{\tilde Z_{m,k}^{20V_3}[z,t,w]}{Z_m^{20V_3}[z,t,w]} .$$
The second piece is the partition function $Y_{k,r}[z,t,w]$ of a single path between the white dot and the green dot, namely of single Schr\"oder paths in $\Z_+^2$
from point $(0,k)$ to $(r,0)$, weighted by the product on their vertices of the local 20V weights \eqref{weights20V} functions of $z,t,w$, 
whereas all empty vertices in the blue domain receive the weight $\omega_0[z,t,w]$.

The total partition function reads:
$$\hat Z^{20V_3}_{m,r}[z,t,w]=\sum_{k=1}^m H_{m,k}^{20V_3}[z,t,w]\, Y_{k,r}[z,t,w] .$$
The tangent method is based on the remark that for large size $m$, the {\it most probable} exit position corresponds to the dominant contribution to the sum, and
is therefore determined as the solution of an extremization problem. The latter involves first estimating the large $m$ ($k/m, r/m$ finite) behavior of the two quantities $H_{m,k}^{20V_3}[z,t,w]$ and $Y_{k,r}[z,t,w]$, which will be done in the two following sections.

Finally, having determined a family of tangent lines parameterized by the displaced position $r$, we will deduce the arctic curve as the envelope of this family.

\subsection{20V one-point function and asymptotics}

As explained above, the first ingredient of the Tangent Method is the one-point function of the 20V model defined as
the ratio
$$H_{m,k}^{20V_3}[z,t,w]= \left(\frac{\omega_0}{\omega_1}\right)^{k-1}\frac{Z_{m,k}^{20V}[z,t,w]}{Z_{m}^{20V}[z,t,w]}$$
of the refined 20V partition function obtained by conditioning the topmost path to end at its first visit to the rightmost vertical, at position $k\in \{1,2,...,m\}$,
by the total partition function $Z_{m}^{20V}[z,t,w]$. The prefactor replaces the weight $\omega_1^{k-1}$ of the original refined partition function $Z_{m,k}^{20V}[z,t,w]$
by $\omega_0^{k-1}$ as we omit the last $k-1$ (vertical) steps of the path. 

As before, we may compute the quantities $Z_{m,k}^{20V}[z,t,w]$ for arbitrary homogeneous weights by modifying the rightmost vertical parameter to $w_1=w e^{-2i\xi}$.
By the relation \eqref{inhomrela} applied to parameters $z_1=z_2=...=z_m=z$, $t_1=t_2=...=t_m=t$, $w_1=w e^{-2i\xi}$ and $w_2=w_3=...=w_m=w$, 
the corresponding partition function $Z_m^{20V}[z,t,w,\xi]$ of the 20V model is related to that $Z_n^{6V_1}[z,w,\xi]$ of the 6V model with horizontal spectral parameters $z$ 
and vertical spectral parameters $w$ except in the last column where $w\to w_1$, via
\begin{eqnarray}\label{20to6gen}
Z_{2n}^{20V_3}[z,t,w,\xi]&=&(a_2a_3)^{n(2n+1)} \,b_1^{n(n+1)}\, \left( \frac{b_1[\xi]}{b_1[0]}\right)^{n}\left( \frac{a_3[\xi]}{a_3[0]}\right)^{2n}\, Z_n^{6V_1}[z,w,\xi]\nonumber \\
Z_{2n-1}^{20V_3}[z,t,w,\xi]&=&(a_2a_3)^{n(2n-1)} \,b_1^{n(n-1)}\, \left( \frac{b_1[\xi]}{b_1[0]}\right)^{n-1}\left( \frac{a_3[\xi]}{a_3[0]}\right)^{2n-1}\, Z_n^{6V_1}[z,w,\xi] ,
\end{eqnarray}
where $a_i,b_i,c_i,\omega_i$ stand for $a_i[0],b_i[0],c_i[0],\omega_i[0]$ respectively.
Repeating the analysis of the refined partition functions for the 20V${}_3$ model, noting that $\omega_2[\xi] \neq \omega_4[\xi]$ in general,
we must now distinguish the contribution $Z_{m,k}^{20V_3\, \mbox{--}}[z,t,w]$ of configurations that reach the last vertical at position $k$ with a horizontal step from that $Z_{m,k}^{20V_3\, {\scriptscriptstyle{\diagdown}}}[z,t,w]$ of configurations that reach the last vertical with a diagonal step,
with $Z_{m,k}^{20V_3}[z,t,w]=Z_{m,k}^{20V_3\, \mbox{--}}[z,t,w]+Z_{m,k}^{20V_3\, {\scriptscriptstyle{\diagdown}}}[z,t,w]$. We find that
\begin{equation}\label{ref20ztw}
Z_{m}^{20V_3}[z,t,w,\xi]=\frac{\omega_2[\xi]\omega_0[0]}{\omega_0[\xi]\omega_2[0]}\, \left(\frac{\omega_0[\xi]}{\omega_0[0]}\right)^{m}\,  
\sum_{k=1}^m \left(Z_{m,k}^{20V_3\, {\scriptscriptstyle{\diagdown}}}[z,t,w] +\gamma\, Z_{m,k}^{20V_3\, \mbox{--}}[z,t,w]\right) \tau^{k-1} \ , 
\end{equation}
where
\begin{equation}\label{taugen}
\tau=\tau[\xi]:=\frac{\omega_1[\xi]\, \omega_0[0]}{\omega_0[\xi]\, \omega_1[0]}= \frac{\sin(\xi+\lambda-\eta)\sin(\lambda+\eta)\,\sin\left(\xi+{\scriptstyle \frac{\lambda-\eta+\mu}{2}}\right)\sin\left({\scriptstyle \frac{\lambda+3\eta+\mu}{2}}\right)}{\sin(\xi+\lambda+\eta)\sin(\lambda-\eta)\,\sin\left(\xi+{\scriptstyle \frac{\lambda+3\eta+\mu}{2}}\right)\sin\left({\scriptstyle \frac{\lambda-\eta+\mu}{2}}\right)}
\end{equation}
and
$$\gamma=\frac{\omega_4[\xi]\omega_2[0]}{\omega_2[\xi]\omega_4[0]} =\frac{\sin(\lambda-\eta)\sin\left(\xi+{\scriptstyle \frac{\lambda+3\eta+\mu}{2}}\right)}{\sin(\xi+\lambda-\eta)\sin\left({\scriptstyle \frac{\lambda+3\eta+\mu}{2}}\right)}$$

Similarly for the 6V${}_1$ model, we get
\begin{equation}\label{ref6zw}
Z_n^{6V_1}[z,w,\xi]=\frac{c_1[\xi]\,a_1[0]}{a_1[\xi]\,c_1[0]}\,\left(\frac{a_1[\xi]}{a_1[0]}\right)^n\left(\frac{\beta_1}{\sqrt{2}}\right)^{n^2}\,\sum_{k=1}^n Z_{n,k}^{6V_1}[z,w]\, \sigma^{k-1} ,
\end{equation}
where
\begin{equation}\label{siggen}\sigma=\sigma[\xi]:=\frac{b_1[\xi]\, a_1[0]}{a_1[\xi]\, b_1[0]}=\frac{\sin(\xi+\lambda-\eta)\,\sin(\lambda+\eta)}{\sin(\xi+\lambda+\eta)\,\sin(\lambda-\eta)} . \end{equation}

By a slight abuse of notation, we still denote by $Z_m^{20V_3}(\tau)$ and $Z_n^{6V_1}(\sigma)$ the refined partition functions with arbitrary homogeneous weights
\begin{eqnarray*}
Z_m^{20V_3}(\tau)&=&\sum_{k=1}^m \left(Z_{m,k}^{20V_3\, {\scriptscriptstyle{\diagdown}}}[z,t,w] +\gamma\, Z_{m,k}^{20V_3\, \mbox{--}}[z,t,w]\right) \, \tau^{k-1} \\
Z_n^{6V_1}(\sigma)&=&\sum_{k=1}^n Z_{n,k}^{6V_1}[z,w]\, \sigma^{k-1} .
\end{eqnarray*}
These are related via the following 
\begin{thm}\label{20to6genthm}
We have the relations:
\begin{eqnarray*}
Z_{2n}^{20V_3}(\tau)&=&(a_2a_3)^{n(2n+1)} \,b_1^{n(n+1)}\,\gamma\,  \sigma^n\,Z_n^{6V_1}(\sigma)\\
Z_{2n-1}^{20V_3}(\tau)&=&(a_2a_3)^{n(2n-1)} \,b_1^{n(n-1)}\,\gamma\,  \sigma^{n-1}\,Z_n^{6V_1}(\sigma)
\end{eqnarray*}
\end{thm}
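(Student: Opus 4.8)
The plan is to prove both relations at once by evaluating the one-column-inhomogeneous partition function $Z_m^{20V_3}[z,t,w,\xi]$ in two different ways and equating the results, regarding $\tau=\tau[\xi]$, $\sigma=\sigma[\xi]$ and $\gamma=\gamma[\xi]$ throughout as functions of the single auxiliary parameter $\xi$. On one hand, the refined decomposition \eqref{ref20ztw} already writes $Z_m^{20V_3}[z,t,w,\xi]$ as an explicit $\xi$-dependent prefactor times the generating polynomial $Z_m^{20V_3}(\tau)$. On the other hand, the integrability relation \eqref{20to6gen} reduces $Z_m^{20V_3}[z,t,w,\xi]$ to the one-column-inhomogeneous $6V_1$ partition function $Z_n^{6V_1}[z,w,\xi]$, which the decomposition \eqref{ref6zw} (with $Z_n^{6V_1}(\sigma)$ taken at the actual homogeneous weights $(a_1,b_1,c_1)$, so that the overall $6V$ normalization matches on both sides) in turn writes as another $\xi$-dependent prefactor times $Z_n^{6V_1}(\sigma)$. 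Equating and solving for $Z_m^{20V_3}(\tau)$ yields the claimed relation, provided the accumulated prefactor collapses to $(a_2a_3)^{n(2n+1)}b_1^{n(n+1)}\gamma\,\sigma^n$ (even) and $(a_2a_3)^{n(2n-1)}b_1^{n(n-1)}\gamma\,\sigma^{n-1}$ (odd).

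Concretely, I would substitute \eqref{ref6zw} into \eqref{20to6gen} and set this equal to \eqref{ref20ztw}. The purely combinatorial factors $(a_2a_3)^{n(2n+1)}$ and $b_1^{n(n+1)}$ (and their odd-size analogues) appear identically and pass through untouched, so the entire content of the theorem is the identity satisfied by the remaining ratio of $\xi$-shifted Boltzmann weights. I would organize this residual factor into an $n$-independent piece and a piece raised to the $n$-th (resp. $(n-1)$-th) power, using the explicit parametrized weights \eqref{eq:abc} and \eqref{eq:explweights} together with the key observation that the shift $w_1=we^{-2i\xi}$ only modifies the $w$-dependent factors $a_1,b_1,c_1,a_3,b_3$ and the normalization $\nu[\xi]=e^{-2i\xi}\nu$, while $a_2,b_2$ (hence the $\frac{\lambda+3\eta-\mu}{2}$ sines) are inert.

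The crux, and the step most prone to bookkeeping errors, is verifying that this residual prefactor factors exactly as $\gamma\,\sigma^{n}$ (even) or $\gamma\,\sigma^{n-1}$ (odd). I would establish two trigonometric identities. First, using \eqref{siggen}, the monomial $\left(\frac{\omega_0[0]}{\omega_0[\xi]}\right)^{2}\frac{b_1[\xi]}{b_1[0]}\left(\frac{a_3[\xi]}{a_3[0]}\right)^{2}\frac{a_1[\xi]}{a_1[0]}$ collapses to $\sigma$: the $e^{\pm i\xi}$ phases carried by $\nu[\xi],\beta_1[\xi],\beta_3[\xi]$ cancel, the $\sin\!\left(\cdot+\frac{\lambda+3\eta+\mu}{2}\right)$ factors shared between $\omega_0$ and $a_3$ cancel, and one is left with $\frac{\sin(\xi+\lambda-\eta)\sin(\lambda+\eta)}{\sin(\xi+\lambda+\eta)\sin(\lambda-\eta)}$. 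In the even case the powers $\omega_0^{2n},b_1^{n},a_3^{2n},a_1^{n}$ produced by \eqref{ref20ztw}, \eqref{20to6gen} and \eqref{ref6zw} are exactly those of $\sigma^n$, with no leftover; in the odd case the same matching yields $\sigma^{n-1}$ plus residual single powers of $\frac{a_1[\xi]}{a_1[0]}$, $\frac{a_3[\xi]}{a_3[0]}$ and $\frac{\omega_0[0]}{\omega_0[\xi]}$.

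Second, I would verify that the remaining $n$-independent factor equals $\gamma$. In the even case this amounts to $\frac{\omega_0[\xi]}{\omega_0[0]}\,\frac{c_1[\xi]a_1[0]}{a_1[\xi]c_1[0]}=\frac{\omega_4[\xi]}{\omega_4[0]}$, which combined with the $\frac{\omega_2[0]}{\omega_2[\xi]}$ coming from \eqref{ref20ztw} reproduces $\gamma=\frac{\omega_4[\xi]\omega_2[0]}{\omega_2[\xi]\omega_4[0]}$; in the odd case the residual single powers first recombine with the constant prefactors (the stray $\frac{\omega_0[0]}{\omega_0[\xi]}$ and $\frac{a_1[\xi]}{a_1[0]}$ cancelling against $\frac{\omega_0[\xi]}{\omega_0[0]}$ and the $a_1$ part of the $c_1$-factor), leaving $\frac{\omega_2[0]}{\omega_2[\xi]}\frac{c_1[\xi]}{c_1[0]}\frac{a_3[\xi]}{a_3[0]}$, and one uses instead $\frac{c_1[\xi]}{c_1[0]}\,\frac{a_3[\xi]}{a_3[0]}=\frac{\omega_4[\xi]}{\omega_4[0]}$ to again recover $\gamma$. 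The even/odd dichotomy is thus entirely a matter of these exponent counts, so once the two identities are in hand both statements follow simultaneously. The only genuine obstacle is the disciplined accounting of exponents and of which sine factors carry the $\xi$-shift; the cancellations themselves are forced by the fully factorized product form of the weights \eqref{eq:explweights}.
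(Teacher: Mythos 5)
Your proposal is correct and takes essentially the same route as the paper's proof: both substitute \eqref{ref6zw} into \eqref{20to6gen}, equate with \eqref{ref20ztw}, and verify that the accumulated $\xi$-dependent prefactor collapses to $\gamma\,\sigma^{n}$ (even) and $\gamma\,\sigma^{n-1}$ (odd). Your per-column identity and your relation $\frac{\omega_0[\xi]}{\omega_0[0]}\,\frac{c_1[\xi]\,a_1[0]}{a_1[\xi]\,c_1[0]}=\frac{\omega_4[\xi]}{\omega_4[0]}$ are merely a regrouping of the two identities the paper invokes, namely $\frac{\omega_0[\xi]}{\omega_0[0]}=\frac{a_1[\xi]a_3[\xi]}{a_1[0]a_3[0]}$ and $\frac{\omega_2[0]\,\omega_0[\xi]}{\omega_0[0]\,\omega_2[\xi]}\,\frac{c_1[\xi]\,a_1[0]}{a_1[\xi]\,c_1[0]}=\gamma$, both consequences of the factorizations $\omega_0=a_1a_2a_3$, $\omega_4=c_1a_2a_3$ and the inertness of $a_2$ under the shift of $w_1$, which you correctly identify.
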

\begin{proof}
Combining the even case of \eqref{20to6gen} with \eqref{ref20ztw} and \eqref{ref6zw}, we find:
\begin{eqnarray*}
Z_{2n}^{20V_3}(\tau)\!\!\!\!\!&=&\!\!\!\!\!(a_2a_3)^{n(2n+1)} \,b_1^{n(n+1)}\,\frac{\omega_2[0]\omega_0[\xi]}{\omega_0[0]\omega_2[\xi]}\frac{c_1[\xi]\,a_1[0]}{a_1[\xi]\,c_1[0]}\, \!\!\!\left(\frac{\omega_0[0]}{\omega_0[\xi]}\frac{a_1[\xi]}{a_1[0]}\frac{a_3[\xi]}{a_3[0]}\right)^{2n} \!\!\!\left( \frac{b_1[\xi]a_1[0]}{b_1[0]a_1[\xi]}\right)^{n}\, Z_n^{6V_1}(\sigma)\\
\!\!\!\!\!&=&\!\!\!\!\!(a_2a_3)^{n(2n+1)} \,b_1^{n(n+1)}\,\gamma\,  \sigma^n\,Z_n^{6V_1}(\sigma)
\end{eqnarray*}
by using $\frac{\omega_0[\xi]}{\omega_0[0]}=\frac{a_1[\xi]a_3[\xi]}{a_1[0]a_3[0]}$ and 
$\frac{\omega_2[0]\omega_0[\xi]}{\omega_0[0]\omega_2[\xi]}\frac{c_1[\xi]\,a_1[0]}{a_1[\xi]\,c_1[0]}=\gamma$. The odd case follows similarly.
\end{proof}

The above analysis suggests to introduce two refined one-point functions:
$$H_{m,k}^{20V_3\, {\scriptscriptstyle{\diagdown}}}[z,t,w]
=\left(\frac{\omega_0}{\omega_1}\right)^{k-1}\frac{Z_{m,k}^{20V_3\, {\scriptscriptstyle{\diagdown}}}[z,t,w]}{Z_{m}^{20V_3}[z,t,w]},  
\quad H_{m,k}^{20V_3\, \mbox{--}}[z,t,w]
=\left(\frac{\omega_0}{\omega_1}\right)^{k-1}\frac{Z_{m,k}^{20V_3\, \mbox{--}}[z,t,w]}{Z_{m}^{20V_3}[z,t,w]}$$
with $H_{m,k}^{20V_3}[z,t,w]=H_{m,k}^{20V_3\, {\scriptscriptstyle{\diagdown}}}[z,t,w]+H_{m,k}^{20V_3\, \mbox{--}}[z,t,w]$. 
Theorem \ref{20to6genthm} may be rephrased as:
\begin{equation}\label{htoh}
H_{m,k}^{20V_3\, {\scriptscriptstyle{\diagdown}}}[z,t,w]+\gamma H_{m,k}^{20V_3\, \mbox{--}}[z,t,w]=\left(\frac{\omega_0}{\omega_1}\right)^{k-1}\,\oint \frac{d\tau}{2i\pi\tau^k}\gamma\,  \sigma^{\lfloor \frac{m}{2}\rfloor} \,\frac{Z_{\lfloor \frac{m}{2}\rfloor}^{6V_1}(\sigma)}{Z_{\lfloor \frac{m}{2}\rfloor}^{6V_1}}
\end{equation}
where the contour integral around the origin picks up the coefficient of $\tau^{k-1}$ (
by interpreting $\gamma=\gamma[\xi]$ and $\sigma=\sigma[\xi]$ of Eqs. \eqref{taugen} and \eqref{siggen} as implicit series expansions of $\tau=\tau[\xi]$).

We are now ready to estimate $H_{m,k}^{20V}$ in the scaling limit where $k,m$ are large, while $\kappa=2k/m$ remains finite. 
We resort to the result\footnote{See also \cite{DF21V} for a compact proof.} of \cite{CP2010}
for the large $n=\lfloor \frac{m}{2}\rfloor$  asymptotics of the 6V one-point function 
$H_n^{6V_1}(\sigma)=\frac{Z_{\lfloor \frac{m}{2}\rfloor}^{6V_1}(\sigma)}{Z_{\lfloor \frac{m}{2}\rfloor}^{6V_1}}$:
$$H_n^{6V_1}(\sigma)\sim e^{n f(\sigma)} $$
where the function $f$ is the following implicit function of $\sigma=\sigma[\xi]$ parametrized by $\xi$:
\begin{equation}
f(\sigma[\xi])= {\rm Log}\left(\frac{\sin(\al(\lambda-\eta))\, \sin(\xi+\lambda-\eta)\,\sin(\al\xi)}{\al\, \sin(\lambda-\eta)\, \sin(\al(\xi+\lambda-\eta))\,\sin(\xi)} \right)
\end{equation}
and where $\al=\frac{\pi}{\pi-2\eta}$.

This exponential growth of $H_n^{6V_1}(\sigma)$ induces an exponential growth for the combination
$H_{m,k}^{20V_3\, {\scriptscriptstyle{\diagdown}}}[z,t,w]+\gamma\, H_{m,k}^{20V_3\, \mbox{--}}[z,t,w]$ in \eqref{htoh}.
As the parameter $\gamma$ is independent of $m$, this combination is dominated by either term or both at large $m$. 
We deduce that $H_{m,k}^{20V_3}[z,t,w]\sim H_{m,k}^{20V_3\, {\scriptscriptstyle{\diagdown}}}[z,t,w]+\gamma\, H_{m,k}^{20V_3\, \mbox{--}}[z,t,w]$
have the same leading behavior as $m,k,n\to \infty$ with $m\sim 2n$ and $k\sim \kappa n$:
$$H_{2n,n\kappa}^{20V_3}[z,t,w]\sim \oint \frac{d\tau}{2i\pi\tau} e^{n S_V(\kappa,\xi,\tau)}$$
where we have introduced the Vertex model action
\begin{equation}\label{SV}
S_V(\kappa,\xi,\tau)=f(\sigma[\xi]) +{\rm Log}(\sigma[\xi]) -\kappa \,{\rm Log}\left(\tau[\xi]\frac{\omega_1}{\omega_0}\right)
\end{equation}
The leading contribution to the integral occurs at the saddle-point 
$\partial_\xi S_V(\kappa,\xi,\tau[\xi])=0$, which is solved as a parametric expression for $\kappa$:
\begin{eqnarray}
\kappa&=&\kappa[\xi]=\frac{\tau[\xi]}{\partial_\xi \tau[\xi]} \left( \partial_\xi f(\sigma[\xi]) +\frac{\partial_\xi \sigma[\xi]}{\sigma[\xi]} \right)\nonumber \\
&=&\Big(2\cot(\xi+\lambda-\eta)-\cot(\xi)-\cot(\xi+\lambda+\eta)+\alpha\cot(\alpha\, \xi)
-\alpha\cot(\alpha(\xi+\lambda-\eta))\Big)\nonumber \\
&&\ \ \times
\frac{\sin(\xi+\lambda+\eta)\sin(\xi+\lambda-\eta)\sin\left(
\xi+\frac{\lambda-\eta+\mu}{2}\right)\sin\left(
\xi+\frac{\lambda+3\eta+\mu}{2}\right)}{\sin(2\eta)\left(\sin(\xi+\lambda+\eta)\sin(\xi+\lambda-\eta)+\sin\left(
\xi+\frac{\lambda-\eta+\mu}{2}\right)\sin\left(
\xi+\frac{\lambda+3\eta+\mu}{2}\right)\right)}
\label{kappagen}
\end{eqnarray}

\subsection{Schr\"oder path partition function and asymptotics}

Consider the partition function $Y_{k,r}[z,t,w]$ for a single Schr\"oder path with 20V weights in the positive 
quadrant of $\Z^2$, starting form the point $(0,k)$ 
and ending at point $(r,0)$. In the uniform combinatorial case (where all $\omega_i=1$), we have a trinomial coefficient sum:
$$Y_{k,r}[z,t,w]=\sum_{\ell=0}^{\min(k,r)} {k+r-\ell \choose k-\ell,r-\ell,\ell} .$$
In the scaling limit where $n\to \infty$ with $\kappa=k/n$ and $\rho=r/n$ fixed ($\kappa\in [0,2]$ and $\rho\in [0,\infty)$), 
and replacing the sum by an integral over $\theta=\ell/n$ we have the estimate:
$$Y_{k,r}[z,t,w] \sim  \int_0^{\min(\kappa,\rho)} d\theta e^{-n S_{path}(\kappa,\rho,\theta)}, $$
where
$$S_{P}(\kappa,\rho,\theta)= (\kappa+\rho-\theta){\rm Log}(\kappa+\rho-\theta)
-\theta{\rm Log}(\theta)-(\kappa-\theta){\rm Log}(\kappa-\theta)-(\rho-\theta){\rm Log}(\rho-\theta) .$$
For arbitrary weights, a simple transfer matrix calculation \cite{BDFG} allows to compute $Y_{k,r}[z,t,w]$ as a function of the quantities
\begin{equation}
\begin{split}
\al_1&= \frac{\omega_1}{\omega_0} ,\quad \al_2=\frac{\omega_6}{\omega_0},\quad \al_3=\frac{\omega_0\omega_3+\omega_4^2-\omega_1\omega_6}{\omega_0^2},\quad
\al_4=\frac{\omega_2^2-\omega_1\omega_3}{\omega_0^2},\\
\al_5&= \frac{\omega_5^2-\omega_6\omega_3}{\omega_0^2},\quad 
\al_6= \frac{2\omega_2\omega_4\omega_5+\omega_1\omega_6\omega_3-\omega_3\omega_4^2-\omega_1\omega_5^2-\omega_6\omega_2^2}{\omega_0^3}\ .\\
\end{split}
\label{eq:alphas}
\end{equation}
and to derive the large $n$ scaling estimate for $(k,r)/n=(\kappa,\rho)$ fixed:
$$Y_{n\kappa,n\rho} \sim \int_0^1 dp_3dp_4dp_5dp_6  {\rm e}^{n\, S_{P}(\kappa,\rho,p_3,p_4,p_5,p_6)} ,$$
where the path action $S_P$ reads
\begin{equation}
\begin{split}
S_{P}(\kappa,\rho,p_3,p_4,p_5,p_6)&=(\kappa+\rho-p_3-2p_4-2p_5-3p_6){\rm Log}(\kappa+\rho-p_3-2p_4-2p_5-3p_6)\\
&\!\!\!\!\!\!\!\!\!\!\!\!\!\!\!\!\!\!\!\!\!\!\!\!\!\!\!\!-(\kappa-p_3-2p_4-p_5-2p_6){\rm Log}\left(\frac{\kappa-p_3-2p_4-p_5-2p_6}{\al_1}\right)\\
&\!\!\!\!\!\!\!\!\!\!\!\!\!\!\!\!\!\!\!\!\!\!\!\!\!\!\!\!-(\rho-p_3-p_4-2p_5-2p_6)
{\rm Log}\left(\frac{\rho-p_3-p_4-2p_5-2p_6}{\al_2}\right)-\sum_{i=3}^6 p_i{\rm Log}\left(\frac{p_i}{\al_i}\right) . 
\label{eq:explS}
\end{split}
\end{equation}
As discussed in \cite{BDFG}, the estimate remains valid if any subset of the $\omega_i$, $i=3,4,5,6$ vanishes, in which case we must set the corresponding $p_i$ to zero
and integrate over the leftover ones. In particular we recover the uniform case where $\al_4=\al_5=\al_6=0$ while $\al_1=\al_2=\al_3=1$, and by identifying the leftover
integration variable $p_3=\theta$.

\subsection{Arctic curve I: main (NE) branch}

Putting together the asymptotic results for the vertex model one-point function and the path partition function, we find the most likely
exit point $k=\kappa^* n$ for which the contribution to the total partition function
$\sum_{k=1}^m H_{m,k}^{20V_3} Y_{k,r} $ is maximal. We use the large $n$ estimate with $m\sim 2n$ and $(k,r)\sim n(\kappa,\rho)$:
$$\sum_{k=1}^m H_{m,k}^{20V_3} Y_{k,r} \sim n \int_0^2 d\kappa \oint \frac{d\tau}{2i\pi\tau} \int_0^1 dp_3dp_4dp_5dp_6 e^{n S} ,$$
where the total action is $S=S_V(\kappa,\xi,\tau)+S_P(\kappa,\rho,p_3,p_4,p_5,p_6)$. We already found the leading behavior of the Vertex model one-point function
in the form of a parametric equation relating $\kappa$ and $\tau$ via \eqref{taugen} and \eqref{kappagen}. We must now 
write the saddle-point equations
\begin{equation}\label{sapoS}
\partial_{p_i}S=\partial_{p_i}S_P=0 \quad (i=3,4,5,6), \qquad \partial_{\kappa} S=\partial_\kappa S_P-{\rm Log}(\tau\al_1) =0 .
\end{equation}
These equations are identical to those of Ref.\cite{BDFG}.
In terms of rescaled variables $q_i=p_i/\kappa$, $s=\rho/\kappa$ these read:
\begin{eqnarray*}
\frac{\al_1 \al_2}{\al_3}&=&\frac{( q_3 + 2 q_4 + q_5 + 2 q_6-1) (q_3 + q_4 + 2 q_5 + 2 q_6 - s) }{q_3 (s+1- q_3 - 2 q_4 -2 q_5 - 3 q_6 )}\\ 
\frac{\al_1^2 \al_2}{\al_4}&=&\frac{( q_3 + 2 q_4 + q_5 + 2 q_6-1)^2 (s-q_3 - q_4 - 2 q_5 - 2 q_6)}{q_4 ( q_3 + 2 q_4 + 2 q_5 + 3 q_6 - s-1)^2}\\
\frac{\al_1 \al_2^2}{\al_5}&=&\frac{( q_3 + 2 q_4 + q_5 + 2 q_6-1) (s-q_3 - q_4 - 2 q_5 - 2 q_6)^2}{q_5 ( q_3 + 2 q_4 + 2 q_5 + 3 q_6 - s-1)^2}\\
\frac{\al_1^2 \al_2^2}{\al_6}&=&\frac{( q_3 + 2 q_4 + q_5 + 2 q_6-1)^2 (s-q_3 - q_4 - 2 q_5 - 2 q_6)^2}{q_6 ( 1+s-q_3 - 2 q_4 - 2 q_5 - 3 q_6 )^3}\\
\tau&=&\frac{q_3 + 2 q_4 + 2 q_5 + 3 q_6 - s-1}{q_3 + 2 q_4 + q_5 + 2 q_6-1}
\end{eqnarray*} 
Eliminating the $q_i$'s and using the parametrization \eqref{eq:explweights} of the weights we obtain a parametric relation between $s$ and $\tau$ in the form $\tau=\tau[\xi]$
as in \eqref{taugen} and \cite{BDFG}:
\begin{eqnarray}s^{-1}=s[\xi]^{-1}&:=&\frac{\sin(\xi+\lambda+\eta)\,\sin(\xi+\lambda-\eta)}{\sin(\xi)\,\sin(\xi+2\eta)}\nonumber \\
&&\qquad \times  \frac{\sin(\xi)\,\sin(\xi+2\eta)+\sin(\xi+{\scriptstyle \frac{\lambda-\eta+\mu}{2}})\,\sin(\xi+{\scriptstyle \frac{\lambda+3\eta+\mu}{2}})}{\sin(\xi+\lambda+\eta)\,\sin(\xi+\lambda-\eta)+\sin(\xi+{\scriptstyle \frac{\lambda-\eta+\mu}{2}})\,\sin(\xi+{\scriptstyle \frac{\lambda+3\eta+\mu}{2}})}
\label{slopegen}
\end{eqnarray}

This gives the following family\footnote{Note that the rescaled origin is at the SE corner of the rescaled domain, which is defined by $x+y\geq0$, $x\leq 0$, $y\leq 2$.} 
of tangent lines, through the rescaled points $(\rho[\xi],0)$ and $(0,\kappa[\xi])$ with $A[\xi]:=s^{-1}[\xi]=\kappa[\xi]/\rho[\xi]$:
$$ y+ A[\xi] x -\kappa[\xi]=0 .$$
The envelope of this family has the following parametric equation:
$$ X[\xi]=\frac{\kappa'[\xi]}{A'[\xi]} , \quad Y[\xi]= \kappa[\xi]-A[\xi]\,\frac{\kappa'[\xi]}{A'[\xi]} . $$ 
The last task is to determine the range of the parameter $\xi$. From the definition of the method, the relevant family of lines have slopes $-A[\xi]$ ranging from $0$ to $-\infty$.
From the explicit expression \eqref{slopegen}, this determines the range of the parameter $\xi$ to be $\xi\in [0,\pi-\eta-\lambda]$.
The results of this section are summarized in the following.
\begin{thm}\label{NEbranchthm}
The North-East (NE) portion of the arctic curve for the 20V${}_3$ model on large triangles $\cT_m$ is given by the following parametric equations:
$$X_{NE}[\xi]=\frac{\kappa'[\xi]}{A'[\xi]} , \quad Y_{NE}[\xi]= \kappa[\xi]-A[\xi]\,\frac{\kappa'[\xi]}{A'[\xi]} \qquad \xi \in [0,\pi-\eta-\lambda]. $$
with $\kappa[\xi]$ as in \eqref{kappagen} and $A[\xi]=s^{-1}[\xi]$ as in \eqref{slopegen}.
\end{thm}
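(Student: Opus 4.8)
The plan is to assemble the asymptotic ingredients already derived and then to carry out the final extremization and envelope construction demanded by the Tangent Method. Starting from the decomposition $\hat Z^{20V_3}_{m,r}[z,t,w]=\sum_{k=1}^m H_{m,k}^{20V_3}[z,t,w]\,Y_{k,r}[z,t,w]$, I would pass to the scaling limit $m\sim 2n$, $(k,r)\sim n(\kappa,\rho)$, and write the summand as $e^{nS}$ with total action $S=S_V(\kappa,\xi,\tau)+S_P(\kappa,\rho,p_3,p_4,p_5,p_6)$, using the estimate $H_{2n,n\kappa}^{20V_3}\sim \oint\frac{d\tau}{2i\pi\tau}e^{nS_V}$ built on the $6V$ one-point asymptotics of \cite{CP2010}, together with the transfer-matrix estimate for $Y$ from \cite{BDFG}. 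The Tangent Method then instructs me to locate the dominant contribution by a saddle point, which simultaneously fixes the exit height $\kappa$ and the emission slope of the detached path; the arctic curve will emerge as the envelope of the resulting family of tangent lines.

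First I would solve the saddle-point system \eqref{sapoS}. On the vertex side, stationarity of $S_V$ in $\xi$ has already produced the parametric exit height $\kappa=\kappa[\xi]$ of \eqref{kappagen}. On the path side, the equations $\partial_{p_i}S_P=0$ eliminate the auxiliary occupation variables $q_i=p_i/\kappa$ in favour of the single ratio $s=\rho/\kappa$, while the remaining condition $\partial_\kappa S_P=\log(\tau\al_1)$ ties the path emission to the vertex parameter $\tau$. The crucial consistency is that this $\tau$ must coincide with the $\tau=\tau[\xi]$ of \eqref{taugen} that parametrizes the one-point function; imposing this matching collapses all the saddle variables onto the single curve parameter $\xi$ and yields the slope $A[\xi]=s^{-1}[\xi]$ of \eqref{slopegen}. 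Concretely this produces the one-parameter family of tangent lines $y+A[\xi]\,x-\kappa[\xi]=0$, each passing through the rescaled exit point $(0,\kappa[\xi])$ and the rescaled displaced endpoint $(\rho[\xi],0)$, with slope $-A[\xi]$.

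The arctic curve is then the envelope of this family, obtained by solving $F(x,y,\xi)=y+A[\xi]\,x-\kappa[\xi]=0$ together with $\partial_\xi F=A'[\xi]\,x-\kappa'[\xi]=0$. The second equation gives $X_{NE}[\xi]=\kappa'[\xi]/A'[\xi]$, and substituting back into $F=0$ gives $Y_{NE}[\xi]=\kappa[\xi]-A[\xi]\,\kappa'[\xi]/A'[\xi]$, exactly the claimed parametric equations. To finish, I would fix the parameter range: the Tangent Method requires the emission slope $-A[\xi]$ to sweep the admissible range from $0$ to $-\infty$, and reading off the zeros and poles of \eqref{slopegen} within the Disordered-phase window pins the interval down to $\xi\in[0,\pi-\eta-\lambda]$.

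The step I expect to be the main obstacle is the elimination of the four occupation variables $q_3,\dots,q_6$ from the coupled polynomial saddle system and its reduction to the compact slope formula \eqref{slopegen}: this is the algebraically heavy manipulation, and it hinges on the explicit trigonometric parametrization \eqref{eq:explweights} of the weights together with the identification of $\tau$ across the two actions. A secondary point requiring care is the justification that the located saddle is genuinely the dominant one and that its envelope, rather than a spurious branch of the critical set, is the physical outer arctic boundary; I would argue this by continuity from the combinatorial point and by checking that the slopes $-A[\xi]$ vary monotonically over the asserted interval.
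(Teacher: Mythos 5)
Your proposal follows essentially the same route as the paper: the same decomposition $\hat Z^{20V_3}_{m,r}=\sum_k H^{20V_3}_{m,k}Y_{k,r}$, the same total action $S=S_V+S_P$ with the saddle-point system \eqref{sapoS} (whose path part is solved by elimination of the $q_i$'s exactly as in the paper, which delegates this algebra to \cite{BDFG}), the identification $\tau=\tau[\xi]$ linking the two actions, the tangent family $y+A[\xi]x-\kappa[\xi]=0$, its envelope, and the range $\xi\in[0,\pi-\eta-\lambda]$ fixed by requiring $A[\xi]\in[0,\infty)$. Your closing caveats about saddle dominance and monotonicity of the slope are sensible additions but do not alter the argument, which matches the paper's proof.
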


\subsection{Arctic curve II: other branches}

In this section, we show how to use the transformations of Section \ref{transfosec} to derive other branches of the arctic curve for the 20V${}_3$ model.

\begin{figure}
\begin{center}
\includegraphics[width=13cm]{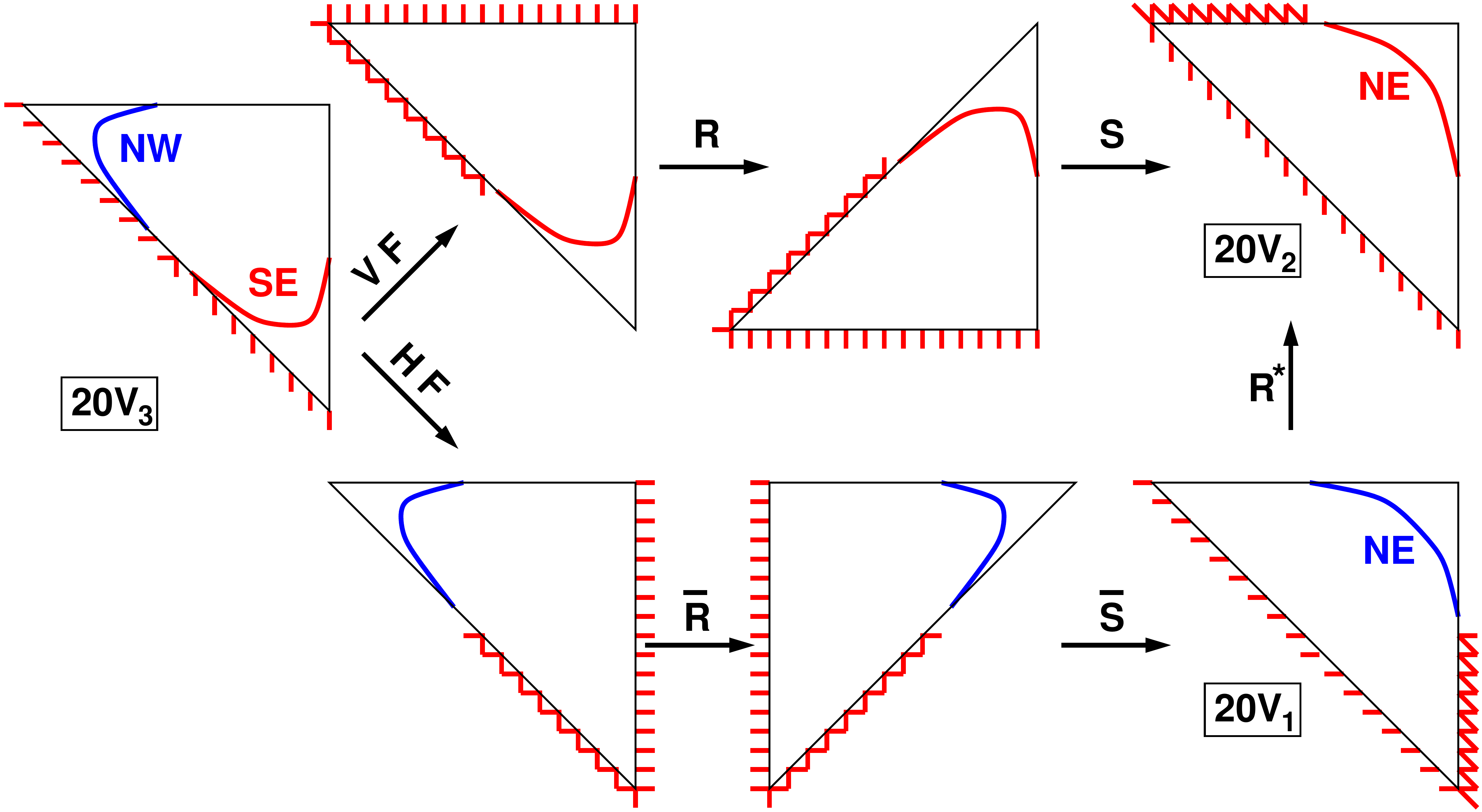}
\end{center}
\caption{\small Transformations of the 20V${}_3$ model into the 20V${}_2$ and 20V${}_1$ models. We indicate how the SE branch of the arctic curve (in red on the left)
for the 20V${}_3$ is mapped onto the NE branch of the 20V${}_2$ model (top right), as well as how the NW branch (in blue on the left) is mapped 
onto the NE branch of the 20V${}_1$ model (bottom right). The latter two are related via the diagonal reflection $\bf R^*$.}
\label{fig:branches}
\end{figure}

\subsubsection{The SE branch}

Let us consider the transformations $\bf VF$, $\bf R$ and $\bf S$ applied successively to the configurations of the 20V${}_3$ model (see Fig. \ref{fig:shear}).
First note that if we apply the transformation $\bf VF$ alone (see Fig. \ref{fig:shear} (b)), we create a domain empty of path steps in the SE corner of the triangle $\cT_m$.
The large $m$ asymptotic boundary of this new crystalline phase tends to another branch of the arctic curve of the 20V${}_3$ model, which we call South-East branch (SE),
and which, in the original 20V${}_3$ model, is the boundary of a crystalline phase with only vertical paths. 
As indicated in Fig. \ref{fig:branches} (top), under the further transformation $\bf S\circ \bf R$, this crystalline region is mapped to another region empty of paths situated in the NE corner of the 
20V${}_2$ configuration domain. So the SE branch of the arctic curve of the 20V${}_3$ model is the inverse image 
(i.e. under $\bf R \circ \bf S^{-1}$, now acting on the rescaled domain) of the 
NE branch of the arctic curve of the 20V${}_2$ model, with weights $w_{\pi(i)}[t,z,w]$ as noted in Sect. \ref{transfosec}.

The latter is obtained by applying the tangent method in the {\it same} way as we did for the NE branch of the 20V${}_3$ model. To this end, we need estimates of the one-point function $H^{(2)}_{m,k}$ and of the path partition function 
$Y^{(2)}_{k,r}$ of the 20V${}_2$ model. The one point function $H^{(2)}_{m,k}$ is just a reinterpretation of that of the 20V${}_3$ model, $H_{m,k}^{20V_{3}}$.

Under the transformation $\bf S\circ \bf R\circ \bf VF$ the original exit point $k$
is mapped onto $m+1-k$ (as it is measured upside-down). We deduce that $H_{m,k}=H_{m,m+1-k}^{20V_{3}}$,
and the large $n$ estimate $H^{(2)}_{2n,n\kappa_2}\sim H_{2n,(2-\kappa_2)n}^{20V_{3}}$ is dominated by the saddle-point of the action $S_V^{(2)}(2-\kappa_2,\xi,\tau)$, with $S_V$ as in \eqref{SV}, leading to $\kappa_2[\xi]=2-\kappa[\xi]$ 
with $\kappa[\xi]$ as in \eqref{kappagen}. As before this is an implicit relation between $\kappa_2$ and $\tau$, upon
parameterizing $\tau=\tau[\xi]$ as in \eqref{taugen}.

On the other hand, the path partition function $Y^{(2)}_{k,r}$ is obtained
by substituting the new path weights $w_{\pi(i)}[t,z,w]$ into the expression for the original path partition function $Y_{m+1-k,r}$.
The following shows how to implement the transformation \eqref{netose} on the 20V weights. 
\begin{lemma}\label{hatlem}
The transformation $\bf S\circ \bf R \circ \bf VF$ on the weights of the 20V model is implemented 
by the following involutive transformation of the parameters 
$(\lambda,\mu,\xi)\mapsto (\hat \lambda,\hat \mu,\hat \xi)$, where:
\begin{equation} \hat \lambda=\pi -\frac{\lambda+\eta+\mu}{2}, \quad \hat \mu=\pi- \frac{3\lambda+\eta-\mu}{2}, \quad \hat \xi=-\xi ,
\end{equation}
namely we have $\omega_i[z,t,w,\xi]\big\vert_{(\lambda,\mu,\xi)\to (\hat \lambda,\hat \mu,\hat \xi)}=\omega_{\pi(i)}[t,z,w,\xi]$ with the permutation $\pi=(01)(24)$.
\end{lemma}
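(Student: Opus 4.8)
The plan is to reduce the Lemma to a purely trigonometric verification on the explicit sine-form weights \eqref{eq:explweights}. From Section \ref{transfosec} we already know, at the level of configurations, that $\mathbf{S}\circ\mathbf{R}\circ\mathbf{VF}$ permutes the seven vertex classes by $\pi=(01)(24)$ while interchanging the horizontal and diagonal spectral lines, i.e. $\omega_i[z,t,w]\mapsto\omega_{\pi(i)}[t,z,w]$. The content of the Lemma is that, once the vertical line carries the refinement parameter $\xi$ (through $w\to w\,e^{-2{\rm i}\xi}$), this interchange is realized on the angle parameters by the stated map $(\lambda,\mu,\xi)\mapsto(\hat\lambda,\hat\mu,\hat\xi)$. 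First I would record, for each weight in \eqref{eq:explweights}, its factorization into the six elementary half-angles $\lambda\pm\eta$, $\tfrac{\lambda+3\eta\mp\mu}{2}$, $\tfrac{\lambda-\eta\pm\mu}{2}$ and the common factor $\sin(2\eta)$, and note which of these carry the shift $+\xi$: precisely the four ``$w$-coupled'' angles coming from the sublattices $1$ and $3$, namely $\lambda+\eta$, $\lambda-\eta$, $\tfrac{\lambda+3\eta+\mu}{2}$, $\tfrac{\lambda-\eta+\mu}{2}$, get shifted, while the sublattice-$2$ angles $\tfrac{\lambda+3\eta-\mu}{2}$, $\tfrac{\lambda-\eta-\mu}{2}$ and $\sin(2\eta)$ are unaffected.

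Next I would substitute $(\lambda,\mu,\xi)\to(\hat\lambda,\hat\mu,\hat\xi)$ and simplify each angle using $\sin(\pi-x)=\sin x$. A direct computation yields $\hat\lambda+\eta=\pi-\tfrac{\lambda-\eta+\mu}{2}$, $\hat\lambda-\eta=\pi-\tfrac{\lambda+3\eta+\mu}{2}$, $\tfrac{\hat\lambda+3\eta-\hat\mu}{2}=\tfrac{\lambda+3\eta-\mu}{2}$, $\tfrac{\hat\lambda-\eta-\hat\mu}{2}=\tfrac{\lambda-\eta-\mu}{2}$, $\tfrac{\hat\lambda+3\eta+\hat\mu}{2}=\pi-(\lambda-\eta)$ and $\tfrac{\hat\lambda-\eta+\hat\mu}{2}=\pi-(\lambda+\eta)$; combined with $\hat\xi=-\xi$, this shows that the induced action on the elementary sines is exactly the involution swapping the shifted angles $\lambda+\eta\leftrightarrow\tfrac{\lambda-\eta+\mu}{2}$ and $\lambda-\eta\leftrightarrow\tfrac{\lambda+3\eta+\mu}{2}$ while fixing $\tfrac{\lambda+3\eta-\mu}{2}$, $\tfrac{\lambda-\eta-\mu}{2}$ and $\sin(2\eta)$. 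Feeding this swap into the seven products of \eqref{eq:explweights} then reproduces $\pi=(01)(24)$ class by class: $\omega_0\leftrightarrow\omega_1$, $\omega_2\leftrightarrow\omega_4$, and $\omega_3,\omega_5,\omega_6$ fixed.

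I expect the only genuine bookkeeping to be the two-term weight $\omega_3=\nu(\sin(2\eta)^3+\sin(\lambda+\eta)\sin(\tfrac{\lambda-\eta+\mu}{2})\sin(\tfrac{\lambda-\eta-\mu}{2}))$, where I must check that \emph{both} the $\sin(2\eta)^3$ term and the product term are separately invariant under the swap (the product term transforms as $\sin(\lambda+\eta)\sin(\tfrac{\lambda-\eta+\mu}{2})\mapsto\sin(\tfrac{\lambda-\eta+\mu}{2})\sin(\lambda+\eta)$ with the fixed factor $\sin(\tfrac{\lambda-\eta-\mu}{2})$), and that all reflections are of the sign-free form $\pi-(\text{positive})$ so that no extra minus signs appear. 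Since the normalization $\nu$ and the overall $\xi$-dependent prefactor produced in passing from \eqref{weights20V} to the sine form are common to all seven weights, the per-class trigonometric identities assemble into the stated equalities. Finally I would verify that the parameter map is an involution by direct substitution, using $\tfrac{\hat\lambda+\eta+\hat\mu}{2}=\pi-\lambda$ and $\tfrac{3\hat\lambda+\eta-\hat\mu}{2}=\pi-\mu$ to get $\hat{\hat\lambda}=\lambda$, $\hat{\hat\mu}=\mu$, $\hat{\hat\xi}=\xi$, consistent with the involutive nature of $\mathbf{S}\circ\mathbf{R}\circ\mathbf{VF}$.
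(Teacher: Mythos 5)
Your proposal is correct: all the angle identities you list (e.g. $\hat\lambda+\eta=\pi-\frac{\lambda-\eta+\mu}{2}$, $\frac{\hat\lambda+3\eta+\hat\mu}{2}=\pi-(\lambda-\eta)$, and their $\xi$-shifted counterparts via $\hat\xi=-\xi$) check out, and feeding the resulting swap of elementary sines into \eqref{eq:explweights} does reproduce $\pi=(01)(24)$, including the invariance of the two-term weight $\omega_3$. This is exactly the paper's approach — its proof is literally ``by inspection, using the weights \eqref{eq:explweights}'' — and you have simply written out that inspection in full.
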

\begin{proof} By inspection, using the weights \eqref{eq:explweights}.
\end{proof}
Let us use the notation $\hat f=f\vert_{(\lambda,\mu,\xi)\to  (\hat \lambda,\hat \mu,\hat \xi)}$. The total action
is $S_V^{(2)}(\kappa_2,\tau,\xi)+{\hat S_P}(\kappa_2,\rho_2,p_3,p_4,p_5,p_6)$ with $S_P$ as in \eqref{eq:explS}, and where applying the involution $S_P\to \hat S_P$ of Lemma \ref{hatlem} simply amounts to permute the weights $\omega_i\to\omega_{\pi(i)}$ in the expressions \eqref{eq:alphas}. The saddle point equations \eqref{sapoS} become:
$$\partial_{p_i}\, \hat S_P=0\quad (i=3,4,5,6), \qquad \partial_{\kappa_2} \hat S_P+{\rm Log}(\tau \al_1)=0 .$$
From the expression \eqref{taugen}, and the fact that the involution interchanges the weights $\omega_0[\xi]\leftrightarrow \omega_1[\xi]$, we see  that $\hat \tau=\tau^{-1}$ as well as $\hat \al_1=\al_1^{-1}$. 
This allows to rewrite the last equation as 
$\partial_{\kappa_2} \hat S_P-{\rm Log}(\hat \tau \hat \al_1)=0$ henceforth the new saddle point equations are simply the
 transform of the original ones \eqref{sapoS}. The new solution is therefore the transform of the previous one, and in particular $s_2=\kappa_2/\rho_2=\hat s[\xi]$ with $s[\xi]$ given by \eqref{slopegen}. Defining as before the inverse slope $A_2=s_2^{-1}$, 
we end up with the family of tangent lines $y+A_2[\xi] x-\kappa_2[\xi]=0$, with  $\kappa_2[\xi]=2-\kappa[\xi]$ and $A_2[\xi]=\hat s[\xi]^{-1}=\hat A[\xi]$, and the NE branch of the arctic curve for the 20V${}_2$ model
is given in parametric form by 
\begin{equation}\label{arctic2} X[\xi]=\frac{\kappa_2'[\xi]}{A_2'[\xi]} =-\frac{\kappa'[\xi]}{\hat A'[\xi]}, \quad Y[\xi]= \kappa_2[\xi]-A_2[\xi]\,\frac{\kappa_2'[\xi]}{A_2'[\xi]}=2-\kappa[\xi]+\hat A[\xi]\,\frac{\kappa'[\xi]}{\hat A'[\xi]} . 
\end{equation}
Note finally that the range of $\xi$ is mapped onto $\xi \in [-\frac{\lambda-\eta+\mu}{2},0]$.
We must finally apply the inverse transformation ${\bf R \circ \bf S^{-1}}:\, (x,y)\mapsto (x,2-x-y)$ (in rescaled variables) to recover the SE branch for the 20V${}_3$ model, resulting in the following.
\begin{thm}\label{SEbranchthm}
The SE branch of the arctic curve for the 20V${}_3$ model is given in parametric form by:
$$X_{SE}[\xi]=-\frac{\kappa'[\xi]}{\hat A'[\xi]} , \quad Y_{SE}[\xi]= \kappa[\xi]-(\hat A[\xi]-1)\,\frac{\kappa'[\xi]}{\hat A'[\xi]} \qquad \xi \in \left[-\frac{\lambda-\eta+\mu}{2},0\right] ,$$
where the hat refers to the transformation of Lemma \ref{hatlem}, and $\kappa[\xi],A[\xi]=s[\xi]^{-1}$ are as in \eqref{kappagen} and \eqref{slopegen}.
\end{thm}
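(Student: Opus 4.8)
The plan is to derive the SE branch by transporting the entire NE-branch argument of Theorem~\ref{NEbranchthm} through the bijection $\bf S\circ R\circ VF$ of Section~\ref{transfosec}, and then mapping the resulting curve back to the original domain. The starting observation, already recorded above, is that $\bf VF$ alone empties the SE corner of $\cT_m$ and that under the full $\bf S\circ R$ this empty region becomes the NE corner of the 20V${}_2$ domain. Hence the SE branch of 20V${}_3$ is the image, under the rescaled inverse map $\bf R\circ S^{-1}:(x,y)\mapsto(x,2-x-y)$, of the NE branch of the 20V${}_2$ model, and it suffices to compute the latter by the tangent method with the permuted weights $\omega_{\pi(i)}[t,z,w]$.

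First I would assemble the two asymptotic inputs for the tangent method applied to 20V${}_2$. For the one-point function I use $H^{(2)}_{m,k}=H_{m,m+1-k}^{20V_3}$, which holds because $\bf S\circ R\circ VF$ sends the exit height $k$ to $m+1-k$; matching $H^{(2)}_{2n,n\kappa_2}\sim H_{2n,(2-\kappa_2)n}^{20V_3}$, the $\xi$-saddle of $S_V$ of \eqref{SV} yields $\kappa_2[\xi]=2-\kappa[\xi]$ with $\kappa[\xi]$ as in \eqref{kappagen}, still parametrized by $\tau=\tau[\xi]$ of \eqref{taugen}. For the single-path partition function I substitute $\omega_{\pi(i)}[t,z,w]$ into the path action \eqref{eq:explS}, which by Lemma~\ref{hatlem} is exactly the parameter involution $(\lambda,\mu,\xi)\mapsto(\hat\lambda,\hat\mu,\hat\xi)$, $\hat\xi=-\xi$, applied to the coefficients \eqref{eq:alphas}.

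The key step, and the one demanding the most care, is to show that the whole combined saddle-point system is invariant in form under this involution. Since the involution interchanges $\omega_0[\xi]\leftrightarrow\omega_1[\xi]$, one reads off from \eqref{taugen} and \eqref{eq:alphas} that $\hat\tau=\tau^{-1}$ and $\hat\al_1=\al_1^{-1}$; the distinguished equation of the 20V${}_2$ system, $\partial_{\kappa_2}\hat S_P+{\rm Log}(\tau\al_1)=0$ (the extra sign coming from the $2-\kappa_2$ substitution in $S_V$), then rewrites as $\partial_{\kappa_2}\hat S_P-{\rm Log}(\hat\tau\hat\al_1)=0$, i.e.\ precisely the hatted image of the corresponding equation in \eqref{sapoS}. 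The remaining equations $\partial_{p_i}\hat S_P=0$ are manifestly the hatted transforms of the original ones, so the entire solution is the hat-conjugate of the 20V${}_3$ solution; in particular the inverse slope is $A_2[\xi]=\hat A[\xi]=\hat s[\xi]^{-1}$ with $s[\xi]$ as in \eqref{slopegen}. Feeding $\kappa_2[\xi]$ and $A_2[\xi]$ into the tangent-line family $y+A_2[\xi]\,x-\kappa_2[\xi]=0$ and the standard envelope formulas produces the NE arctic curve \eqref{arctic2} of 20V${}_2$.

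Finally I would apply the rescaled map $\bf R\circ S^{-1}:(x,y)\mapsto(x,2-x-y)$ to \eqref{arctic2}. The $x$-coordinate is unchanged, giving $X_{SE}=-\kappa'[\xi]/\hat A'[\xi]$, while $Y_{SE}=2-X-Y$ collapses directly to $\kappa[\xi]-(\hat A[\xi]-1)\,\kappa'[\xi]/\hat A'[\xi]$, as claimed, since the affine change is merely a coordinate substitution and does not reparametrize $\xi$. The interval $\xi\in[-\tfrac{\lambda-\eta+\mu}{2},0]$ is then inherited from the NE-branch-of-20V${}_2$ range, itself fixed by requiring the slopes $-A_2[\xi]=-\hat A[\xi]$ to sweep $[0,-\infty)$ exactly as in the NE case. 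The main obstacle is the form-invariance claim of the third paragraph: one must verify that $\hat\tau=\tau^{-1}$ and $\hat\al_1=\al_1^{-1}$ render the full system \eqref{sapoS}—not merely its distinguished equation—self-conjugate under the hat, so that the slope is genuinely $\hat s[\xi]$; granting this, the envelope computation and the affine coordinate change are routine.
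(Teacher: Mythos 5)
Your proposal is correct and follows essentially the same route as the paper's own proof: the reduction of the SE branch to the NE branch of the 20V${}_2$ model via $\bf S\circ R\circ VF$, the inputs $\kappa_2[\xi]=2-\kappa[\xi]$ and the hatted path weights of Lemma \ref{hatlem}, the observation $\hat\tau=\tau^{-1}$, $\hat\al_1=\al_1^{-1}$ rendering the saddle-point system hat-conjugate so that $A_2[\xi]=\hat A[\xi]$, and the final affine map $(x,y)\mapsto(x,2-x-y)$. The envelope computation, the sign bookkeeping in the distinguished saddle-point equation, and the parameter range $\xi\in\left[-\frac{\lambda-\eta+\mu}{2},0\right]$ all match the paper's argument.
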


\subsubsection{The NW branch}

Applying the transformation $\bf H F$ to configurations of the 20V${}_3$ model creates a domain empty of path steps in the NW corner of the triangle $\cT_m$ (see Fig. \ref{fig:branches}).
The large $m$ asymptotic boundary of this new crystalline phase tends to yet another branch of the arctic curve of the 20V${}_3$ model, which we call North-West branch (NW),
and which corresponds in the original 20V${}_3$ model to the boundary of a crystalline phase where all paths are horizontal.
Recall that the transformation $\bf \bar S \circ \bf \bar R\circ \bf H F$ on 20V${}_3$ configurations produces the 20V${}_1$ configurations
with transformed weights $\omega_{\bar \pi(i)}$, which are the diagonal reflection $\bf R^*$ of the configurations of the 20V${}_2$ model with weights $\omega_{\pi(i)}$.
The NW branch of the arctic curve for the 20V${}_3$ model is therefore the inverse image (i.e. under $\bf \bar R\circ \bf \bar S^{-1}$, now acting on the rescaled domain) 
of the NE branch of the arctic curve of the 20V${}_1$ model, with weights $\omega_{\bar\pi(i)}[z,w,t]$. The latter can be directly derived from that of the 20V${}_2$ model, by
applying the diagonal reflection $\bf R^*$.
\begin{lemma}\label{barlem}
The transformation $\bf \bar S\circ \bf \bar R \circ \bf HF$ on the weights of the 20V model is implemented by the following transformation of the parameters 
$(\lambda,\mu,\xi)\mapsto (\bar \lambda,\bar \mu,\bar \xi)$, where:
\begin{equation} \bar \lambda=\pi -\frac{\lambda+\eta-\mu}{2}, \quad \bar \mu=\pi- \frac{3\lambda+\eta+\mu}{2}, \quad \bar \xi=-\xi ,
\end{equation}
namely we have $\omega_i[z,t,w,\xi]\big\vert_{(\lambda,\mu,\xi)\to (\bar \lambda,\bar \mu,\bar \xi)}=\omega_{\bar\pi(i)}[w,t,z,\xi]$  with the permutation $\bar\pi=(16)(25)$.
\end{lemma}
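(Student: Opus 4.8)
\emph{Approach.} The statement is of the same ``by inspection'' type as Lemma \ref{hatlem}, and rather than repeat that computation I would reduce to it. The factorization ${\bf \bar S}\circ{\bf \bar R}\circ{\bf HF}={\bf R}^*\circ{\bf S}\circ{\bf R}\circ{\bf VF}$ recorded in Section \ref{transfosec} shows that, modulo Lemma \ref{hatlem} (which already realizes ${\bf S}\circ{\bf R}\circ{\bf VF}$ through $(\lambda,\mu,\xi)\mapsto(\hat\lambda,\hat\mu,\hat\xi)$), all that is new is to implement the residual diagonal reflection ${\bf R}^*$ together with its weight relabeling \eqref{setonw}. The guiding observation is that $(\bar\lambda,\bar\mu,\bar\xi)$ is obtained from $(\hat\lambda,\hat\mu,\hat\xi)$ by the single involution $\mu\mapsto-\mu$: one checks $\hat\lambda|_{\mu\to-\mu}=\pi-\tfrac{\lambda+\eta-\mu}{2}=\bar\lambda$, likewise $\hat\mu|_{\mu\to-\mu}=\bar\mu$, while $\bar\xi=\hat\xi=-\xi$. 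Thus the whole content beyond Lemma \ref{hatlem} is that reversing the diagonal spectral parameter $t=e^{i\mu}$, combined with the interchange of the horizontal and vertical lines, reproduces ${\bf R}^*$ on the weights.

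\emph{Computation.} I would carry this out directly on \eqref{eq:explweights}. First I record how the substitution acts on the seven elementary arguments occurring there, namely $\lambda\pm\eta$, $\tfrac{\lambda+3\eta\mp\mu}{2}$, $\tfrac{\lambda-\eta\pm\mu}{2}$ and $2\eta$. Using only $\sin(\pi-x)=\sin x$ one finds for instance $\sin(\bar\lambda+\eta)=\sin\tfrac{\lambda-\eta-\mu}{2}$, $\sin(\bar\lambda-\eta)=\sin\tfrac{\lambda+3\eta-\mu}{2}$, $\sin\tfrac{\bar\lambda-\eta+\bar\mu}{2}=\sin(\lambda+\eta)$ and $\sin\tfrac{\bar\lambda+3\eta+\bar\mu}{2}=\sin(\lambda-\eta)$, with $\sin(2\eta)$ fixed; the embedded step $\mu\mapsto-\mu$ interchanges the sublattice-$2$ and sublattice-$3$ arguments (those carrying $\mp\mu$), which is the trigonometric signature of interchanging the two lines flanking the diagonal one. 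Feeding this table into each $\omega_i$ and regrouping the three sine factors then matches $\omega_i|_{\bar{}}$ with a single weight of the original family, which gives the relabeling; the lone weight requiring separate care is $\omega_3$, whose two summands $\sin^3(2\eta)$ and $\sin(\lambda+\eta)\sin\tfrac{\lambda-\eta+\mu}{2}\sin\tfrac{\lambda-\eta-\mu}{2}$ must each be tracked back to the two summands of the target weight.

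\emph{Main obstacle.} The remaining work is purely combinatorial bookkeeping, and that is where I expect the only genuine difficulty. On the right-hand side the lines are relabeled $[z,t,w]\mapsto[w,t,z]$, and since $\xi$ enters \eqref{eq:explweights} only through the vertical ($w$)-line arguments while $\bar\xi=-\xi$, one must verify that this relabeling transports the spectral shift onto the new vertical line so as to produce $\omega_{\bar\pi(i)}[w,t,z,\xi]$. The delicate point is to keep the horizontal/diagonal/vertical conventions and the placement of $\xi$ consistent on both sides, so that the factor-permutation produced by the substitution, once composed with the line interchange, collapses exactly onto the claimed involution $\bar\pi=(16)(25)$ of \eqref{setonw} (with $\omega_0,\omega_3,\omega_4$ fixed). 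As a safeguard against sign or labeling slips I would specialize all parameters to the combinatorial point and check the identity against the explicit weights \eqref{inhoweights}.
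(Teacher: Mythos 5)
Your proposal is correct, and it is worth noting how it relates to the paper's proof, which consists of the single line ``by inspection, using the weights \eqref{eq:explweights}''. Your organizing reduction — factor ${\bf \bar S}\circ{\bf \bar R}\circ{\bf HF}={\bf R}^*\circ{\bf S}\circ{\bf R}\circ{\bf VF}$ and observe that $(\bar\lambda,\bar\mu,\bar\xi)=(\hat\lambda,\hat\mu,\hat\xi)\big\vert_{\mu\to-\mu}$ — is a genuine reorganization of the same computation: the paper proves Lemma \ref{barlem} directly and only afterwards extracts the $\mu\mapsto-\mu$ content as Corollary \ref{barcor}, whereas you run the logic in the other direction. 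This buys you roughly half the trigonometric work (the hatted identities are already certified by Lemma \ref{hatlem}) at the cost of having to verify the ${\bf R}^*$ step independently — and you correctly avoid circularity by proposing to check it directly on \eqref{eq:explweights} rather than citing Corollary \ref{barcor}, which in the paper's logical order is a consequence of the very lemma being proved. Your composition identity checks out ($\hat\lambda\vert_{\mu\to-\mu}=\pi-\frac{\lambda+\eta-\mu}{2}=\bar\lambda$, and likewise for $\bar\mu$), as do all four sample substitution identities, e.g.\ $\bar\lambda+\eta=\pi-\frac{\lambda-\eta-\mu}{2}$ gives $\sin(\bar\lambda+\eta)=\sin\frac{\lambda-\eta-\mu}{2}$, and $\bar\lambda+\bar\mu=2\pi-2\lambda-\eta$ gives $\sin\frac{\bar\lambda-\eta+\bar\mu}{2}=\sin(\lambda+\eta)$; the $\omega_3$ case you single out indeed works, since the substitution merely permutes the three sine factors of the second summand of $\omega_3$ among themselves while fixing $\sin^3(2\eta)$. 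The ``main obstacle'' you flag is real and is in fact the whole subtlety of the statement: applying the bar substitution to \eqref{eq:explweights} with the line labels held fixed produces the $3$-cycles $(0\,6\,1)(2\,4\,5)$ on the weight classes (one finds $\bar\omega_0=\omega_6$, $\bar\omega_1=\omega_0$, $\bar\omega_6=\omega_1$, etc., as functions of the original $\lambda,\mu$), which is precisely ${\bf R}^*\circ{\bf S}\circ{\bf R}\circ{\bf VF}$ acting on fixed-role labels; it collapses to the claimed involution $\bar\pi=(16)(25)$ of \eqref{setonw} only once the relabeling $[z,t,w]\mapsto[w,t,z]$ of horizontal and vertical lines — and with it the transport of the spectral shift $\xi$ onto the new vertical line — is taken into account. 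Your plan explicitly tracks exactly this bookkeeping, and your combinatorial-point sanity check via \eqref{inhoweights} (where the substitution reduces to $\xi\to-\xi$ and $\bar\lambda=\lambda$, $\bar\mu=\mu$) is a sound safeguard, so the proposal fills in, rather than skirts, the content the paper compresses into ``by inspection''.
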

\begin{proof} By inspection, using the weights \eqref{eq:explweights}.
\end{proof}
As a consequence, we may deduce the NE branch of the 20V${}_1$ model from that of the 20V${}_2$ model by the diagonal reflection $\bf R^*$.
\begin{cor}\label{barcor}
The transformation $\bf R^*$ on the weights of the 20V${}_2$ model is implemented by the following transformation of the parameters 
$(\lambda,\mu,\xi)\mapsto (\lambda^*, \mu^*, \xi^*)$, where:
\begin{equation} \lambda^*=\lambda, \quad \mu^*=-\mu, \quad  \xi^*=\xi ,
\end{equation}
\end{cor}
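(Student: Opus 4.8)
The plan is to prove Corollary \ref{barcor} by the same computational strategy that established Lemmas \ref{hatlem} and \ref{barlem}: direct inspection of the explicit weights \eqref{eq:explweights}. The transformation $\bf R^*$ is, by the decomposition $\bf \bar S\circ \bf \bar R\circ \bf HF=\bf R^*\circ \bf S\circ \bf R\circ \bf VF$ recorded in Section \ref{transfosec}, the composite $\bf R^*=(\bf \bar S\circ \bf \bar R\circ \bf HF)\circ(\bf S\circ \bf R\circ \bf VF)^{-1}$. Since the two flanking composites are implemented on parameters by the involutions of Lemmas \ref{barlem} and \ref{hatlem} respectively, I would simply compose these two parameter maps and read off the result, rather than re-deriving anything from the path geometry.

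Concretely, first I would recall from Lemma \ref{hatlem} that $\bf S\circ \bf R\circ \bf VF$ acts by $(\lambda,\mu,\xi)\mapsto(\hat\lambda,\hat\mu,\hat\xi)$ with $\hat\lambda=\pi-\tfrac{\lambda+\eta+\mu}{2}$, $\hat\mu=\pi-\tfrac{3\lambda+\eta-\mu}{2}$, $\hat\xi=-\xi$, and that this map is involutive, so its inverse is itself. Next I would recall from Lemma \ref{barlem} that $\bf \bar S\circ \bf \bar R\circ \bf HF$ sends $(\lambda,\mu,\xi)\mapsto(\bar\lambda,\bar\mu,\bar\xi)$ with $\bar\lambda=\pi-\tfrac{\lambda+\eta-\mu}{2}$, $\bar\mu=\pi-\tfrac{3\lambda+\eta+\mu}{2}$, $\bar\xi=-\xi$. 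The corollary asserts that applying $\bf R^*$ to the weights of the 20V${}_2$ model corresponds to $(\lambda,\mu,\xi)\mapsto(\lambda,-\mu,\xi)$. I would verify this by composing the hat-map (for the inverse of $\bf S\circ \bf R\circ \bf VF$) with the bar-map and checking that the net effect on $(\lambda,\mu,\xi)$ collapses to $(\lambda,-\mu,\xi)$.

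Here the one subtlety — and the step I expect to require the most care — is that the 20V${}_2$ weights are obtained from the original weights by the substitution $(\lambda,\mu,\xi)\to(\hat\lambda,\hat\mu,\hat\xi)$ together with the label permutation $\pi$ and the argument swap $z\leftrightarrow t$, so when one applies $\bf R^*$ one is really tracking how the bar-transformation acts after the hat-transformation has already been applied. Because both parameter maps fix $\eta$, send $\xi\mapsto-\xi$, and are affine in $(\lambda,\mu)$, the composition is a routine affine bookkeeping: the two sign flips of $\xi$ cancel to give $\xi^*=\xi$, and one checks that the $(\lambda,\mu)$ components of the composite reduce to the identity in $\lambda$ and the sign flip in $\mu$. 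In particular, applying the bar-map to the hatted parameters and then comparing with the direct claim, the $\mu$-dependent shifts in $\hat\mu$ and $\bar\mu$ must combine so that only the sign of $\mu$ survives, giving $\lambda^*=\lambda$ and $\mu^*=-\mu$.

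I would therefore present the proof as the one-line statement that $\bf R^*$ is the composite of the inverse of $\bf S\circ \bf R\circ \bf VF$ with $\bf \bar S\circ \bf \bar R\circ \bf HF$ on configurations, hence on parameters it is the composite of the corresponding involutions of Lemmas \ref{hatlem} and \ref{barlem}, followed by the explicit affine computation that this composite is $(\lambda,\mu,\xi)\mapsto(\lambda,-\mu,\xi)$. As a consistency check I would confirm via \eqref{setonw} and \eqref{eq:explweights} that the permutation $\bar\pi=(16)(25)$ of weight labels is indeed reproduced by $\mu\mapsto-\mu$ with $z\leftrightarrow w$ interchanged, matching the statement $\omega_i[z,t,w,\xi]\big\vert_{\mu\to-\mu}=\omega_{\bar\pi(i)}[w,t,z,\xi]$, so that the corollary is consistent with the direct inspection used in the two preceding lemmas.
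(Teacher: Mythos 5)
Your proposal is correct and is essentially the paper's own (implicit) route: Corollary \ref{barcor} is indeed deduced by composing the parameter substitutions of Lemmas \ref{hatlem} and \ref{barlem} through the relation ${\bf \bar S}\circ{\bf \bar R}\circ{\bf HF}={\bf R^*}\circ{\bf S}\circ{\bf R}\circ{\bf VF}$, and your closing consistency check that $\mu\to-\mu$ realizes $\bar\pi=(16)(25)$ on the weights \eqref{eq:explweights} (fixing $\omega_0,\omega_3,\omega_4$, which are even in $\mu$, and swapping $\omega_1\leftrightarrow\omega_6$, $\omega_2\leftrightarrow\omega_5$) is exactly the ``by inspection'' verification on which the paper's neighboring lemmas rest. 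One precision worth adding: only the hat-map of Lemma \ref{hatlem} is involutive --- the bar-map of Lemma \ref{barlem} is not --- so the two composition orders genuinely differ, and it is the order forced by the contravariance of substitution that you correctly flag (evaluate the hat formulas at the barred parameters) that collapses as claimed, giving $\hat{\bar\lambda}=\lambda$, $\hat{\bar\mu}=-\mu$, $\hat{\bar\xi}=\xi$, whereas the reverse order yields $\bar{\hat\lambda}=\pi-\frac{\lambda-\mu+\eta}{2}\neq\lambda$.
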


Using the notations $\bar f=f\vert_{(\lambda,\mu,\xi)\to  (\bar\lambda,\bar \mu,\bar \xi)}$ and  $f^*=f\vert_{(\lambda,\mu,\xi)\to  (\lambda^*,\mu^*,\xi^*)}=f\vert_{\mu\to-\mu}$,
we conclude that $\kappa_1[\xi]=2-\kappa^*[\xi]$, whereas $A_1[\xi]=s_1[\xi]^{-1}=\tilde A_2[\xi]=\bar A[\xi]$ (where the index $1$ now refers to the 20V${}_1$ model). Finally, using the mapping ${\bf  R^*}:\, (x,y)\mapsto (y-2,2+x)$ in rescaled coordinates, the NE branch of the arctic curve for the 20V${}_1$ model
is given in parametric form by 
\begin{equation}\label{arctic1} X[\xi]=\kappa_1[\xi]-A_1[\xi]\,\frac{\kappa_1'[\xi]}{A_1'[\xi]}-2=-\kappa^*[\xi]+\bar A[\xi]\,\frac{{\kappa^*}'[\xi]}{\bar A'[\xi]}, \quad Y[\xi]=2+\frac{\kappa_1'[\xi]}{A_1'[\xi]} =2-\frac{{\kappa^*}'[\xi]}{\bar A'[\xi]}  .
\end{equation}
Note finally that the range of $\xi$ is mapped onto $\xi \in -[\frac{\lambda-\eta-\mu}{2},0]$.
We must finally apply the inverse transformation ${\bf \bar R \circ \bf {\bar S}^{-1}}:\,  (x,y)\mapsto (-x-y,y)$ (in rescaled variables) to recover the NW branch for the 20V${}_3$ model, resulting in the following.
\begin{thm}\label{NWbranchthm}
The NW branch of the arctic curve for the 20V${}_3$ model is given in parametric form by:
$$X_{NW}[\xi]={\kappa^*}[\xi]-2-(\bar A[\xi]-1)\,\frac{{\kappa^*}'[\xi]}{\bar A'[\xi]} , \quad Y_{NW}[\xi]=2-\frac{{\kappa^*}'[\xi]}{\bar A'[\xi]} \qquad \xi \in -\left[\frac{\lambda-\eta-\mu}{2},0\right], $$
where the bar refers to the transformation of Lemma \ref{barlem}, ${\kappa^*}[\xi]=\kappa[\xi]\vert_{\mu\to -\mu}$,
and $\kappa[\xi],A[\xi]=s[\xi]^{-1}$ are given by \eqref{kappagen} and \eqref{slopegen}.
\end{thm}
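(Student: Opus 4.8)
The plan is to mirror the SE-branch argument: transport the NW branch to a model whose NE branch the tangent method already handles, and then pull the answer back by an explicit rescaled map. The observation set up above is that applying the horizontal flip $\bf HF$ to 20V${}_3$ configurations empties the NW corner, and that $\bf \bar S\circ \bar R\circ HF$ realizes this as the NE-emptied 20V${}_1$ model (itself the $\bf R^*$-image of 20V${}_2$) with weights $\omega_{\bar\pi(i)}[w,t,z]$. Hence the NW branch of 20V${}_3$ is the image, under the inverse rescaled transformation $\bf \bar R\circ \bar S^{-1}:(x,y)\mapsto(-x-y,y)$, of the NE branch of the 20V${}_1$ model, and it suffices to (i) compute that NE branch and (ii) apply this map.

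For step (i) I would run the tangent method on 20V${}_1$ exactly as for Theorem \ref{NEbranchthm}, obtaining the 20V${}_1$ NE branch from the 20V${}_2$ one by the diagonal reflection $\bf R^*$. The one-point function $H^{(1)}_{m,k}$ is a relabeling of $H^{20V_3}_{m,k}$; by Corollary \ref{barcor} its saddle point gives $\kappa_1[\xi]=2-\kappa^*[\xi]$ with $\kappa^*[\xi]=\kappa[\xi]\vert_{\mu\to-\mu}$. The path partition function $Y^{(1)}_{k,r}$ is computed by substituting the permuted weights $\omega_{\bar\pi(i)}$ into the transfer-matrix quantities \eqref{eq:alphas}; by Lemma \ref{barlem} this is the parameter change $(\lambda,\mu,\xi)\mapsto(\bar\lambda,\bar\mu,\bar\xi)$, under which $\tau\mapsto\tau^{-1}$ and $\al_1\mapsto\al_1^{-1}$. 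Thus the saddle-point system \eqref{sapoS} maps onto its own transform, its solution is the transform of the NE one, and $A_1[\xi]=s_1[\xi]^{-1}=\bar A[\xi]$. Assembling the tangent lines $y+A_1[\xi]x-\kappa_1[\xi]=0$, taking the envelope, and applying the rescaled reflection $\bf R^*:(x,y)\mapsto(y-2,2+x)$ yields the parametric form \eqref{arctic1}, with $\xi\in-[\frac{\lambda-\eta-\mu}{2},0]$.

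For step (ii) I would simply push \eqref{arctic1} through $(x,y)\mapsto(-x-y,y)$. The ordinate is unchanged, $Y_{NW}[\xi]=2-\frac{{\kappa^*}'[\xi]}{\bar A'[\xi]}$, while the new abscissa is $-X-Y$ evaluated on \eqref{arctic1}, which collects to $\kappa^*[\xi]-2-(\bar A[\xi]-1)\frac{{\kappa^*}'[\xi]}{\bar A'[\xi]}$, matching the stated $X_{NW}[\xi]$; the $\xi$-range is carried along unchanged by this linear map.

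The routine parts are the envelope differentiation and the final simplification of $-X-Y$. The genuine care is needed in step (i): one must verify that the weight permutation $\bar\pi=(16)(25)$ really translates into the parameter involution of Lemma \ref{barlem} together with the reflection of Corollary \ref{barcor}, and that under it the full action transforms into its own image so that $\kappa_1=2-\kappa^*$ and $A_1=\bar A$ follow without a fresh extremization. One must also check that the two successive coordinate changes — $\bf R^*$ between 20V${}_2$ and 20V${}_1$, then $\bf \bar R\circ \bar S^{-1}$ back to 20V${}_3$ — compose correctly on the rescaled triangle; an orientation or sign slip there is the most likely source of error and would misplace the NW branch.
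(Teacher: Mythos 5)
Your proposal follows the paper's proof essentially step for step: empty the NW corner with $\bf HF$, realize $\bf \bar S\circ\bar R\circ HF$ as the 20V${}_1$ model (the $\bf R^*$-image of 20V${}_2$), reuse the SE-branch tangent analysis to get $\kappa_1[\xi]=2-\kappa^*[\xi]$ and $A_1[\xi]=\bar A[\xi]$, take the envelope, apply $\bf R^*:(x,y)\mapsto(y-2,2+x)$ to obtain \eqref{arctic1}, and pull back by $\bf\bar R\circ\bar S^{-1}:(x,y)\mapsto(-x-y,y)$; your step (ii) algebra and the parameter range match the paper exactly.

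One intermediate identity you state is wrong, and it sits precisely at the point you yourself flag as needing care: under the substitution of Lemma \ref{barlem} one does \emph{not} have $\tau\mapsto\tau^{-1}$ and $\al_1\mapsto\al_1^{-1}$. Those inversion identities belong to the hat transformation of Lemma \ref{hatlem}, whose permutation $\pi=(01)(24)$ swaps $\omega_0\leftrightarrow\omega_1$; the permutation $\bar\pi=(16)(25)$ fixes $\omega_0$ and $\omega_1$, and a direct check on \eqref{taugen} and \eqref{eq:explweights} gives instead $\bar\tau=(\tau^*)^{-1}$ and $\bar\al_1=(\al_1^*)^{-1}$, i.e.\ the bar substitution is the hat one composed with the star $\mu\to-\mu$ of Corollary \ref{barcor} (note also that bar, unlike hat, is not an involution). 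The cancellation $\log(\cdot)=-\log(\bar\tau\,\bar\al_1)$ in the saddle-point equation therefore only works if the one-point-function side carries the \emph{starred} parameter $\tau^*$ — which is exactly what $\kappa_1=2-\kappa^*$ encodes — rather than the original $\tau$ as written in your step (i). With the unstarred identities taken literally, the same manipulation would output $\kappa_1=2-\kappa$ and drop the $\mu\to-\mu$, misplacing the branch whenever $\mu\neq 0$. Since your final identifications $\kappa_1=2-\kappa^*$, $A_1=\bar A$ agree with the paper's, the slip is self-correcting in your conclusion, but the justification should be repaired as above.
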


\subsection{Frozen domains}\label{frosec}

Frozen domains in 20V configurations correspond to the following seven different phases:
\begin{itemize}
\item $F_0$: empty region (no path)
\item $F_1$: region filled with parallel horizontal paths (all horizontal edges occupied by path steps)
\item $F_2$: region filled with parallel vertical paths (all horizontal edges occupied by path steps)
\item $F_3$: region filled with parallel diagonal/horizontal paths (all horizontal and diagonal edges occupied by path steps)
\item $F_4$: region filled with parallel diagonal/vertical paths (all vertical and diagonal edges occupied by path steps)
\item $F_5$: region fully packed with paths (all horizontal, diagonal and vertical edges occupied by path steps)
\item $F_6$: region filled with parallel diagonal paths (all diagonal edges occupied by path steps)
\end{itemize}

By the very definition of the tangent method we applied, it is clear that the NE branch of the arctic curves in the previous sections separates the liquid (disordered) phase below the curve from the empty frozen phase $F_0$ above the curve.
The frozen region below the NW branch is by continuity from the boundary condition of type $F_1$, while that below the SE branch is of type $F_2$. 

We now argue that there exist phase separation segments originating at the center point $(-1,1)$ of the diagonal boundary.

\begin{figure}
\begin{center}
\begin{minipage}{0.5\textwidth}
        \centering
        \includegraphics[width=5.9cm]{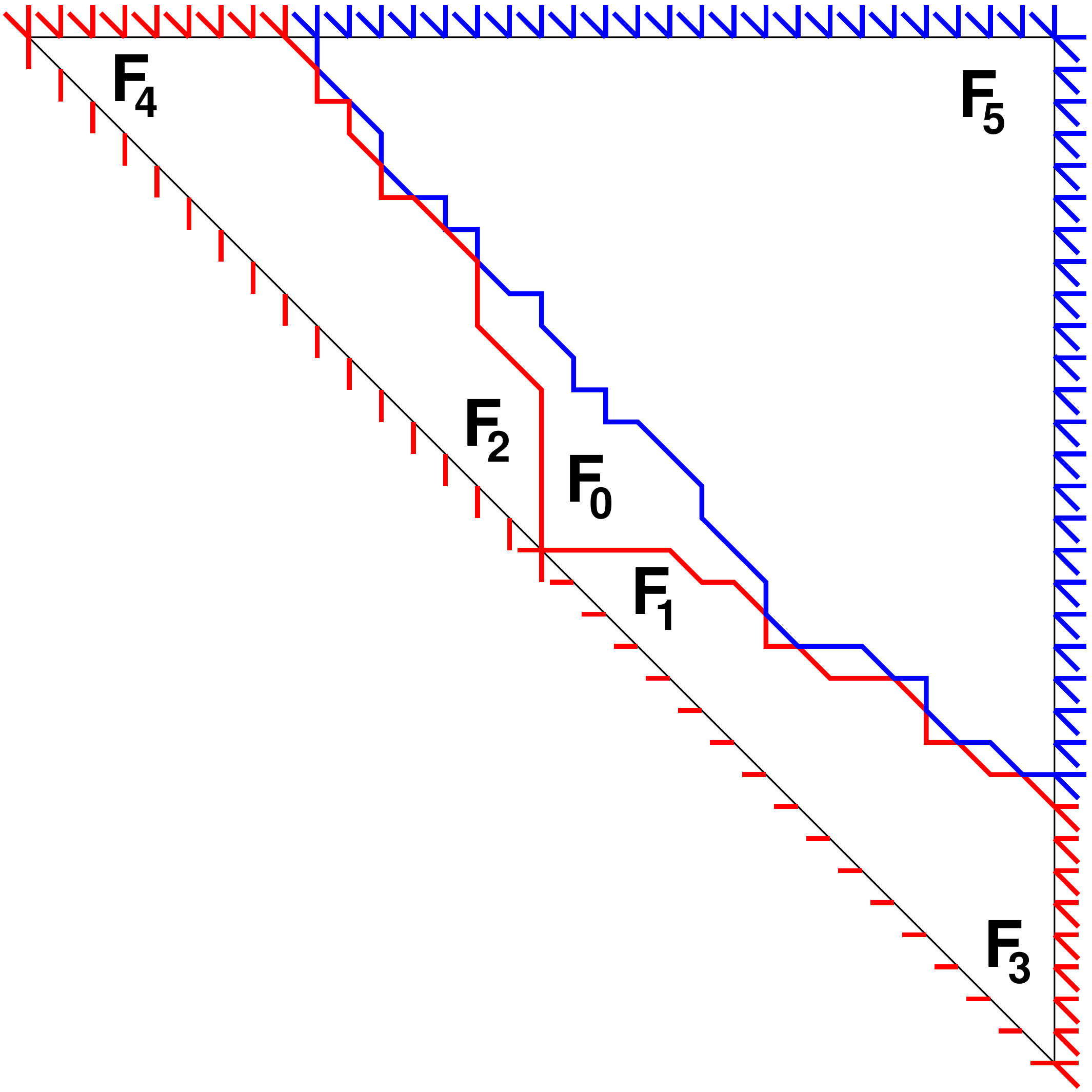} 
    \end{minipage}\hfill
    \begin{minipage}{0.5\textwidth}
        \centering
        \includegraphics[width=5.9cm]{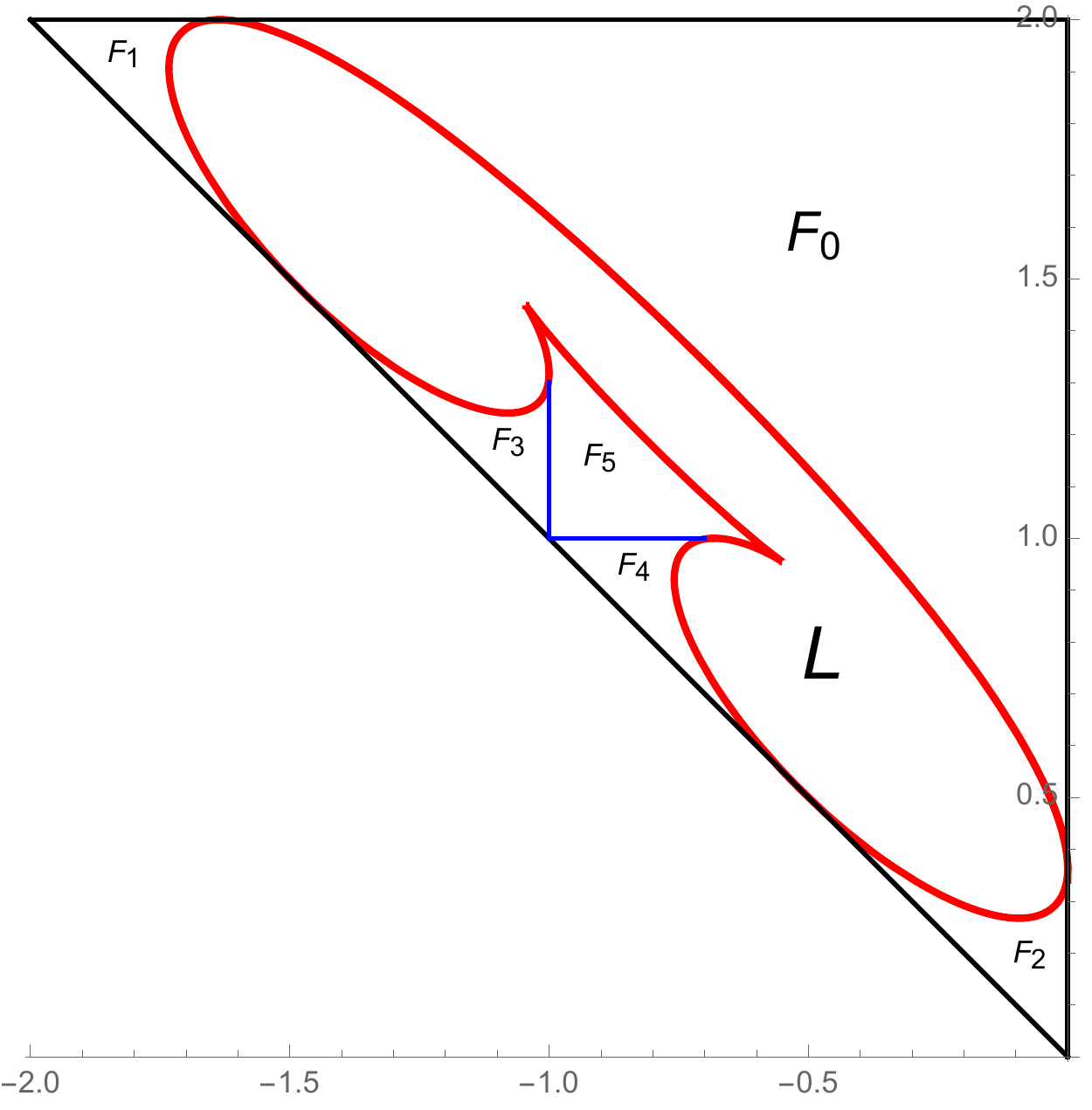}  
    \end{minipage}
\end{center}
\caption{\small Emergence of frozen regions near the diagonal S boundary. Left: we apply a total flip to the 20V${}_3$ model (of even size here), and show the corresponding boundary conditions on the path configurations. The rightmost path linking the N boundary to the diagonal and the topmost path from the diagonal to the E boundary (in red), as well as the bottom-most path connecting the N boundary to the E boundary (in blue) delimit a frozen empty region (type $F_0$), and two frozen zones with vertical (type $F_2$) and horizontal (type $F_1$) paths. We also indicate the frozen NE corner which is fully packed with paths (type $F_5$), NW corner (type $F_4$) and SE corner (type $F_3$). Right: after flipping back all edges, predicted complete arctic curve with six frozen regions and the central liquid one (L). We expect three more portions of arctic curve in addition to the NW, NE, SE ones.}
\label{fig:frozen}
\end{figure}

Starting from the 20V${}_3$ model configurations, let us apply the {\it total flip} i.e. replace each path edge by an empty edge and vice-versa. The boundary conditions are then changed into those of Fig. \ref{fig:frozen} (left). There are $2m$ paths entering along the N boundary, $n$ of which exit along the SW diagonal (above the central point) while the remaining ones exit on the (top 3/4 of) vertical E boundary. There are also $n$ paths entering along the SW diagonal boundary (below the central point), exiting on the (bottom 1/4 of) vertical E boundary.

In particular, the $n$ leftmost paths starting on the N boundary must exit at the $n$ topmost positions along the diagonal SW boundary. The rightmost of those ends up at the center on the diagonal. Similarly the $n$ paths that enter the SE diagonal beyond the center exit at the $n$ bottom-most positions along the vertical E boundary. In particular, the topmost one starts at the center. This clearly creates an empty region (of type $F_0$, see Fig.\ref{fig:frozen}(left)), namely in the NE corner (positive quadrant) from the center of the diagonal SW boundary. Moreover the bottom-most path connecting the N to the E boundaries closes up this empty region from above. 
The vertical and horizontal boundaries of this empty region separate it from regions with parallel paths (vertical on the left, type $F_2$ and horizontal on the right, type $F_1$), as dictated by the new SW boundary conditions. 
We have also indicated in Fig. \ref{fig:frozen} (left) the flipped corner phases (respectively $F_4$, $F_5$, $F_3$ delimited respectively by the NW, NE, SE branches of the previous sections). 

Applying the total flip back, and noting that this interchanges regions $F_0\leftrightarrow F_5$, $F_1\leftrightarrow F_4$ 
and $F_2\leftrightarrow F_3$, this produces a corner central region of type $F_5$, NE of the center of the SW diagonal boundary, with a region of type $F_3$ on its left, and of type $F_4$ on its right. This shows that there are missing portions of the arctic curve (that encompasses the liquid phase), which should look like Fig. \ref{fig:frozen} (right).

The next section is devoted to examples. In particular, in the so-called free fermion case corresponding to $\eta=\pi/4$,
we expect all curves to be analytic, and the tangent method predictions match the expected phase structure displayed in Fig. \ref{fig:frozen} (right).

\section{Examples}\label{exsec}

This section is devoted to explicit examples of arctic curves of the 20V${}_3$ model with various choices of
the weights $\omega_i$.

\subsection{Example I: The uniform (combinatorial) case}\label{unisec}

\begin{figure}
\begin{center}
\includegraphics[width=8.cm]{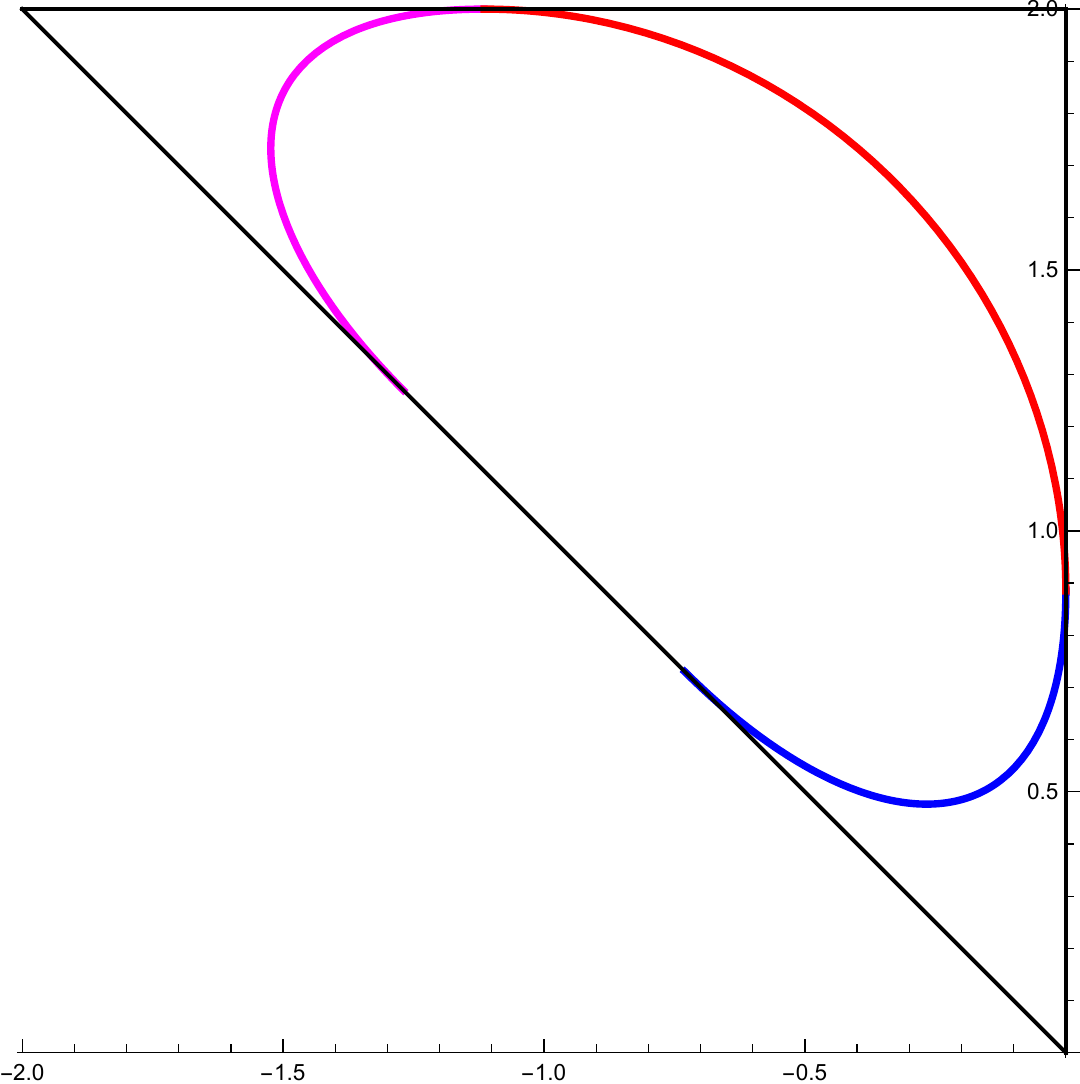}
\end{center}
\caption{\small The three predicted (NE, SE, NW) portions of the arctic curve of the 20V${}_3$ model 
with uniform weights respectively in red, blue, and magenta. }
\label{fig:uniform}
\end{figure}

As mentioned above, the case of the 20V${}_3$ model with uniform weights corresponds to the choice of 
parameters:  $\eta=\frac{\pi}{8}$, $\lambda=\frac{5\pi}{8}$ and $\mu=0$. We have represented in Fig. \ref{fig:uniform} the corresponding three portions of arctic curve as predicted by Theorems \ref{NEbranchthm}, \ref{SEbranchthm} and \ref{NWbranchthm}. We now argue that the three (NE,SE,NW) portions of arctic curve are pieces of the {\it same} algebraic curve. 

\begin{figure}
\begin{center}
\begin{minipage}{0.5\textwidth}
        \centering
        \includegraphics[width=5.9cm]{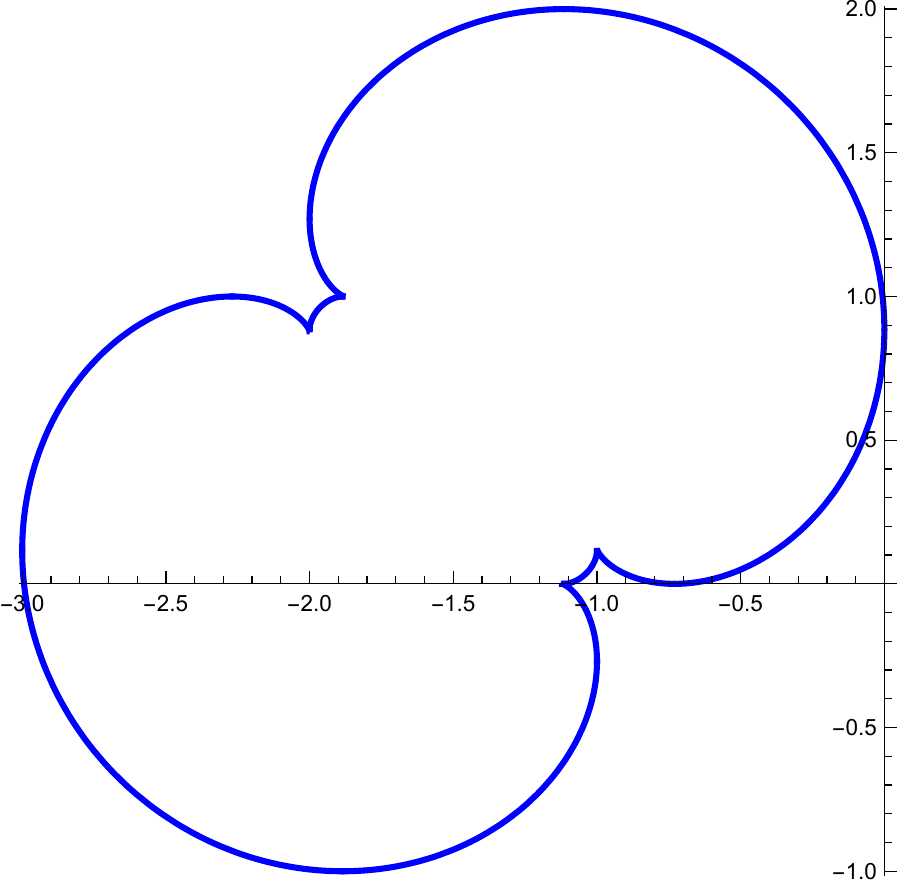} 
    \end{minipage}\hfill
    \begin{minipage}{0.5\textwidth}
        \centering
        \includegraphics[width=5.9cm]{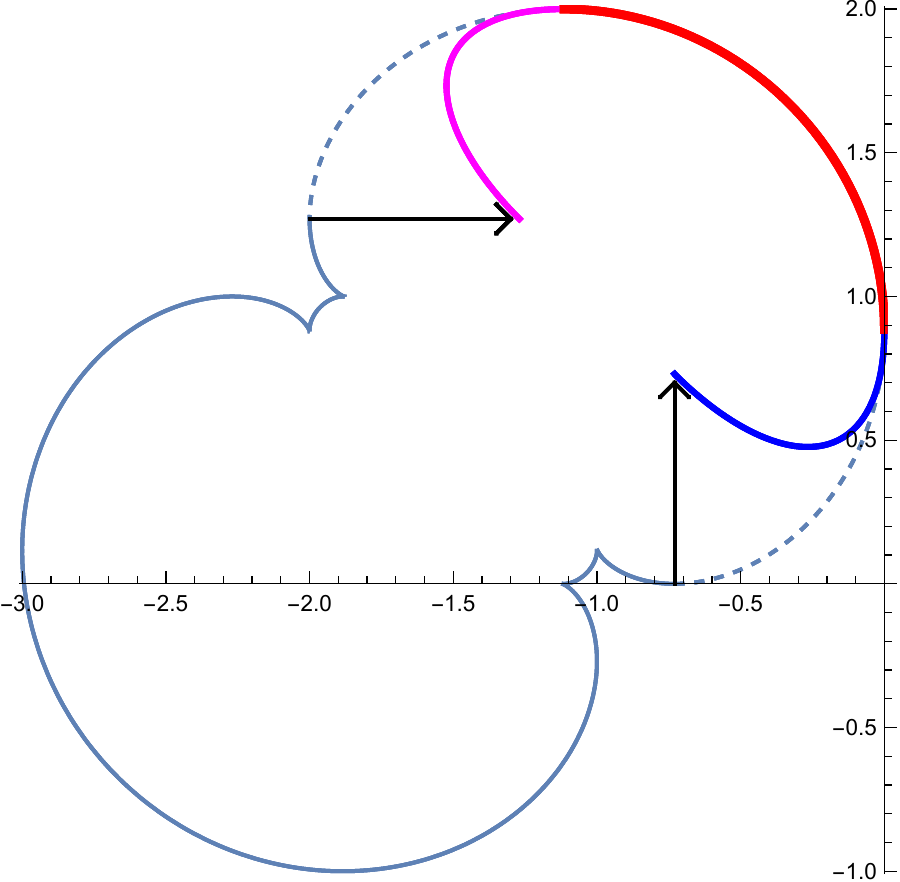}  
    \end{minipage}
\end{center}
\caption{\small Left: the real algebraic curve containing the NE portion of the arctic curve of the 
uniform 20V${}_3$ model. Right: the NE (in red), SE (in blue) and SW (in magenta) portions of the 20V${}_3$ arctic curve: the latter two are obtained from the algebraic curve by the indicated shearing transformations.}
\label{fig:unifplus}
\end{figure}

First note that $A[\xi]=\cot(2\xi)$ is an odd function of $\xi$.
To get the $SE$ branch, we must apply Theorem \ref{SEbranchthm} with the new values $\hat\lambda,\hat\mu,\hat\xi$ of $\lambda,\mu,\xi$
of Lemma \ref{hatlem}. Similarly, the $NW$ branch involves applying Theorem \ref{NWbranchthm} with the
new values $\bar\lambda,\bar\mu,\bar\xi$ of $\lambda,\mu,\xi$ of Lemma \ref{barlem}. However, one easily computes
$\hat \lambda=\bar \lambda=\lambda^*= \lambda=\frac{5\pi}{8}$, and $\hat \mu=\bar \mu=\mu^*=\mu=0$, while $\hat\xi=\bar\xi=-\xi$ and $\xi^*=\xi$,
henceforth we have $A_2[\xi]=\hat A[\xi]=A[-\xi]=-A[\xi]$ and $A_1[\xi]=\bar A[\xi]=A[-\xi]=-A[\xi]$, as well as $\kappa^*[\xi]=\kappa[\xi]$.
Using these values, the three branches can be written as
\begin{eqnarray*}
NE:&&X_{NE}[\xi]=\frac{\kappa'[\xi]}{A'[\xi]} , \quad Y_{NE}[\xi]= \kappa[\xi]-A[\xi]\,\frac{\kappa'[\xi]}{A'[\xi]} \qquad \xi \in [0,\frac{\pi}{4}].\\
SE:&&X_{SE}[\xi]=\frac{\kappa'[\xi]}{A'[\xi]} , \quad Y_{SE}[\xi]= \kappa[\xi]-(A[\xi]+1)\,\frac{\kappa'[\xi]}{A'[\xi]} \qquad \xi \in \left[-\frac{\pi}{4},0\right]\\
NW:&&X_{NW}[\xi]= \kappa[\xi]-2-(A[\xi]+1)\,\frac{\kappa'[\xi]}{A'[\xi]} , \quad Y_{NW}[\xi]=2+\frac{\kappa'[\xi]}{A'[\xi]} \qquad \xi \in \left[-\frac{\pi}{4},0\right]
\end{eqnarray*}
We see that the SE branch corresponds to the shearing $(x,y)\mapsto (x,y-x)$ of the analytic continuation of the NE branch to the domain $\xi\in\left[-\frac{\pi}{4},0\right]$, while the NW branch is the image of the same analytic continuation
of the NE branch under $(x,y)\mapsto (y-x-2,x+2)$, which is the composition of the diagonal reflection wrt the line $y=x+2$, i.e. $(x,y)\mapsto(y-2,x+2)$, and the horizontal shear $(x,y)\mapsto (2+x-y,y)$.

We finally note that the NE branch is actually a portion of a real algebraic curve of degree 10.  
Expressed in the variables
$(x,y)=(X_{NE}+\frac{3}{2},Y_{NE}-\frac{1}{2})$, it reads: 
\begin{eqnarray*}&&2^6  \times 3^{11} (x^2 + y^2)^5 - 2^4 \times 3^9 \times 67 (x^2 + y^2)^4 \\
&&\qquad - 
 2^2 \times 3^6 (x^2 + y^2)^2 (13\times 83 (x^2 + y^2) + 2^6 \times 5^3 \sqrt{3} \,x y) \\
 &&\qquad - 
 3^2 \times 17\times 9323 (x^2 + y^2)^2 - 
 2^{10} 3^2 5^5 x^2 y^2 - 2^6 3^4 5^2 11 \sqrt{3}\, x y (x^2 + y^2)  \\
 &&\qquad - 2^{10} \times 3 \times 41 (x^2 + y^2)- 
 2^7 \times 3 \times 23 \times 241 \sqrt{3}\, x y - 2^{10} \times 13^2=0
 \end{eqnarray*}
 This curve is clearly $x\leftrightarrow y$ symmetric, which means that the curve is symmetric wrt the line $y=x+2$ in the original coordinates. We conclude that the NW branch is the horizontal shearing $(x,y)\mapsto (2+x-y,y)$ of the portion of the algebraic curve corresponding to the diagonal reflection of the analytic continuation to $\xi\in [-\frac{\pi}{4},0]$. One can in fact check that this reflected piece is the analytic continuation of the algebraic curve to $\xi\in [\frac{\pi}{4},\frac{\pi}{2}]$.

This algebraic curve is represented in Fig. \ref{fig:unifplus} (Left), and (Right) together with the (NE,SE,NW) portions
of 20V${}_3$ arctic curve, the latter two obtained as the indicated shear transformations.

\subsection{Example II: The free fermion case}
\begin{figure}
\begin{center}
\begin{minipage}{0.33\textwidth}
        \centering
        \includegraphics[width=4.cm]{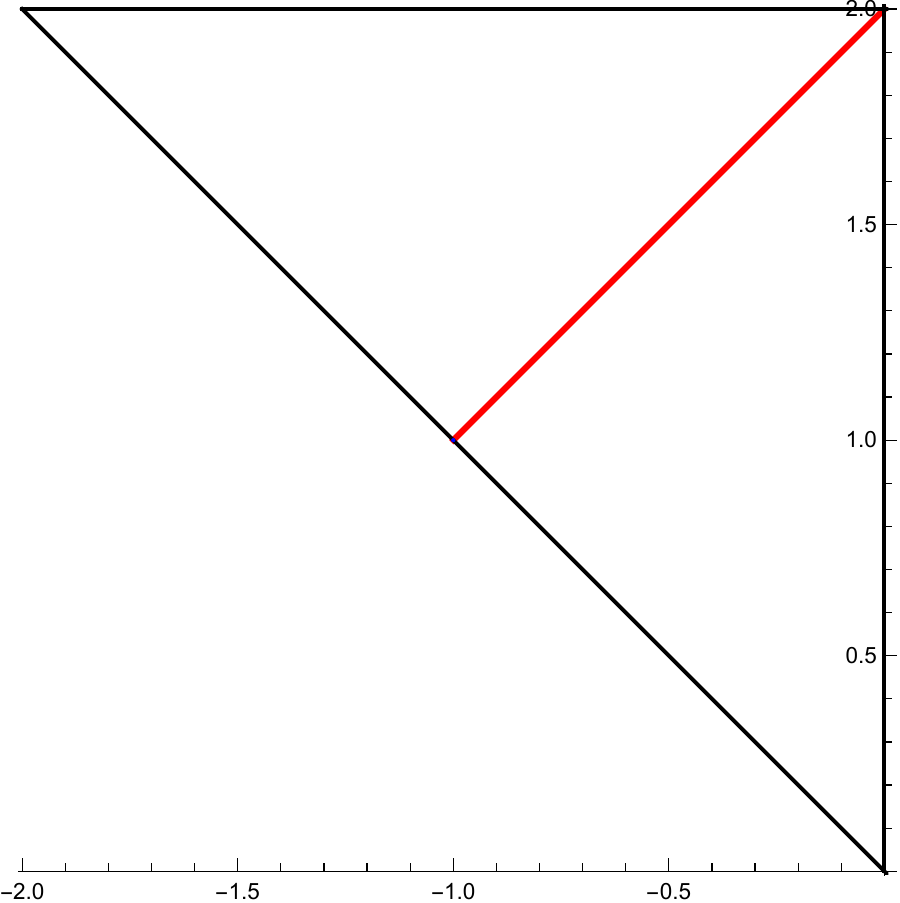} 
    \end{minipage}\hfill
    \begin{minipage}{0.33\textwidth}
        \centering
        \includegraphics[width=4.cm]{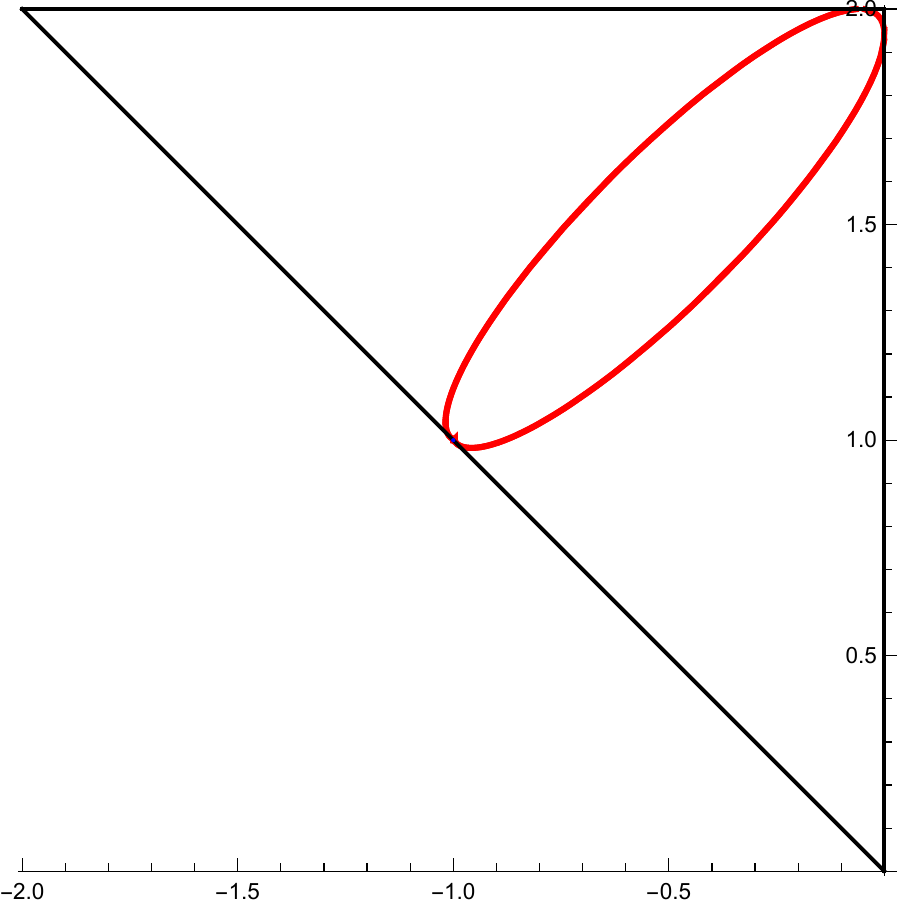}  
    \end{minipage}
     \begin{minipage}{0.33\textwidth}
        \centering
        \includegraphics[width=4.cm]{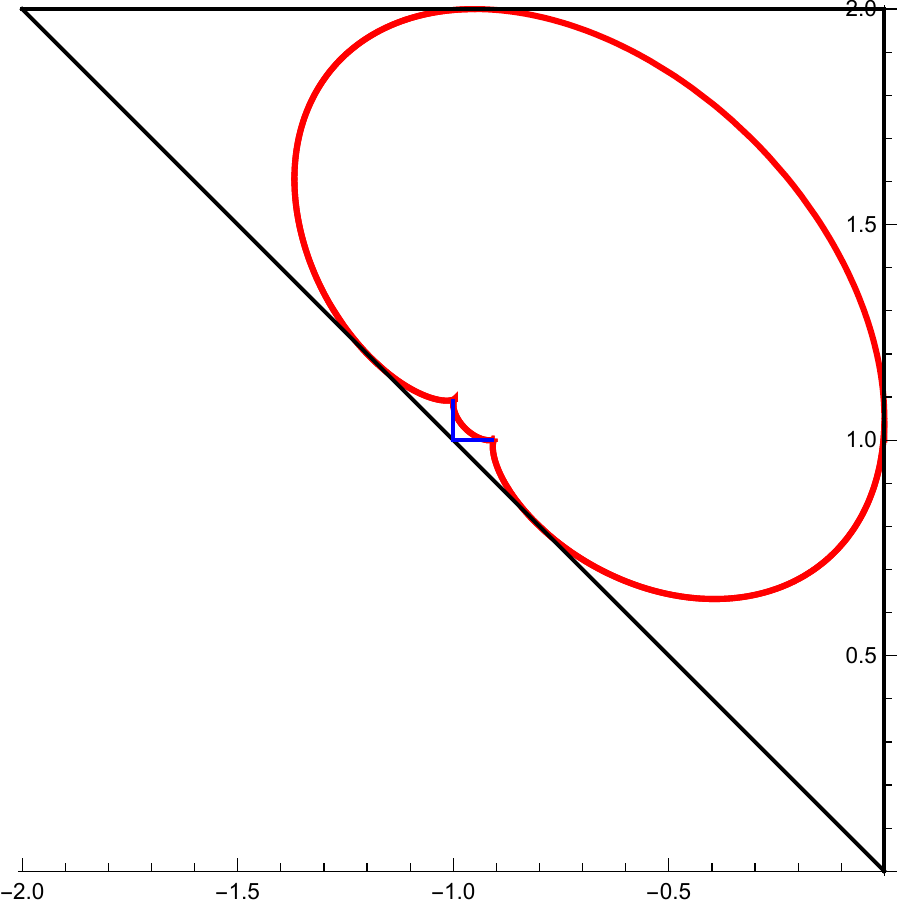}  
    \end{minipage}\\
\begin{minipage}{0.33\textwidth}
        \centering
        \includegraphics[width=4.cm]{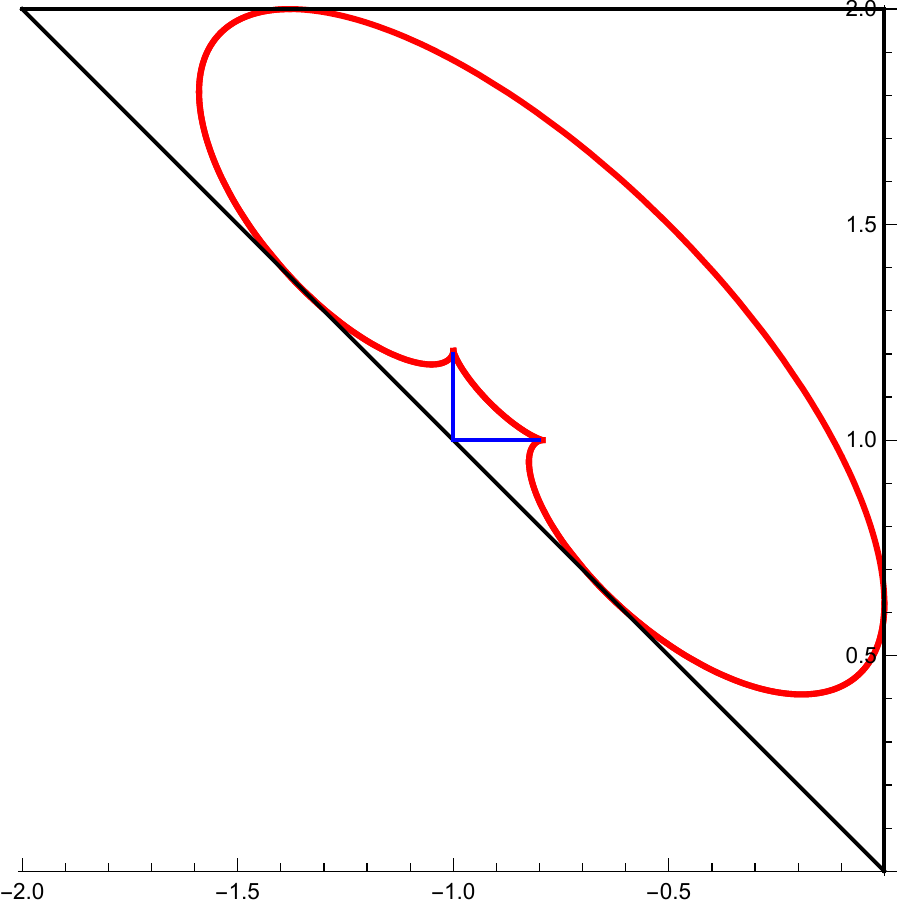} 
    \end{minipage}\hfill
    \begin{minipage}{0.33\textwidth}
        \centering
        \includegraphics[width=4.cm]{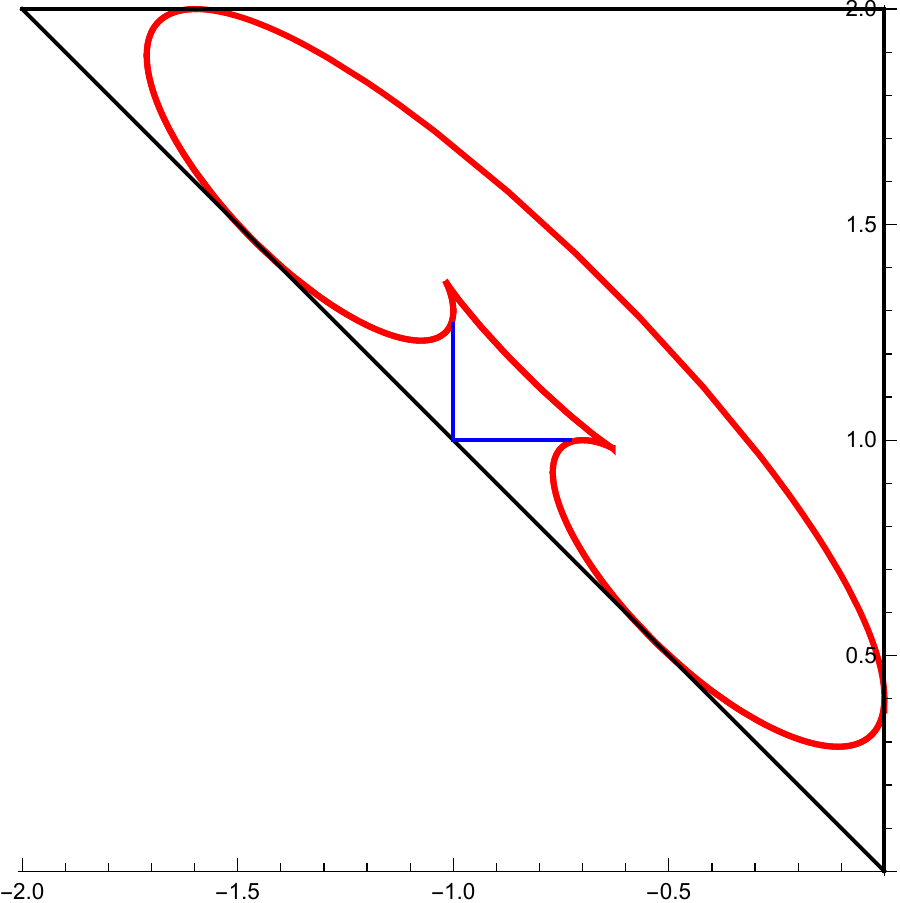}  
    \end{minipage}
     \begin{minipage}{0.33\textwidth}
        \centering
        \includegraphics[width=4.cm]{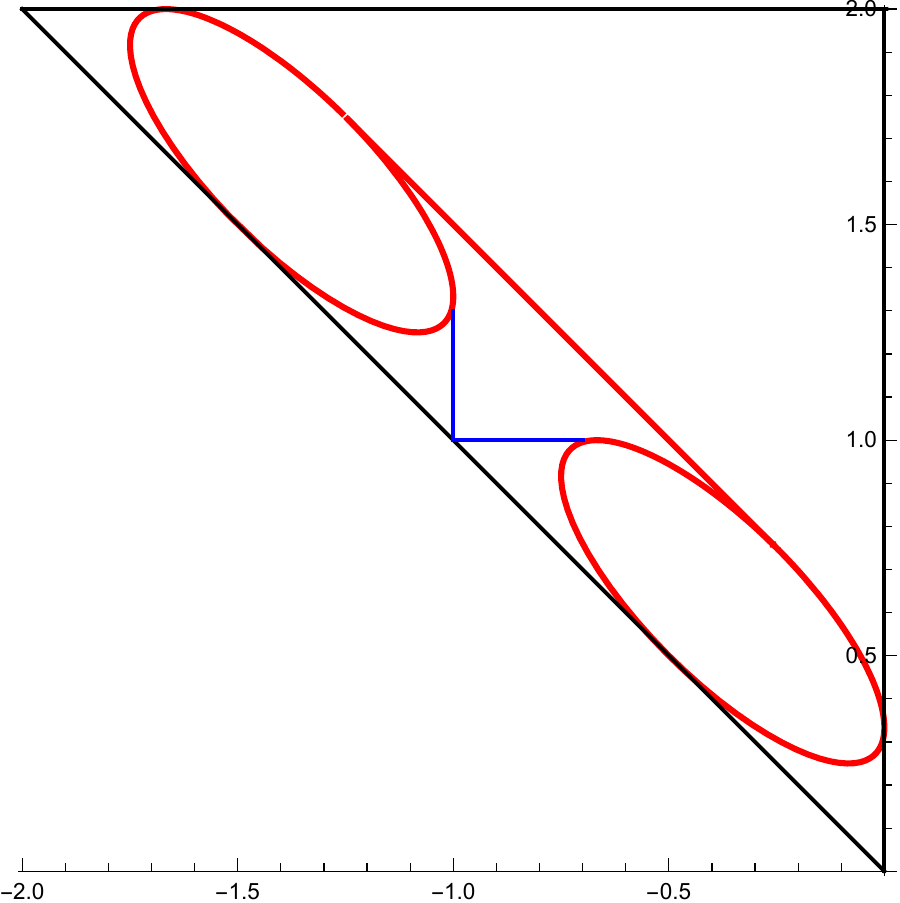}  
    \end{minipage}
\end{center}
\caption{\small Arctic curves of the 
free fermion ($\eta=\pi/4,\mu=0$) 20V${}_3$ model. We represent from left to right and top to bottom the curves corresponding to the values $\lambda=-\frac{\pi}{4},-\frac{\pi}{4}+.01,-\frac{\pi}{8},0,\frac{\pi}{8},\frac{\pi}{4}$. }
\label{fig:lam}
\end{figure}

The free fermion case $\eta=\frac{\pi}{4}$ corresponds to the situation when all three 6V models are themselves in their free fermion phase. We expect on general grounds the arctic curve to be analytic, and we indeed check that the three portions NE,SE,NW are analytic continuations of each other. By plotting the entire curve, we can see here the ``missing" portions of arctic curve that we have no prediction for in the case of general weights.

\begin{figure}
\begin{center}
\begin{minipage}{0.33\textwidth}
        \centering
        \includegraphics[width=4.1cm]{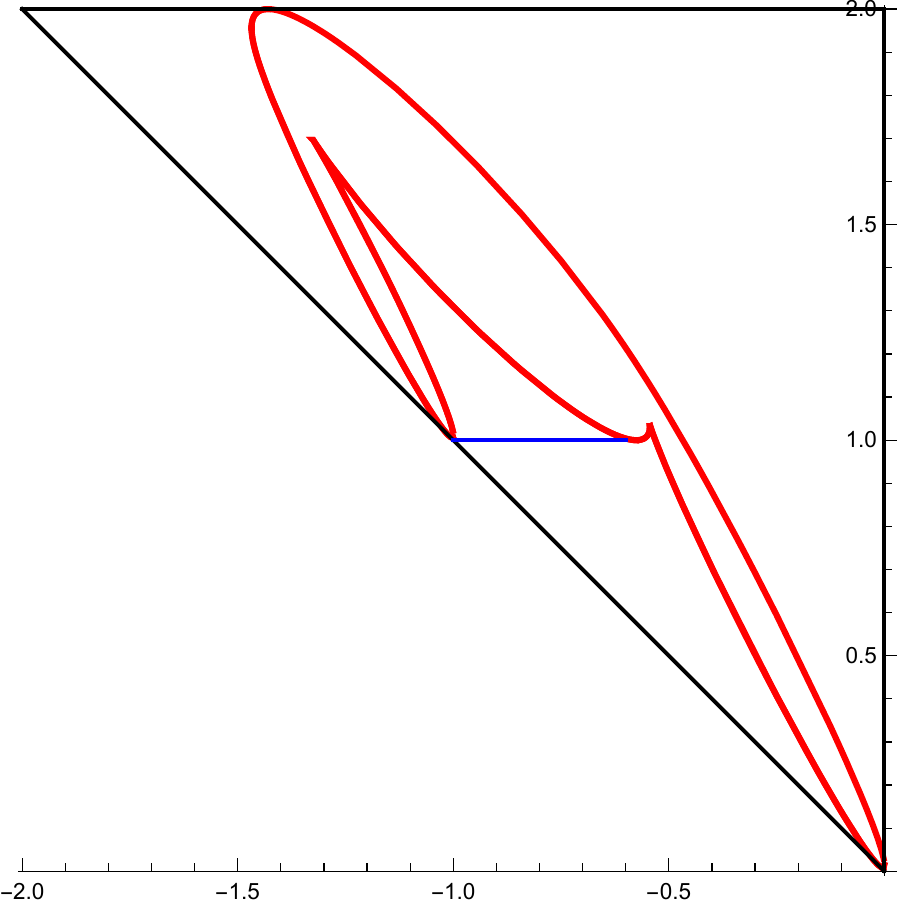} 
    \end{minipage}\hfill
    \begin{minipage}{0.33\textwidth}
        \centering
        \includegraphics[width=4.1cm]{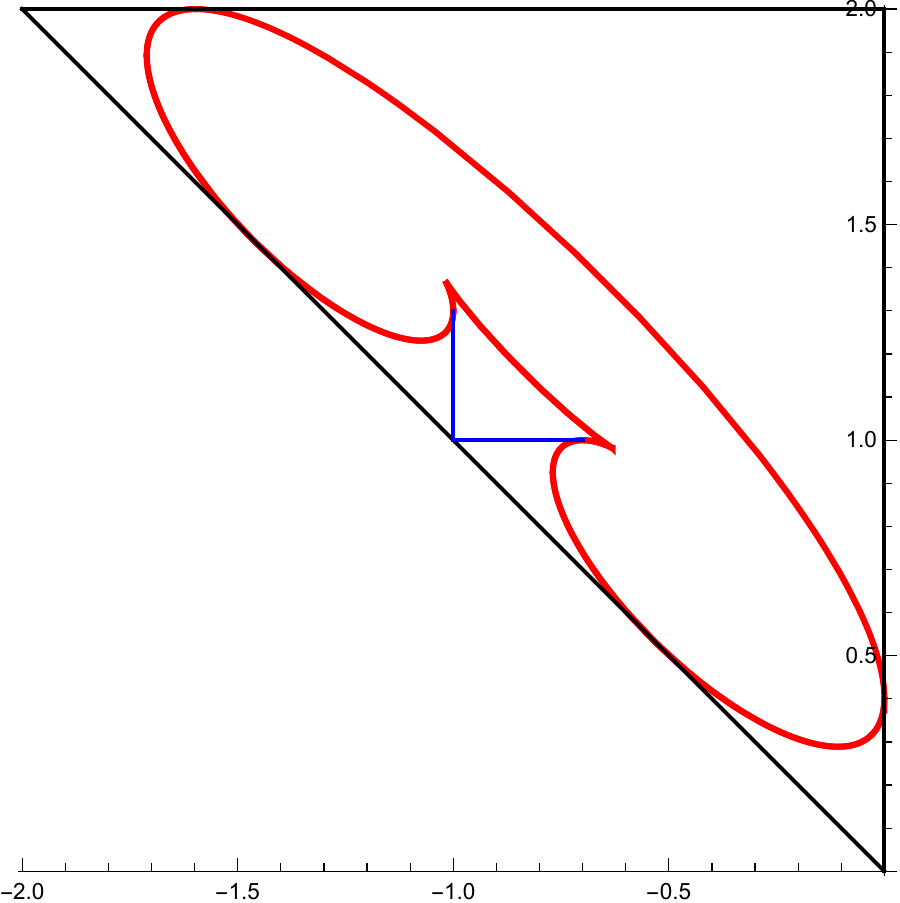}  
    \end{minipage}
     \begin{minipage}{0.33\textwidth}
        \centering
        \includegraphics[width=4.1cm]{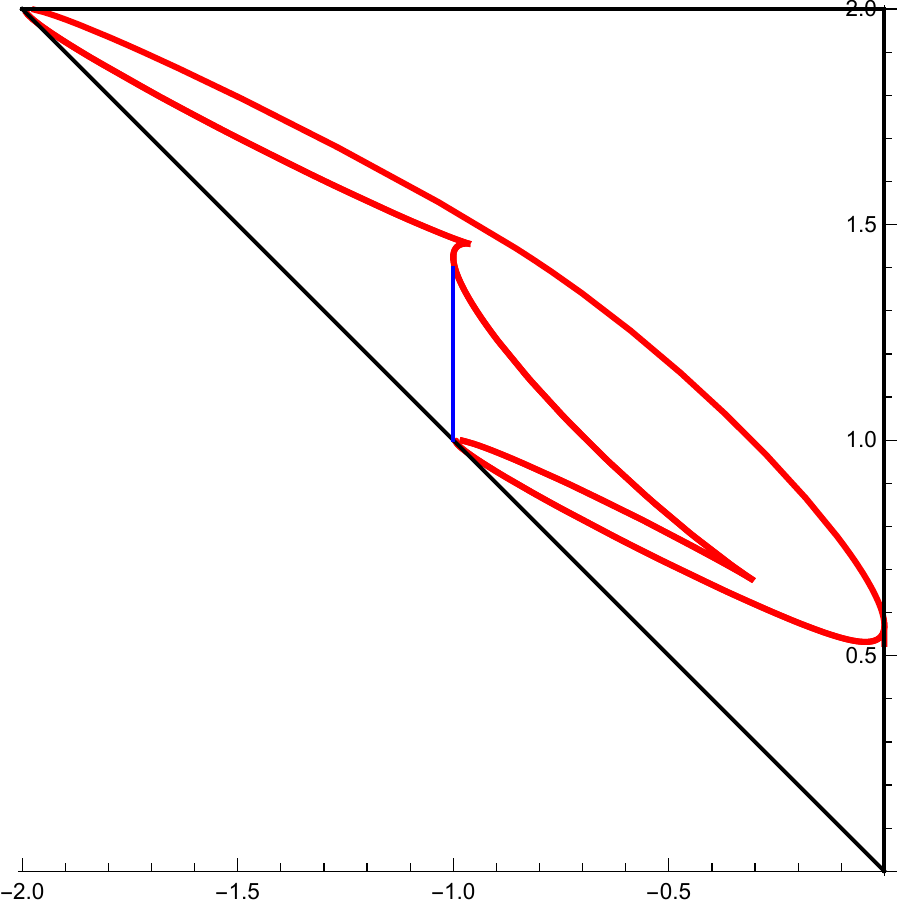}  
    \end{minipage}\\
\begin{minipage}{0.5\textwidth}
        \centering
        \includegraphics[width=4.1cm]{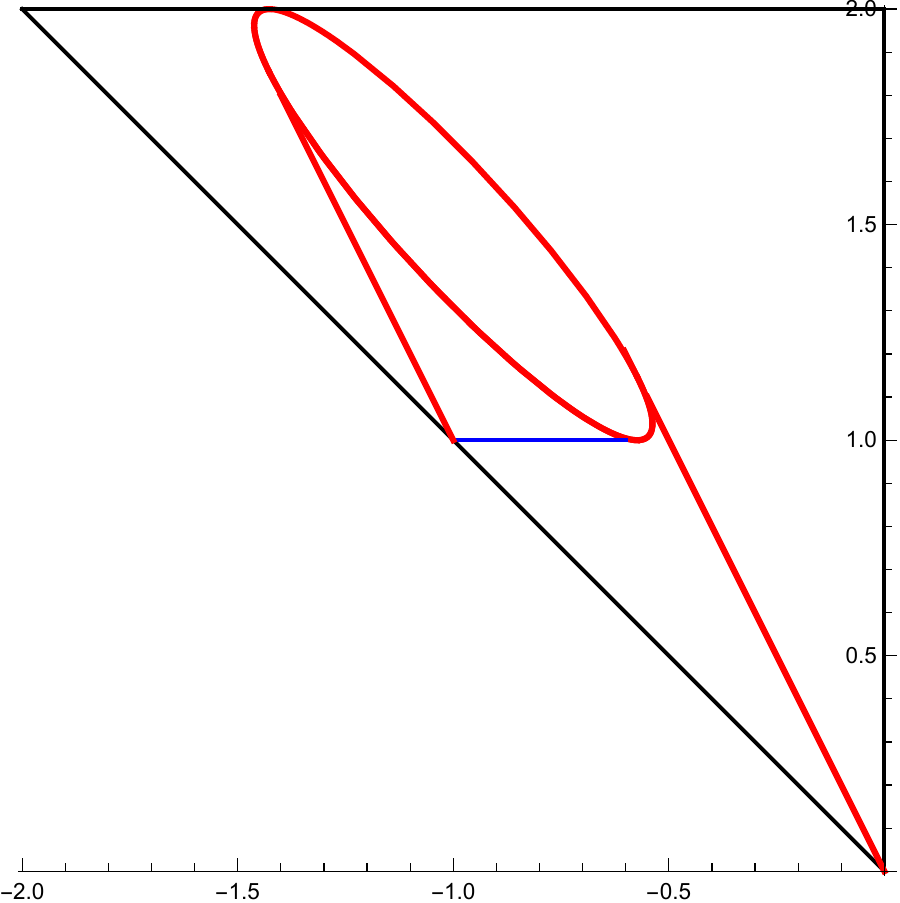} 
    \end{minipage}\hfill
    \begin{minipage}{0.5\textwidth}
        \centering
        \includegraphics[width=4.1cm]{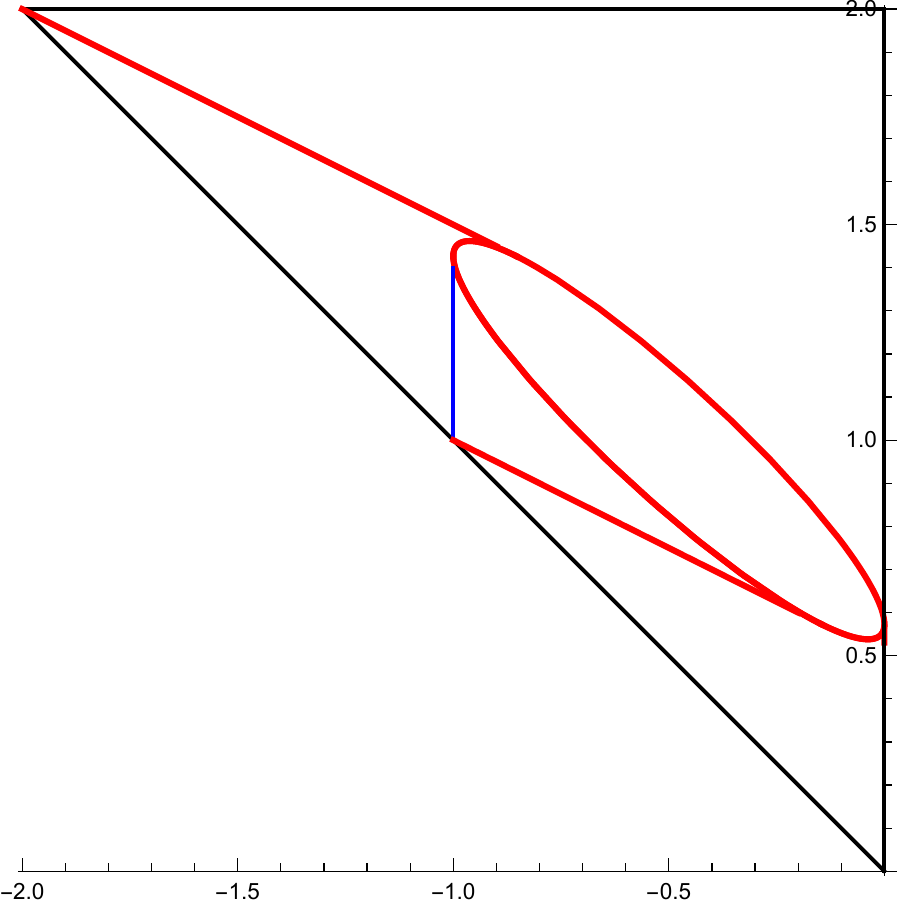}  
 \end{minipage}       
\end{center}
\caption{\small Arctic curves of the 
free fermion ($\eta=\pi/4,\lambda=\pi/8$) 20V${}_3$ model. We represent values $\mu=-\frac{\pi}{8}+.01, 0, \frac{\pi}{8}-.01$ on the top row, and the two limiting cases $\mu=-\frac{\pi}{8},+\frac{\pi}{8}$ on the bottom row. }
\label{fig:mu}
\end{figure}

We first illustrate in Fig. \ref{fig:lam}
the effect of a variation of $\lambda$ from $-\frac{\pi}{4}$ to $+\frac{\pi}{4}$ for $\mu=0$. The straight blue lines originating from the point $(-1,1)$ delimit different crystalline phases, as explained in Sect. \ref{frosec}. Note that the
two extreme points $\lambda=\pm \frac{\pi}{4}$ must be computed as limits. 
More precisely, we find that
$$\lim_{\epsilon\to 0^+} (X_{NE}[\xi \sqrt{\epsilon}],Y_{NE}[\xi\sqrt{\epsilon}])\vert_{\lambda=-\frac{\pi}{4}+\epsilon}=
(-\frac{4\xi^2}{1+4\xi^2},2 -\frac{4\xi^2}{1+4\xi^2})$$
which gives the segment $(-1,1)-(0,2)$.
Likewise, we have
$$\lim_{\epsilon\to 0^+} (X_{NE}[\xi],Y_{NE}[\xi])\vert_{\lambda=\frac{\pi}{4}-\epsilon}=\frac{1}{4}(2\cos(2\xi)-3,-2\cos(2\xi)+5)$$
which gives the segment $(-\frac{1}{4},-\frac34)-(-\frac54, \frac74)$.
We also have to include the limiting ellipses:
\begin{eqnarray*}
&&\lim_{\epsilon\to 0^+} (X_{NE}[\xi\epsilon],Y_{NE}[\xi\epsilon])\vert_{\lambda=\frac{\pi}{4}-\epsilon}=\left(\frac{2\xi^2}{8\xi(1-\xi)-3},-\frac{6\xi^2-4\xi+1}{8\xi(1-\xi)-3}\right)\\
&&\lim_{\epsilon\to 0^+} (X_{NE}[\frac{\pi}{2}-\xi\epsilon],Y_{NE}[\frac{\pi}{2}-\xi\epsilon])\vert_{\lambda=\frac{\pi}{4}-\epsilon}=\left(-\frac{10\xi^2+8\xi+3}{8\xi(1+\xi)+3},\frac{14\xi^2+12\xi+4}{8\xi(1+\xi)+3}\right)
\end{eqnarray*}

Next we show in Fig. \ref{fig:mu} the effect of a variation of $\mu$ from $-\frac{\pi}{8}$ to $+\frac{\pi}{8}$
for $\eta=\frac{\pi}{4},\lambda=\frac{\pi}{8}$. The two limiting cases are made of an ellipse and two tangent segments.


\subsection{Example III: Arbitrary weights}

\begin{figure}
\begin{center}
\begin{minipage}{0.33\textwidth}
        \centering
        \includegraphics[width=4.1cm]{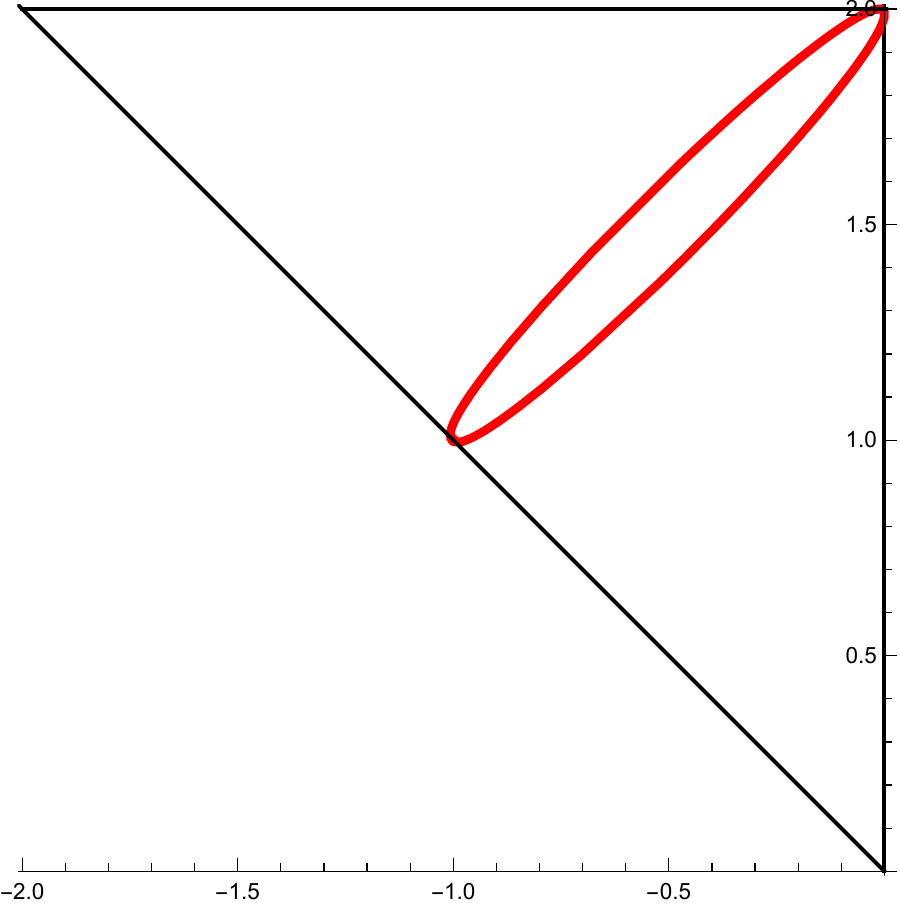} 
    \end{minipage}\hfill
    \begin{minipage}{0.33\textwidth}
        \centering
        \includegraphics[width=4.1cm]{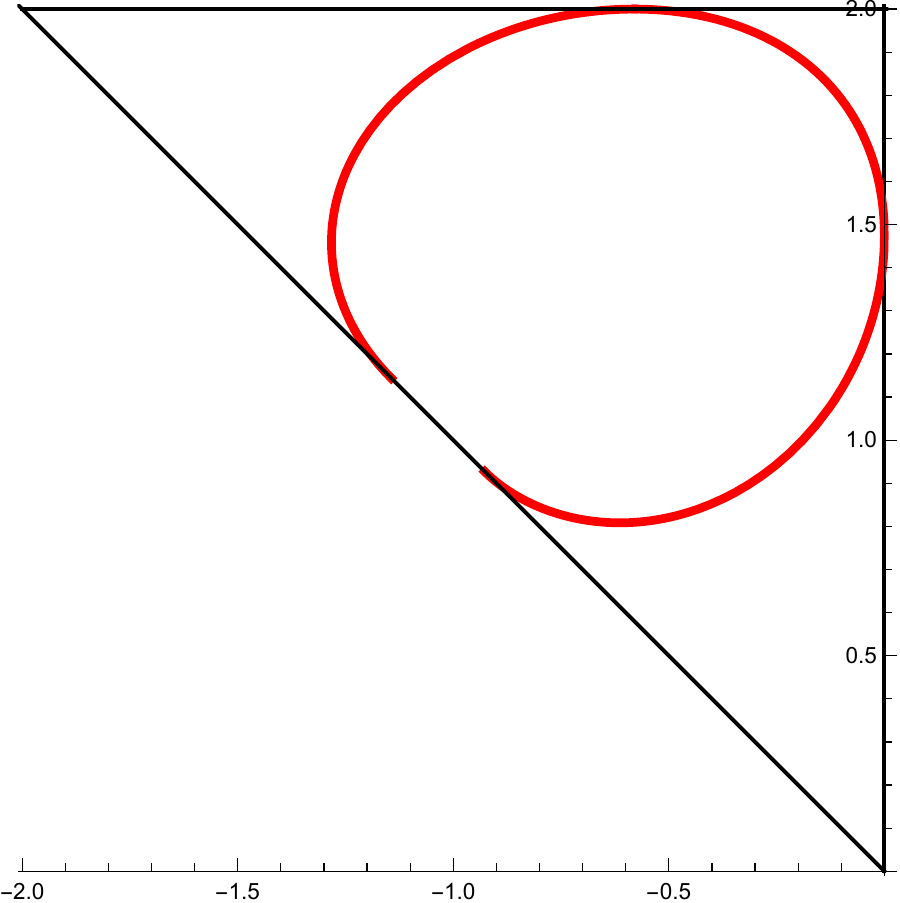}  
    \end{minipage}
     \begin{minipage}{0.33\textwidth}
        \centering
        \includegraphics[width=4.1cm]{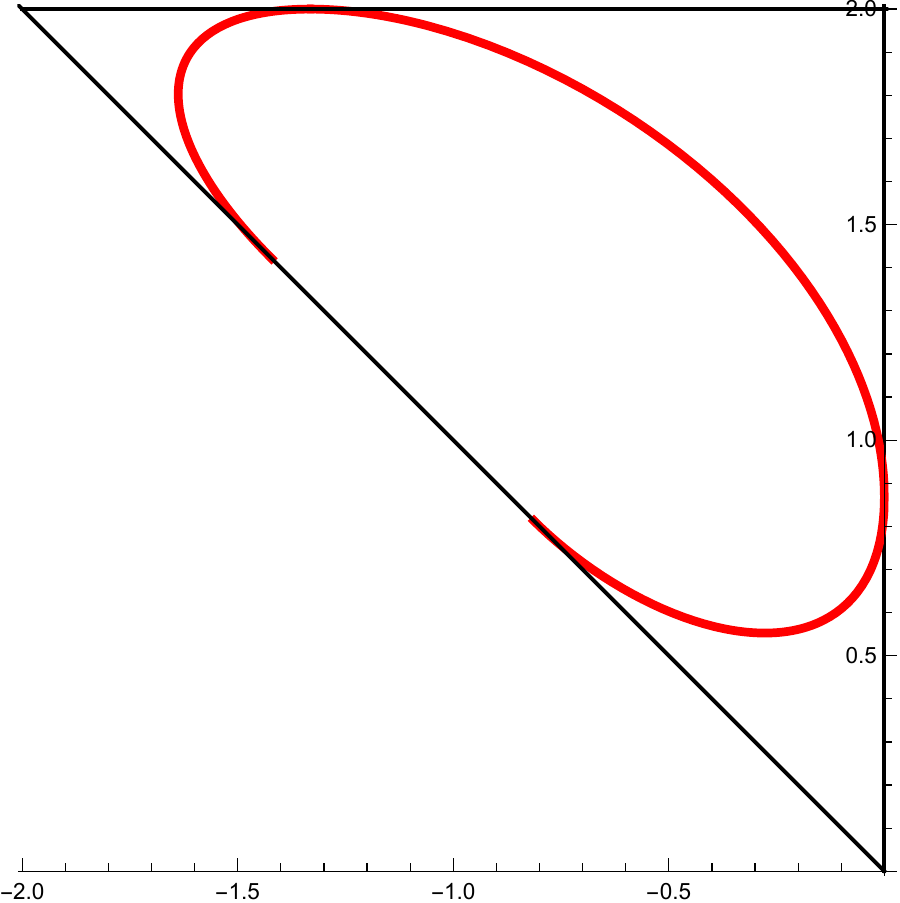}  
    \end{minipage}\\
\begin{minipage}{0.33\textwidth}
        \centering
        \includegraphics[width=4.1cm]{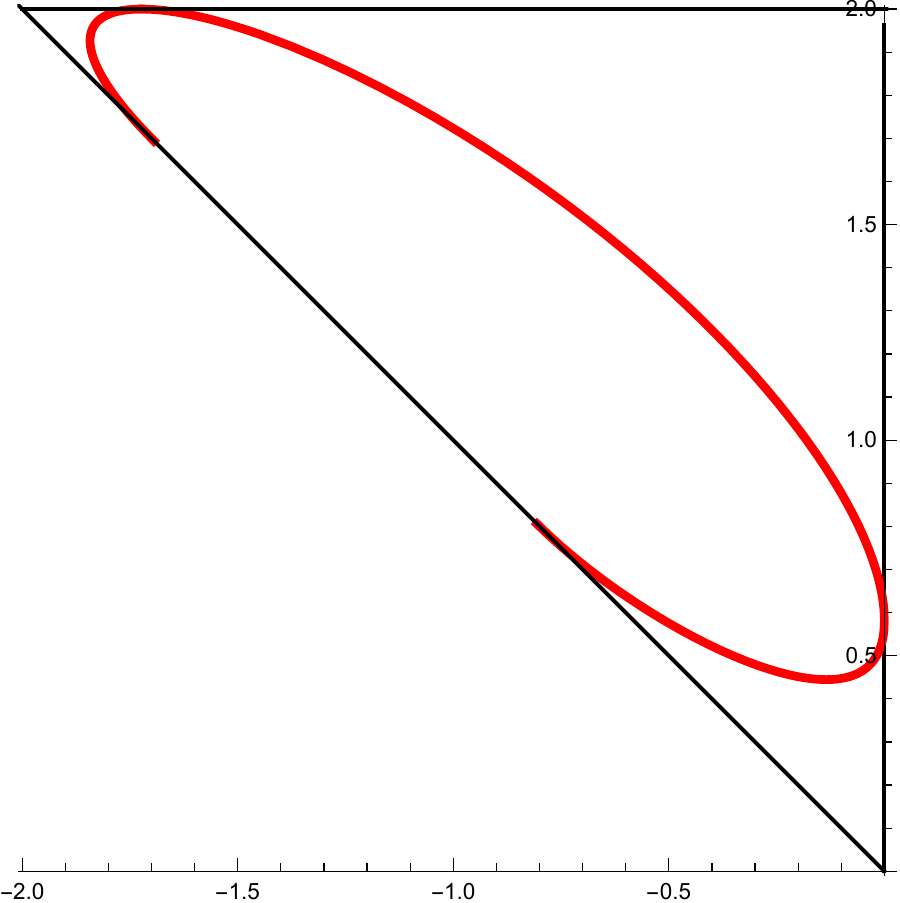} 
    \end{minipage}\hfill
    \begin{minipage}{0.33\textwidth}
        \centering
        \includegraphics[width=4.1cm]{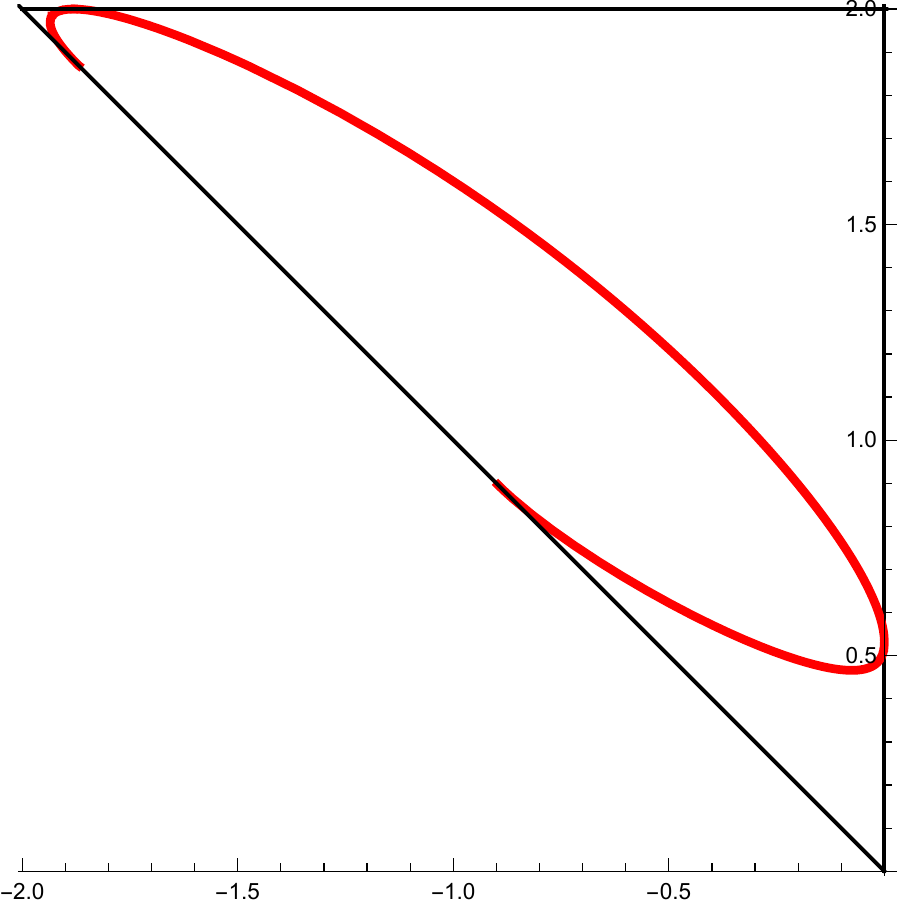}  
 \end{minipage}     
  \begin{minipage}{0.33\textwidth}
        \centering
        \includegraphics[width=4.1cm]{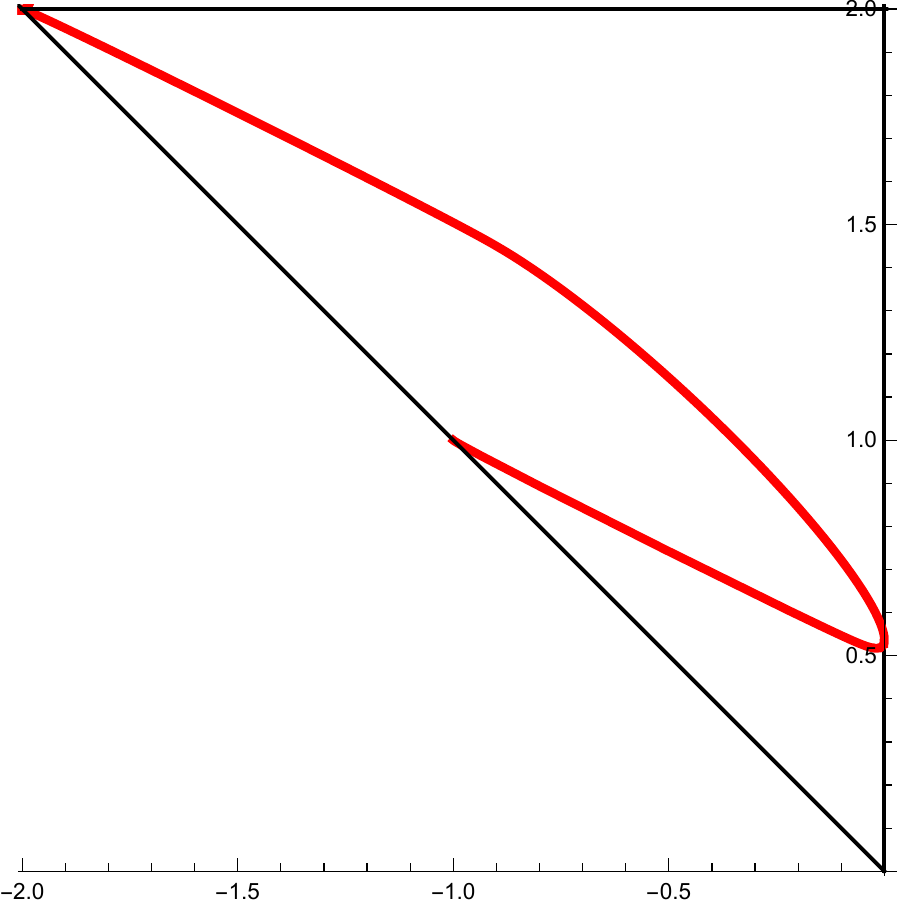}  
 \end{minipage}       
\end{center}
\caption{\small Arctic curves of the 20V${}_3$ model for $\eta=\pi/6,\mu=\pi/8$. We represent values $\lambda=\frac{5\pi}{6}-.001, \frac{3\pi}{4} \frac{7\pi}{12}, \frac{5\pi}{12},\frac{\pi}{3},\frac{7\pi}{24}+.01$. }
\label{fig:gen}
\end{figure}

We finally present in Fig. \ref{fig:gen} the NE,SE and NW portions of arctic curve for $\eta=\frac{\pi}{6},\mu=\frac{\pi}{8}$
and $\lambda$ varying from $\mu+\eta=\frac{7\pi}{24}$ to $\pi-\eta=\frac{5\pi}{6}$.

\section{Discussion/conclusion}\label{discsec}

In this paper we have derived exact predictions for the arctic curves of the 20V${}_3$ model on a triangle with suitable DWBC, using the tangent method. This was done in the same spirit as \cite{BDFG} by using the integrability of the lattice model which allows to relate various asymptotic enumerations of the 20V${}_3$ model to those of the 6V model. Our results include the complete ``outer" arctic curves for the full range of parameters describing integrable weights. We also offered a peek into the inner structure of the arctic curves in the (analytic) case of free fermions.

\begin{figure}
\begin{center}
\includegraphics[width=8.cm]{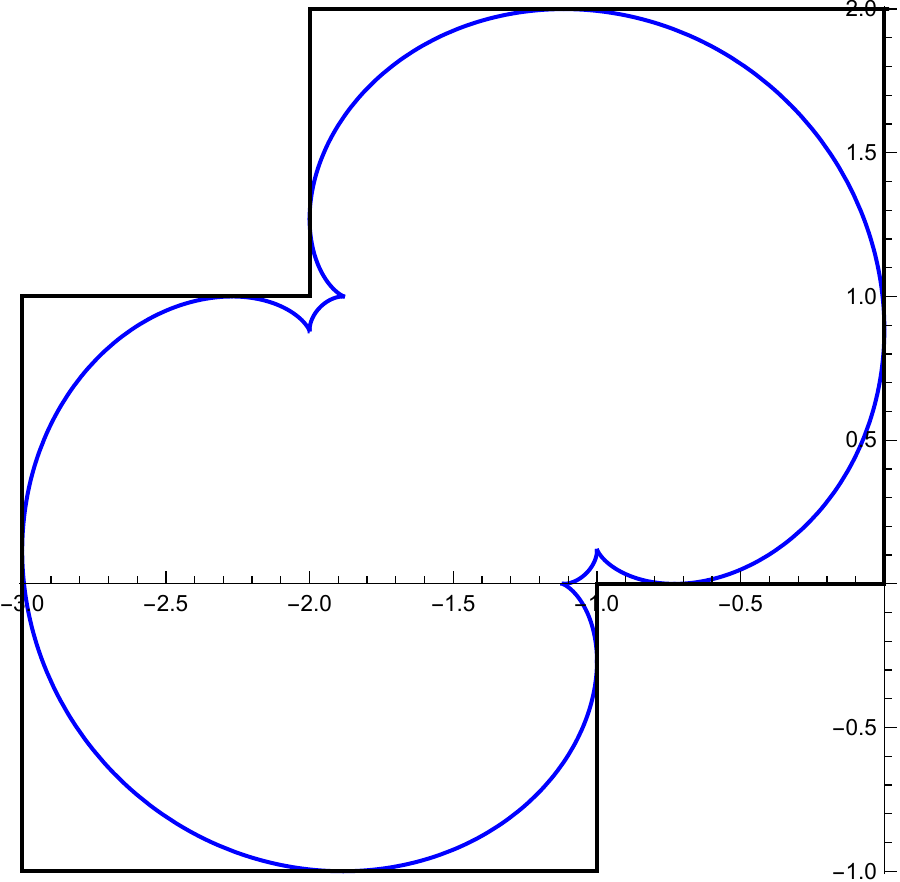}
\end{center}
\caption{\small The algebraic curve describing the NE branch of the 20V${}_3$ model (in blue) is inscribed in a domain made of two overlapping squares of size 2. }
\label{fig:domino}
\end{figure}

We found in particular that the NE portion of the arctic curve for the uniform case is algebraic of degree 10, and that the SE and NW portions are simple shears applied to other portions of this algebraic curve. 
In Ref.~ \cite{BDFG}, the NE part of the arctic curve of the same 20V model for another (square) domain with domain-wall type boundary conditions
(called DWBC1,2) was also found to be a degree 10 algebraic curve, with a similar ``shear" phenomenon occurring for other parts.  Interestingly, the algebraic curve appears to be the arctic curve for different objects: the quarter-turn symmetric domino tilings of a holey Aztec square domain. In view of this we have represented in Fig.~\ref{fig:domino}
the algebraic curve of Sect.~\ref{unisec}, together with a domain made of two overlapping squares in which it is inscribed. We conjecture that there should exist a domino tiling problem on a domain whose shape approximates
the two overlapping squares (possibly with some holes of small size in its interior), and for which the arctic curve is our degree 10 algebraic curve. One could even hope for an identity between the numbers of configurations of the 20V${}_3$ model and of such a tiling problem, possibly with a half-turn rotational symmetry.

As to the numbers of configurations themselves, we showed that for even/odd size triangles: 
$Z_{2n}^{20V_3}=2^{n(n+1)/2}Z_n^{6V}$ and $Z_{2n-1}^{20V_3}=2^{n(n-1)/2}Z_n^{6V}$, where $Z_n^{6V}$
counts the number of quarter turn symmetric domino tilings of the holey Aztec square of size $n$. This mysterious relation calls for some combinatorial explanation. In particular the presence of overall powers of $2$ may point towards tilings of symmetric domains (see \cite{CIUCUsym}), unless it is simply a factor by the number of domino tilings of some Aztec diamond (equal to $2^{n(n+1)/2}$ for a diamond of size $n$).

\bibliographystyle{amsalpha} 

\bibliography{ArcticTriangle}

\providecommand{\bysame}{\leavevmode\hbox to3em{\hrulefill}\thinspace}
\providecommand{\MR}{\relax\ifhmode\unskip\space\fi MR }
\providecommand{\MRhref}[2]{%
  \href{http://www.ams.org/mathscinet-getitem?mr=#1}{#2}
}
\providecommand{\href}[2]{#2}
\begin{thebibliography}{DDFG20}

\bibitem[Agg20]{Aggar}
Amol Aggarwal, \emph{Arctic boundaries of the ice model on three-bundle
  domains}, Invent. Math. \textbf{220} (2020), no.~2, 611--671. \MR{4081139}

\bibitem[Bax89]{Baxter}
Rodney~J. Baxter, \emph{Exactly solved models in statistical mechanics},
  Academic Press, Inc. [Harcourt Brace Jovanovich, Publishers], London, 1989,
  Reprint of the 1982 original. \MR{998375}

\bibitem[BCG16]{BCG16}
Alexei Borodin, Ivan Corwin, and Vadim Gorin, \emph{Stochastic six-vertex
  model}, Duke Math. J. \textbf{165} (2016), no.~3, 563--624. \MR{3466163}

\bibitem[Ciu97]{CIUCUsym}
Mihai Ciucu, \emph{Enumeration of perfect matchings in graphs with reflective
  symmetry}, Journal of Combinatorial Theory, Series A \textbf{77} (1997),
  no.~1, 67--97.

\bibitem[CKN21]{CorKeat}
Sylvie Corteel, David Keating, and Matthew Nicoletti, \emph{Arctic curves
  phenomena for bounded lecture hall tableaux}, Comm. Math. Phys. \textbf{382}
  (2021), no.~3, 1449--1493. \MR{4232772}

\bibitem[CNP11]{CNP}
Filippo Colomo, Vanni Noferini, and Andrei~G. Pronko, \emph{Algebraic arctic
  curves in the domain-wall six-vertex model}, J. Phys. A: Math. Theor.
  \textbf{44} (2011), no.~19, 195201, arXiv:1012.2555 [math-ph].

\bibitem[CP10]{CP2010}
Filippo Colomo and Andrei~G. Pronko, \emph{The limit shape of large alternating
  sign matrices}, SIAM J. Discrete Math. \textbf{24} (2010), no.~4, 1558--1571,
  arXiv:0803.2697 [math-ph].

\bibitem[CPS19]{CPS}
Filippo Colomo, Andrei~G. Pronko, and Andrea Sportiello, \emph{Arctic curve of
  the free-fermion six-vertex model in an {L}-shaped domain}, Journal of
  Statistical Physics \textbf{174} (2019), no.~1, 1--27, arXiv:1807.07549
  [math-ph].

\bibitem[CS16]{COSPO}
Filippo Colomo and Andrea Sportiello, \emph{Arctic curves of the six-vertex
  model on generic domains: the tangent method}, J. Stat. Phys. \textbf{164}
  (2016), no.~6, 1488--1523, arXiv:1605.01388 [math-ph].

\bibitem[DDFG20]{BDFG}
Bryan Debin, Philippe Di~Francesco, and Emmanuel Guitter, \emph{Arctic curves
  of the twenty-vertex model with domain wall boundaries}, J. Stat. Phys.
  \textbf{179} (2020), no.~1, 33--89. \MR{4082583}

\bibitem[DF21a]{DF21V}
Philippe Di~Francesco, \emph{Arctic curves of the reflecting boundary six
  vertex and of the twenty vertex models}, J. Phys. A \textbf{54} (2021),
  no.~35, Paper No. 355201, 61, arXiv:2112.09798 [math-CO]. \MR{4318569}

\bibitem[DF21b]{DF20V}
\bysame, \emph{Twenty {V}ertex {M}odel and {D}omino {T}ilings of the {A}ztec
  {T}riangle}, Electron. J. Combin. \textbf{28} (2021), no.~4, Paper No.
  4.38--, arXiv:2102.02920 [math-CO]. \MR{4395233}

\bibitem[DFG18]{DFGUI}
Philippe Di~Francesco and Emmanuel Guitter, \emph{Arctic curves for paths with
  arbitrary starting points: a tangent method approach}, J. Phys. A: Math.
  Theor. \textbf{51} (2018), no.~35, 355201, arXiv:1803.11463 [math-ph].

\bibitem[DFG19a]{DFG3}
\bysame, \emph{The arctic curve for aztec rectangles with defects via the
  tangent method}, Journal of Statistical Physics \textbf{176} (2019), no.~3,
  639--678, arXiv:1902.06478 [math-ph].

\bibitem[DFG19b]{DFG2}
\bysame, \emph{A tangent method derivation of the arctic curve for q-weighted
  paths with arbitrary starting points}, Journal of Physics A: Mathematical and
  Theoretical \textbf{52} (2019), no.~11, 115205, arXiv:1810.07936 [math-ph].

\bibitem[DFG19c]{DFG20V}
\bysame, \emph{Twenty-vertex model with domain wall boundaries and domino
  tilings}, Preprint (2019), arXiv:1905.12387 [math.CO].

\bibitem[DFG20]{DG19}
\bysame, \emph{{Twenty-Vertex Model with Domain Wall Boundaries and Domino
  Tilings}}, {The Electronic Journal of Combinatorics} \textbf{27} (2020),
  no.~2, P2.13, 63 pages, 22+16 figures.

\bibitem[DFL18]{DFLAP}
Philippe Di~Francesco and Matthew~F. Lapa, \emph{Arctic curves in path models
  from the tangent method}, J. Phys. A: Math. Theor. \textbf{51} (2018),
  155202, arXiv:1711.03182 [math-ph].

\bibitem[DFSZ87]{DSZ87}
P.~Di~Francesco, H.~Saleur, and J.-B. Zuber, \emph{Relations between the
  {C}oulomb gas picture and conformal invariance of two-dimensional critical
  models}, J. Statist. Phys. \textbf{49} (1987), no.~1-2, 57--79. \MR{923852}

\bibitem[GZJ22]{GZJ}
Alexandr Garbali and Paul Zinn-Justin, \emph{Shuffle algebras, lattice paths
  and the commuting scheme}, Hypergeometry, integrability and {L}ie theory,
  Contemp. Math., vol. 780, Amer. Math. Soc., [Providence], RI, [2022]
  \copyright 2022, pp.~29--68. \MR{4476415}

\bibitem[JPS98]{JPS}
William Jockusch, James Propp, and Peter Shor, \emph{Random domino tilings and
  the arctic circle theorem}, arXiv:math/9801068 [math.CO] (1998).

\bibitem[{Kel}74]{Kel}
Stewart~B. {Kelland}, \emph{{Twenty-vertex model on a triangular lattice}},
  Australian Journal of Physics \textbf{27} (1974), 813--829.

\bibitem[KO07]{KO2}
Richard Kenyon and Andrei Okounkov, \emph{Limit shapes and the complex burgers
  equation}, Acta Math. \textbf{199} (2007), no.~2, 263--302,
  arXiv:math-ph/0507007.

\bibitem[Kor82]{DWBC}
Vladimir~E. Korepin, \emph{Calculation of norms of {B}ethe wave functions},
  Comm. Math. Phys. \textbf{86} (1982), no.~3, 391--418. \MR{677006}

\bibitem[KOS06]{KOS}
Richard Kenyon, Andrei Okounkov, and Scott Sheffield, \emph{Dimers and
  amoebae}, Ann. Math. \textbf{163} (2006), 1019--1056, arXiv:math/0311062
  [math.AG].

\bibitem[Kup96]{kuperberg1996another}
Greg Kuperberg, \emph{Another proof of the alternative-sign matrix conjecture},
  no.~3, 139--150, arXiv:math/9712207 [math.CO].

\bibitem[Lie67]{Lieb}
Elliott~H. Lieb, \emph{Exact solution of the problem of the entropy of
  two-dimensional ice}, Phys. Rev. Lett. \textbf{18} (1967), 692--694.

\bibitem[Nie84]{Nienhuis84}
Bernard Nienhuis, \emph{Critical behavior of two-dimensional spin models and
  charge asymmetry in the {C}oulomb gas}, J. Statist. Phys. \textbf{34} (1984),
  no.~5-6, 731--761. \MR{751711}

\end{thebibliography}
\end{document}